\documentclass[a4paper, UKenglish, cleveref, autoref, thm-restate]{lipics-v2021}

\hideLIPIcs{}
\usepackage{hyperref}
\usepackage{tabularx, environ}
\usepackage[linguistics]{forest}
\usepackage{verbatim}
\usepackage{amsfonts, amsmath, amscd, stmaryrd}
\usepackage{dsfont}
\usepackage{xcolor}
\usepackage{todonotes}
\usepackage{subcaption}
\usepackage{wrapfig}
\usepackage{float}
\usepackage{tikz}
\usetikzlibrary{shapes.geometric}
\usetikzlibrary{fit}
\usetikzlibrary{backgrounds}
 
\makeatletter
\newcommand{\problemtitle}[1]{\gdef\@problemtitle{#1}}
\newcommand{\probleminput}[1]{\gdef\@probleminput{#1}}
\newcommand{\problemquestion}[1]{\gdef\@problemquestion{#1}}
\NewEnviron{problemDefinition}{
  \problemtitle{}\probleminput{}\problemquestion{}
  \BODY{}
  \par\addvspace{.5\baselineskip}
  \noindent
  \begin{tabularx}{\textwidth}{|l|X|c}
    \hline
    \multicolumn{2}{|l|}{\@problemtitle} \\
    \hline
    \textbf{Input:} & \@probleminput{} \\
    \textbf{Question:} & \@problemquestion{} \\
    \hline
  \end{tabularx}
  \par\addvspace{.5\baselineskip}
}

\@namedef{T1/zi4/m/it}{<->ssub*zi4/m/n}
\@namedef{T1/zi4/b/it}{<->ssub*zi4/b/n}
\@namedef{T1/zi4/bx/it}{<->ssub*zi4/b/n}
\@namedef{T1/zi4/m/sl}{<->ssub*zi4/m/n}
\@namedef{T1/zi4/b/sl}{<->ssub*zi4/b/n}
\@namedef{T1/zi4/bx/sl}{<->ssub*zi4/b/n}
\@namedef{OT1/cmr/m/scit}{<->ssub*cmr/m/sc}
\@namedef{OT1/cmtt/m/scit}{<->ssub*cmtt/m/n}
\@namedef{OMS/cmtt/m/n}{<->ssub*cmsy/m/n}
\@namedef{U/stmry/b/n}{<->ssub*stmry/m/n}
\@namedef{T1/lmtt/m/scit}{<->ssub*lmtt/m/n}
\@namedef{T1/lmss/bx/sc}{<->ssub*lmss/bx/n}
\DeclareFontShape{T1}{lmr}{m}{scit}{<->ssub * lmr/m/scsl}{}
\makeatother

\title{Type-Checking for Pattern-Based Tree Transformations} 

\author{C. Aiswarya}{Chennai Mathmatical Institute, India \and CNRS IRL ReLaX, India }{aiswarya@cmi.ac.in}{0000-0002-4878-7581}{}

\author{Sahil Mhaskar\footnote{Corresponding author.}}{Chennai Mathematical Institute, India}{sahil@cmi.ac.in}{0000-0001-9650-290X}{}

\author{M. Praveen}{Chennai Mathmatical Institute, India \and CNRS IRL ReLaX, India}{praveenm@cmi.ac.in}{0000-0002-5734-7115}{}

\authorrunning{C. Aiswarya, S. Mhaskar, and M. Praveen}

\Copyright{C. Aiswarya, Sahil Mhaskar, M. Praveen}

\ccsdesc[500]{Theory of computation~Tree languages}
\ccsdesc[500]{Theory of computation~Transducers}

\keywords{Tree Transducers, Type-checking, Alternating Tree Automata, Pattern-matching}

\relatedversion{Full version of a preprint accepted at FSTTCS 2026}

\nolinenumbers

\newcommand{\twopartfunc}[5]{
    #1 := 
    \left\{ \begin{array}{ll}
        #2 & \mbox{ if } #3 ; \\
        #4 & \mbox{ if } #5
    \end{array} \right.
}

\newcommand{\N}{\mathbb{N}}
\newcommand{\R}{\mathcal{R}}
\newcommand{\trees}[1]{\ensuremath{\texttt{Trees}\left(#1\right)}}
\newcommand{\contexts}[2]{\ensuremath{\texttt{Contexts}_{#2}\left(#1\right)}}
\newcommand{\relations}{\texttt{Relations}}
\newcommand{\realisableRelations}[2]{\ensuremath{\texttt{IndRel}\left(#1, #2\right)}}
\newcommand{\Positions}[1]{\ensuremath{\texttt{Positions}(#1)}}

\newcommand{\src}{\mathit{src}}
\newcommand{\tgt}{\mathit{tgt}}
\newcommand{\genericMap}{\phi}

\newcommand{\bigo}{\mathcal{O}}
\newcommand{\np}{\textsc{NP}}

\newcommand{\nptc}{\textsc{NP}-complete}
\newcommand{\expt}{\textsc{EXPTime}}
\newcommand{\expth}{\textsc{EXPTime}-hard}
\newcommand{\exptc}{\textsc{EXPTime}-complete}

\newcommand{\matchingProblemFull}{\texttt{Matching Problem}}

\newcommand{\metaTypecheckingProblemFull}{\texttt{Type-Checking Problem}}
\newcommand{\intersectionNonemptinessFull}{\texttt{Intersection Nonemptiness of DBTA}}
\newcommand{\oneInThreeSatFull}{\texttt{1-in-3 SAT}}
\newcommand{\relationRealisabilityFull}{\texttt{Relation Realisability Problem}}

\newcommand{\metaTypecheckingProblemReformulationFull}{\texttt{Transformation Reachability Problem}}
\newcommand{\atomicTransformReachabilityFull}{\texttt{Atomic Transformation Reachability Problem}}

\newcommand{\blue}[1]{\textcolor{blue}{#1}}

\newcommand{\grey}[1]{\textcolor{gray}{#1}}
\newcommand{\tuple}[1]{\ensuremath{\overline{\textbf{#1}}}}

\newcommand{\regularOver}[1]{\ensuremath{\texttt{Regular}(#1)}}
\newcommand{\languageOf}[1]{\ensuremath{L(#1)}}

\newcommand{\seedLanguage}{S}
\newcommand{\genericLanguage}{L}
\newcommand{\sourceLanguage}{\ensuremath{\genericLanguage_{\src}}}
\newcommand{\targetLanguage}{\ensuremath{\genericLanguage_{\tgt}}}

\newcommand{\exampleSeedLanguageLoop}{\mathsf{loop}}
\newcommand{\exampleSeedLanguageEndLoop}{\mathsf{endLoop}}
\newcommand{\exampleSeedLanguageNull}{\mathsf{null}}

\newcommand{\examplePatternLanguageFor}{\mathsf{for}}
\newcommand{\examplePatternLanguageWhile}{\mathsf{while}}
\newcommand{\examplePatternLanguageInit}{\mathsf{init}}
\newcommand{\examplePatternLanguageCond}{\mathsf{cond}}
\newcommand{\examplePatternLanguageIncr}{\mathsf{incr}}
\newcommand{\examplePatternLanguageNull}{\mathsf{null}}

\newcommand{\transitionSystem}{\ensuremath{\mathcal{T}}}
\newcommand{\automaton}{\ensuremath{\mathcal{A}}}
\newcommand{\automatonB}{\ensuremath{\mathcal{B}}}
\newcommand{\uniqueLetterAutomaton}[1]{\ensuremath{\mathcal{U}_{#1}}}
\newcommand{\sourceAutomatonB}{\ensuremath{\automatonB_{\src}}}
\newcommand{\targetAutomatonB}{\ensuremath{\automatonB_{\tgt}}}
\newcommand{\extendedTransitionRelation}{ETR}
\newcommand{\run}{\rho}

\newcommand{\automatonStateSet}{Q}
\newcommand{\automatonTransitionsSet}{\Delta}
\newcommand{\automatonInitialStatesSet}{I}
\newcommand{\automatonFinalStatesSet}{F}
\newcommand{\stateSize}{\mathsf{stateSize}}

\newcommand{\alphabet}{\ensuremath{\Sigma}}
\newcommand{\coreAlphabet}{\ensuremath{\alphabet_\text{core}}}
\newcommand{\seedAlphabet}{\ensuremath{\Gamma}}
\newcommand{\variables}{\texttt{Var}}

\newcommand{\sourceMap}{\ensuremath{\genericMap_{\src}}}
\newcommand{\targetMap}{\ensuremath{\genericMap_{\tgt}}}
\newcommand{\substitution}{\ensuremath{\sigma}}

\newcommand{\guardFunction}{\ensuremath{\mathcal{G}}}
\newcommand{\guardOf}[1]{\ensuremath{\guardFunction(#1)}}

\newcommand{\atomicTransform}{\ensuremath{T}}
\newcommand{\fullTransform}{\ensuremath{\mathbf{T}}}

\newcommand{\pattern}{\mathsf{Pat}}
\newcommand{\sourcePattern}{\pattern_\src}
\newcommand{\targetPattern}{\pattern_\tgt}

\newcommand{\destinationFunction}[2]{\ensuremath{D_{#1}(#2)}}

\newcommand{\sem}[1]{\ensuremath{\llbracket #1 \rrbracket}}
\newcommand{\pt}[1]{\ensuremath{#1}-Pebble Tree Transducer}
\newcommand{\set}[1]{\ensuremath{\left\{ #1 \right\}}}
\newcommand{\abs}[1]{\ensuremath{\left| #1 \right|}}
\newcommand{\rev}{\texttt{Reverse}}
\DeclareMathOperator{\rank}{rank}  
\newcommand{\inv}{\ensuremath{^{-1}}}
\newcommand{\dom}{\textsf{Domain}}
\newcommand{\arity}{\texttt{arity}}
\newcommand{\operator}{\texttt{op}}
\newcommand{\result}{\texttt{result}}
\newcommand{\Results}{\texttt{Results}}
\newcommand{\shape}{\texttt{Shape}}
\newcommand{\relT}{\mathit{Rel}}
\newcommand{\start}{\mathsf{start}}
\newcommand{\e}{\varepsilon}
\newcommand{\tcp}{type-checking problem}
\newcommand{\fsa}{far-sighted automaton}

\newcommand{\B}{\mathcal{B}}
\newcommand{\Z}{\mathcal{Z}}
\newcommand{\D}{\mathcal{D}}

\newcommand{\Transitions}{\ensuremath{\Delta}}

\begin{document}

\maketitle


\begin{abstract}
    We introduce and study pattern-based tree transformations. As an illustrating example, consider a source pattern $(x \cdot y) + (x \cdot z)$ and a target pattern $x \cdot (y + z)$ as a pair. This source pattern matches any expression $e$ of the form $(e_1 \cdot e_2) + (e_1 \cdot e_3)$ (by substituting $x$ with $e_1$, $y$ with $e_2$, and $z$ with $e_3$) and the pair transforms it into the expression $e_1 \cdot (e_2 + e_3)$ as dictated by the target pattern. Note that in this example, the set of expressions that match the source pattern is not a regular tree language.

    We propose a model of tree transformations given by a finite representation of a (possibly infinite) set of such (source pattern, target pattern) pairs. The expressive power of this model comes at the cost of undecidability of checking equivalence. Nevertheless, we show that the type-checking problem is decidable for our model of pattern-based tree transformations. The type-checking problem asks whether applying a given transformation to trees having a given regular property (type) preserves the property. Our decision procedure is by a reduction to the emptiness problem of alternating tree automata.
\end{abstract}


\section{Introduction}\label{sec:introduction}

Tree transformations in various forms of machines and logics have been studied for applications such as XML processing, programming languages, and HTML sanitisation~\cite{DVLM2015,Hosoya_2010,AK1998,MSV2000}. Classical formalisms for specifying tree transductions such as finite-state top-down and bottom-up tree transducers, macro tree transducers, attribute grammars and MSO (Monadic Second-Order Logic)-definable graph transductions typically aim for robustness with respect to operations involving regular tree languages and decidability of the equivalence problem.

We introduce a model for tree transductions targeting scenarios where the domain of the transduction is not necessarily regular, and undecidability of equivalence does not render the model useless. Expanding on the example from the abstract, consider expressions of the form $t \cdot e_1 + t \cdot e_2 + \cdots + t \cdot e_n$, where $n$ is any number. An equivalent expression is $t \cdot (e_1 + e_2 + \cdots + e_n)$. Transformations like this are routinely used in Horner's method, an efficient way to evaluate polynomials. We consider the corresponding expression trees and model this transformation by matching the source tree with a source pattern and the target tree with a target pattern. The patterns are themselves trees, which can have placeholder variables to represent subtrees, like $t$ in the above example, which must repeat in multiple positions. The number $n$ can be arbitrary, so we need infinitely many pairs of patterns. For the example above, for each $n > 0$ we need $(x \cdot e_1 + x \cdot e_2 + \cdots + x \cdot e_n\,, \quad x \cdot (e_1 + e_2 + \cdots + e_n)\,)$ as a (source pattern, target pattern) pair, where $x$ is a placeholder variable intended to match any expression $t$. We represent such infinite sets (of pairs of patterns) using (standard) tree transductions. We allow placeholder variables in patterns to be substituted with any tree, as long as it satisfies a guard condition associated with the variable. The guard condition is given by a regular tree language called the \emph{guard language}. It allows, for example, to restrict the expression $t$ above to be a valid expression. In some situations we may want to restrict it further, say, to not contain any products inside. In this example, the set of source trees to which the transformation is applicable needs to have the same subtree $t$ repeat in multiple positions. Thus the set of source trees on which transformations are defined need not be a regular tree language in general, which makes our model different from classical tree transductions.

The backbone of our model is a finite representation of the infinite set of (source pattern, target pattern) pairs. Such a finite representation can be any formalism for tree transductions; in this paper we have chosen to use bimorphisms for technical convenience. Bimorphisms are given by a regular seed language $\seedLanguage$ and two morphisms $\sourceMap, \targetMap$. The transduction is defined as the relation $\{(\sourceMap(t), \targetMap(t)) \mid t \in \seedLanguage\}$. Over words, bimorphisms are equivalent to rational transductions~\cite{Nivat68}. Over trees, bimorphisms capture subclasses of tree transducers (cf.~\cite[Section 6.5]{tata} and~\cite{Engelfriet75}).

For example, consider the seed language given by $a b 1{(0+1)}^\ast {(\oplus b 1{(0+1)}^\ast)}^\ast c$. This is a regular word language over the alphabet $\set{a, b, c, \oplus, 0, 1}$. We define the \emph{source} and \emph{target} morphisms $\sourceMap, \targetMap$: $\set{a, b, c, \oplus, 0, 1} \to \set{0,1, +, \cdot, x, (, )}^\ast$  as given by the following: $\sourceMap := \set{a \mapsto \e, b \mapsto x \cdot {}, c \mapsto \e, \oplus \mapsto +, 0 \mapsto 0, 1 \mapsto 1}$ and similarly we define: $\targetMap := \set{a \mapsto x \cdot (, b \mapsto \e, c \mapsto ), \oplus \mapsto +, 0 \mapsto 0, 1 \mapsto 1}$. We note that this bimorphism generates pairs of the form $(x \cdot e_1 + x \cdot e_2 + \cdots + x \cdot e_n\,, \quad x \cdot (e_1 + e_2 + \cdots + e_n))$ where $e_i$ is a nonzero binary number, from the seed word $a b e_1 \oplus b e_2 \oplus \dots \oplus b e_n c$. We may add a guard to $x$ saying that $x$ can take values only from $1{(0 + 1)}^\ast$, capturing completely the distributivity of multiplication over addition for binary numbers. Note that this only requires bimorphisms over words and defines transduction over words. If instead we allow the guard language of $x$ to be any syntactically valid expression tree, this now defines transformations from trees to trees, though given by a bimorphism over words. In fact, if we wish $e_1, e_2 \dots e_n$ to be also expressions, we would require a bimorphism over trees. This case is detailed as an example in Section~\ref{sec:illustratingExample}.

Notice that bimorphisms over words which do not produce placeholder variables anywhere already capture rational relations over words. Since equivalence-checking is undecidable for rational word relations~\cite{FR1968,Griffiths1968}, it is undecidable for our model as well. However we show that the \tcp\ is decidable, even for  transductions defined by bimorphisms over trees and potentially producing variables in the output of both morphisms. We first motivate the need for type-checking in the next paragraph, and give insights to our algorithm in the paragraphs following that. 

Continuing with our example from above, we may want to check that applying the transformation does not increase the alternation between the operators $\cdot$ and $+$. This can be checked by verifying that if the transformation is applied to any tree from a language $L$ consisting of trees with alternation depth bounded by a constant $c$, the resulting tree is in $L$ as well. On the other hand, for reasons of numerical stability, we may be interested in ensuring that size of subtrees made up of only the $+$ operator is bounded. It can be checked that this regular property is not preserved by the transformation. Instead of arithmetic expressions, if we consider Boolean expressions made up of $\land$ and $\lor$ operators, we can reason about expressions being in normal forms like conjunctive normal form (CNF), disjunctive normal form (DNF) etc. We consider the \tcp\ in general where the source trees have a property represented by a regular language $\sourceLanguage$ and output trees are required to have a property represented by another (possibly different) regular language $\targetLanguage$.

The main idea of our decision procedure for the \tcp\ is the following observation: if the given instance of the \tcp\ is a negative instance, a counterexample is a single tree in the seed language $\seedLanguage$, satisfying multiple conditions under multiple transformations. On the source side, applying the morphism $\sourceMap$ to the seed tree and then applying a substitution for the placeholder variables should result in a tree in $\sourceLanguage$. On the output side, applying the morphism $\targetMap$ to the same seed tree and then applying the same substitution for the placeholder variables should result in a tree not in $\targetLanguage$. These conditions can be checked by an alternating tree automaton running on trees in the seed language, thus reducing the \tcp\ to the emptiness-checking of alternating tree automata.

Along the way, we consider some related/simpler problems. The simple question of whether a given tree matches a given pattern is \nptc, a result known already for words~\cite{tata}. The special case of the \tcp\ where the transformation is given by a single pair of patterns is \exptc. This generalises the work of pattern-based word transformations~\cite{AMP2022} to trees.

\paragraph*{Related work}
Pattern-based word transformations were considered  in~\cite{AMP2022}, but these consisted only of a single pair of patterns. We extend this in two ways --- by having  transformations of trees, and by having a formalism that consists of potentially infinitely many pairs of patterns. 

There are many transducer models for trees. Our transformations subsume classical top-down and bottom-up tree transducers, even without any variables, thanks to the bimorphism feature. In fact it subsumes any relation on trees definable by bimorphisms~\cite[Chapter 6]{tata}. However it is incomparable to streaming tree transducers~\cite{StreamingTreeTransd} or macro tree transducers~\cite{MacroTreeTransd}. The domains of transformations defined by the latter are regular tree languages, which need not be the case in our model due to the presence of variables. On the other hand our model cannot recognize transformations which reverse a word (a tree with unary alphabet and an end marker as a leaf), while macro tree transducers and streaming tree transducers can. For exactly the same reasons, our model is incomparable with \texorpdfstring{$k$}{k}-Pebble Transducers~\cite{MSV2000}. See Appendix~\ref{sec:kPebbleTransducerComparison} for details.

A model of transition systems was studied in~\cite{GMSZ2008} where states are unranked, unordered forests and edges are given by tree rewriting using a pattern-based query language. These patterns are weaker than our patterns since a variable can represent only a single node in the tree and not a subtree like in our patterns. Reachability in this transition system is undecidable, but if the depth of every forest in the transition system is bounded it becomes decidable though nonelementary. Our type-checking problem is not addressing reachability in a transition system, but only a one-step rewriting.   While the decidability result in~\cite{GMSZ2008} is limited to bounded depth trees, our model can handle unrestricted trees. Further, all our algorithms are of elementary complexity. 

\paragraph*{Paper organisation}
In Section~\ref{sec:illustratingExample}, we illustrate the correspondence between the artefacts of our model and the parts of trees and transformations that the model represents. We recall some preliminaries about trees, contexts, nondeterministic tree automata and regular tree languages in Section~\ref{subsec:preliminaries} and formally define the model in Section~\ref{subsec:model}. In Section~\ref{sec:motiveEx}, we give some examples of possible applications, which motivate our model as well as the \tcp\ that we consider, which is formally defined in Section~\ref{sec:problemsAndResults}. We give some intuition underlying the decision procedure in Section~\ref{sec:proofOutline}, motivating the concept of alternating tree automata and relational lifts that are formally defined in Section~\ref{sec:relationalLifts}. The proof of decidability is put together in Section~\ref{sec:proofDetails}. Some of the detailed proofs are moved to the appendix due to space constraints.


\section{Illustrating Example}\label{sec:illustratingExample}

As mentioned in the introduction, our objective of this section is to illustrate our model through an example. We want to capture transformations of the form depicted in Figure~\ref{fig:trans-ex}. We will define a suitable seed language and give bimorphisms that attain this. 

\begin{figure}[!ht]
    \scalebox{0.7}{\begin{tikzpicture}[node distance=5mm]
        \node[] (p1) {+};
        \node[below right=2mm and 20mm of p1] (p2) {+};
        \node[below left=2mm and 15mm of p1] (p3) {+};
        \node[below right= of p2] (p4) {+};
        \node[below left= of p4] (p5) {+};
        
        \node[below left= of p3] (m1) {$\times$};
        \node[below right= of p3] (m2) {$\times$};
        \node[below left=3.5mm and 12mm of p2] (m3) {$\times$};
        \node[below left= of p5] (m4) {$\times$};
        \node[below right= of p5] (m5) {$\times$};
        \node[below right=3.5mm and 10mm of p4] (m6) {$\times$};
        
        \node[below left=1mm and 1mm of m1] (t1) {$f$};
        \node[below left=1mm and 1mm of m2] (t2) {$f$};
        \node[below left=1mm and 1mm of m3] (t3) {$f$};
        \node[below left=1mm and 1mm of m4] (t4) {$f$};
        \node[below left=1mm and 1mm of m5] (t5) {$f$};
        \node[below left=1mm and 1mm of m6] (t6) {$f$};
        \node[below right=1mm and 1mm of m1] (e1) {$e_1$};
        \node[below right=1mm and 1mm of m2] (e2) {$e_2$};
        \node[below right=1mm and 1mm of m3] (e3) {$e_3$};
        \node[below right=1mm and 1mm of m4] (e4) {$e_4$};
        \node[below right=1mm and 1mm of m5] (e5) {$e_5$};
        \node[below right=1mm and 1mm of m6] (e6) {$e_6$};

        \draw[-]
        (p1) edge (p2)
        (p1) edge (p3)
        (p2) edge (p4)
        (p4) edge (p5);
    
        \draw[-]
        (p3) edge (m1)
        (p3) edge (m2)
        (p2) edge (m3)
        (p5) edge (m4)
        (p5) edge (m5)
        (p4) edge (m6);
    
        \draw[-]
        (m1) edge (t1)
        (m1) edge (e1)
        (m2) edge (t2)
        (m2) edge (e2)
        (m3) edge (t3)
        (m3) edge (e3)
        (m4) edge (t4)
        (m4) edge (e4)
        (m5) edge (t5)
        (m5) edge (e5)
        (m6) edge (t6)
        (m6) edge (e6);
    
        \node[above= of p1] (text){{\large Source expression}};	
        
        \begin{scope}[on background layer={}]
            \fill[rounded corners=3mm, color=blue!20, distance=5mm] ($(p1)+(0,+3mm)$)--($(p3)+(-3mm,+3mm)$)--($(p3)+(-3mm,-3mm)$)--($(p1)+(0,-3mm)$)--($(p2) +(-3mm, -3mm)$)--($(p4)+(-3mm,0)$)--($(p5) +(-3mm, 0mm)$)--($(p5) + (-3mm, -3mm)$)--($(p5) + (3mm, -3mm)$)--($(p4)+(3mm,0)$)--($(p2)+(3mm,3mm)$)--($(p1)+(0,+3mm)$);
        \end{scope}
        \begin{scope}[on background layer={}]
            \fill[rounded corners=3mm, color=blue!20, distance=5mm] ($(e1)+(0,+3mm)$)--($(e1)+(-7mm,-13mm)$) --($(e1)+(5mm,-13mm)$)--($(e1)+(0,+3mm)$) ;
        \end{scope}
        \begin{scope}[on background layer={}]
            \fill[rounded corners=3mm, color=blue!20, distance=5mm] ($(e2)+(0,+3mm)$)--($(e2)+(-5mm,-13mm)$) --($(e2)+(10mm,-13mm)$)--($(e2)+(0,+3mm)$) ;
        \end{scope}
        \begin{scope}[on background layer={}]
            \fill[rounded corners=3mm, color=blue!20, distance=5mm] ($(e3)+(0,+3mm)$)--($(e3)+(-7mm,-5mm)$) --($(e3)+(5mm,-5mm)$)--($(e3)+(0,+3mm)$) ;
        \end{scope}
        \begin{scope}[on background layer={}]
            \fill[rounded corners=3mm, color=blue!20, distance=5mm] ($(e4)+(0,+3mm)$)--($(e4)+(-10mm,-5mm)$) --($(e4)+(5mm,-5mm)$)--($(e4)+(0,+3mm)$) ;
        \end{scope}
        \begin{scope}[on background layer={}]
            \fill[rounded corners=3mm, color=blue!20, distance=5mm] ($(e5)+(0,+3mm)$)--($(e5)+(-5mm,-5mm)$) --($(e5)+(10mm,-5mm)$)--($(e5)+(0,+3mm)$) ;
        \end{scope}
        \begin{scope}[on background layer={}]
            \fill[rounded corners=3mm, color=blue!20, distance=5mm] ($(e6)+(0,+3mm)$)--($(e6)+(-6mm,-7mm)$) --($(e6)+(5mm,-7mm)$)--($(e6)+(0,+3mm)$) ;
        \end{scope}
    \end{tikzpicture}}
    \raisebox{2cm}{ $\Longrightarrow$}
    \scalebox{0.7}{\begin{tikzpicture}[node distance=5mm]
        \node[] (p1) {+};
        \node[below right=2mm and 10mm of p1] (p2) {+};
        \node[below left=2mm and 10mm of p1] (p3) {+};
        \node[below right= of p2] (p4) {+};
        \node[below left= of p4] (p5) {+};
        \node[above left= of p1] (m) {$\times$};
        \node[below left=1mm and 1mm of m] (t) {$f$};
        \node[below left= 1mm and 1mm of p3] (e1) {$e_1$};
        \node[below right=1mm and 1mm of p3] (e2) {$e_2$};
        \node[below left=1mm and 1mm of p2] (e3) {$e_3$};
        \node[below left=1mm and 1mm of p5] (e4) {$e_4$};
        \node[below right=1mm and 1mm of p5] (e5) {$e_5$};
        \node[below right=1mm and 1mm of p4] (e6) {$e_6$};

        \draw[-]
        (p1) edge (p2)
        (p1) edge (p3)
        (p2) edge (p4)
        (p4) edge (p5);
        
        \draw[-]
        (p3) edge (e1)
        (p3) edge (e2)
        (p2) edge (e3)
        (p5) edge (e4)
        (p5) edge (e5)
        (p4) edge (e6);
        
        \draw[-]
        (m) edge (t)
        (m) edge (p1);

        \node[above= of m] (text){{\large Target expression}};	
                                    
        \begin{scope}[on background layer={}]
            \fill[rounded corners=3mm, color=blue!20, distance=5mm] ($(p1)+(0,+3mm)$)--($(p3)+(-3mm,+3mm)$) --($(p3)+(-3mm,-3mm)$)-- ($(p1)+(0,-3mm)$)--($(p2) +(-3mm, -3mm)$) --($(p4)+(-3mm,0)$) -- ($(p5) +(-3mm, 0mm)$) -- ($(p5) +(-3mm, -3mm)$) --($(p5) +(3mm, -3mm)$) --($(p4)+(3mm,0)$) --($(p2)+(3mm,3mm)$)--($(p1)+(0,+3mm)$);
        \end{scope}
        \begin{scope}[on background layer={}]
            \fill[rounded corners = 3mm, color=blue!20, distance=5mm] ($(e1)+(0,+3mm)$)--($(e1)+(-7mm,-13mm)$) --($(e1)+(5mm,-13mm)$)--($(e1)+(0,+3mm)$) ;
        \end{scope}
        \begin{scope}[on background layer={}]
            \fill[rounded corners = 3mm, color=blue!20, distance=5mm] ($(e2)+(0,+3mm)$)--($(e2)+(-5mm,-13mm)$) --($(e2)+(10mm,-13mm)$)--($(e2)+(0,+3mm)$) ;
        \end{scope}
        \begin{scope}[on background layer={}]
            \fill[rounded corners = 3mm, color=blue!20, distance=5mm] ($(e3)+(0,+3mm)$)--($(e3)+(-7mm,-5mm)$) --($(e3)+(5mm,-5mm)$)--($(e3)+(0,+3mm)$) ;
        \end{scope}
        \begin{scope}[on background layer={}]
            \fill[rounded corners = 3mm, color=blue!20, distance=5mm] ($(e4)+(0,+3mm)$)--($(e4)+(-10mm,-5mm)$) --($(e4)+(5mm,-5mm)$)--($(e4)+(0,+3mm)$) ;
        \end{scope}
        \begin{scope}[on background layer={}]
            \fill[rounded corners = 3mm, color=blue!20, distance=5mm] ($(e5)+(0,+3mm)$)--($(e5)+(-5mm,-5mm)$) --($(e5)+(10mm,-5mm)$)--($(e5)+(0,+3mm)$) ;
        \end{scope}
        \begin{scope}[on background layer={}]
            \fill[rounded corners = 3mm, color=blue!20, distance=5mm] ($(e6)+(0,+3mm)$)--($(e6)+(-6mm,-7mm)$) --($(e6)+(5mm,-7mm)$)--($(e6)+(0,+3mm)$) ;
        \end{scope}
    \end{tikzpicture}}
    \caption{The transformation of an expression following distributivity. Here $f$, $e_1$, \dots $e_6$ are arbitrary expressions. If $f$ is a variable instead of an expression, we get a (source pattern, target pattern) pair that captures this transformation.}\label{fig:trans-ex}
\end{figure} 

\newcommand{\insertcontext}{\scalebox{.8}{
    \begin{tikzpicture}[node distance=1mm, inner sep=0.5mm]
    \node[] (p1) {};
    \node[above left= of p1] (m) {$\times$};
    \node[below left= of m] (t) {$y$};
    
    \draw[-]
    (m) edge (t)
    (m) edge (p1);		
\end{tikzpicture}}}

Consider the transformation depicted in Figure~\ref{fig:trans-ex}. The blue shaded parts in the source expression and the target expression remain unchanged. Expressions are formed by the alphabet $\set{+, \times, 0, 1}$. The patterns would require a placeholder symbol $y$ in addition to the above alphabet. In order to generate the (source pattern, target pattern) pair capturing this transformation, we require a seed tree which carries this blue shaded part intact. We expect the morphisms $\sourceMap$ and $\targetMap$ to be identity on this part. However, $\insertcontext$ gets inserted above $e_i$ in the source pattern, and at the root in the target pattern. In order to handle this we add new unary symbols $a$ and $b$ in the seed tree.

The seed tree and the morphisms $\sourceMap, \targetMap$ for this transformation are depicted in Figure~\ref{fig:seed-tree}.

\begin{figure}[!ht]
    \begin{tikzpicture}[node distance=5mm]
        \node[] (p1) {+};
        \node[below right=2mm and 10mm of p1] (p2) {+};
        \node[below left=2mm and 10mm of p1] (p3) {+};
        \node[below right= of p2] (p4) {+};
        \node[below left= of p4] (p5) {+};
        \node[above=2mm of p1] (m) {$a$};
        \node[below left=1mm and 1mm of p3] (b1) {$b$};
        \node[below right=1mm and 1mm of p3] (b2) {$b$};
        \node[below left=1mm and 5mm of p2] (b3) {$b$};
        \node[below left=1mm and 1mm of p5] (b4) {$b$};
        \node[below right=1mm and 1mm of p5] (b5) {$b$};
        \node[below right=1mm and 1mm of p4] (b6) {$b$};
        \node[below=2mm of b1] (e1) {$e_1$};
        \node[below=2mm of b2] (e2) {$e_2$};
        \node[below=2mm of b3] (e3) {$e_3$};
        \node[below=2mm of b4] (e4) {$e_4$};
        \node[below=2mm of b5] (e5) {$e_5$};
        \node[below=2mm of b6] (e6) {$e_6$};

        \draw[-]
        (p1) edge (p2)
        (p1) edge (p3)
        (p2) edge (p4)
        (p4) edge (p5);
        
        \draw[-]
        (p3) edge (b1)
        (p3) edge (b2)
        (p2) edge (b3)
        (p5) edge (b4)
        (p5) edge (b5)
        (p4) edge (b6);
        
        \draw[-]
        (b1) edge (e1)
        (b2) edge (e2)
        (b3) edge (e3)
        (b4) edge (e4)
        (b5) edge (e5)
        (b6) edge (e6);
        
        \draw[-]
        (m) edge (p1);
        
        \node[left= of m] (text){{Seed tree}};
            
        \begin{scope}[on background layer={}]
            \fill[rounded corners=3mm, color=blue!20, distance=5mm] ($(p1)+(0,+3mm)$)--($(p3)+(-3mm,+3mm)$)--($(p3)+(-3mm,-3mm)$)--($(p1)+(0,-3mm)$)--($(p2) +(-3mm, -3mm)$) --($(p4)+(-3mm,0)$) -- ($(p5) +(-3mm, 0mm)$) -- ($(p5) +(-3mm, -3mm)$) --($(p5) +(3mm, -3mm)$) --($(p4)+(3mm,0)$) --($(p2)+(3mm,3mm)$)--($(p1)+(0,+3mm)$);
        \end{scope}
        \begin{scope}[on background layer={}]
            \fill[rounded corners=3mm, color=blue!20, distance=5mm] ($(e1)+(0,+3mm)$)--($(e1)+(-7mm,-13mm)$) --($(e1)+(5mm,-13mm)$)--($(e1)+(0,+3mm)$);
        \end{scope}
        \begin{scope}[on background layer={}]
            \fill[rounded corners=3mm, color=blue!20, distance=5mm] ($(e2)+(0,+3mm)$)--($(e2)+(-5mm,-13mm)$) --($(e2)+(10mm,-13mm)$)--($(e2)+(0,+3mm)$);
        \end{scope}
        \begin{scope}[on background layer={}]
            \fill[rounded corners=3mm, color=blue!20, distance=5mm] ($(e3)+(0,+3mm)$)--($(e3)+(-7mm,-5mm)$) --($(e3)+(5mm,-5mm)$)--($(e3)+(0,+3mm)$);
        \end{scope}
        \begin{scope}[on background layer={}]
            \fill[rounded corners=3mm, color=blue!20, distance=5mm] ($(e4)+(0,+3mm)$)--($(e4)+(-10mm,-5mm)$) --($(e4)+(5mm,-5mm)$)--($(e4)+(0,+3mm)$);
        \end{scope}
        \begin{scope}[on background layer={}]
            \fill[rounded corners=3mm, color=blue!20, distance=5mm] ($(e5)+(0,+3mm)$)--($(e5)+(-5mm,-5mm)$) --($(e5)+(10mm,-5mm)$)--($(e5)+(0,+3mm)$);
        \end{scope}
        \begin{scope}[on background layer={}]
            \fill[rounded corners=3mm, color=blue!20, distance=5mm] ($(e6)+(0,+3mm)$)--($(e6)+(-6mm,-7mm)$) --($(e6)+(5mm,-7mm)$)--($(e6)+(0,+3mm)$);
        \end{scope}
    \end{tikzpicture}
    \hfil
    \raisebox{2.3cm}{\framebox[0.45\textwidth]{\begin{tabularx}{0.37\textwidth}{c c c c c}
        \multicolumn{5}{c}{Morphisms}\\\\
        $\e$ & $\xleftarrow{\sourceMap}$ & $a$ & $\xrightarrow{\targetMap} $ & $\insertcontext$\\
        $\insertcontext$ & $\xleftarrow{\sourceMap}$ & $b$ & $\xrightarrow{\targetMap} $ & $\e$\\
        $+$ & $\xleftarrow{\sourceMap}$ & $+$ & $\xrightarrow{\targetMap} $ & $+$\\
        $\times$ & $\xleftarrow{\sourceMap}$ & $\times$ & $\xrightarrow{\targetMap} $ & $\times$\\
        $0$ & $\xleftarrow{\sourceMap}$ & $0$ & $\xrightarrow{\targetMap} $ & $0$\\
        $1$ & $\xleftarrow{\sourceMap}$ & $1$ & $\xrightarrow{\targetMap} $ & $1$\\								
    \end{tabularx}}}
    \caption{The seed tree and the morphisms generating the (source pattern, target pattern) pair capturing the transformation depicted in Figure~\ref{fig:trans-ex}.}\label{fig:seed-tree}
\end{figure}

Note that, we require one such seed tree for each possible choice of $e_i$ and the shape of the ``+-kernel'', thus giving rise to an infinite seed language and hence an infinite set of (source pattern, target pattern) pairs capturing the transformation. Thankfully, the set of seed trees is tree-regular (need to check that the root is an $a$, every path has exactly one $b$, and the segment between $a$ and $b$ has only $+$s).

We now give an example of \tcp\ on this transformation. We define the product-depth of an expression as follows. The product-depth (pd) of a branch is the number of occurrences of $\times$ in that branch. The product-depth of an expression is the maximum of the product-depths of its branches. The classes of $k$-pd-bounded expressions are those with product-depth at most $k$. We  may be interested in checking whether $k$-pd-boundedness is preserved by the above transformation. This is an instance of \tcp, since $k$-pd-bounded expressions form a regular set of trees.

Further examples, in particular the ones where we need multiple variables, are presented after formally defining our model, which is done next.


\section{Formal Model}\label{sec:formalModel}
\subsection{Preliminaries}\label{subsec:preliminaries}
\paragraph*{Sets and Relations}\label{par:setAndRelations}

Let $\N := \set{1, 2, 3, \dots}$ be the set of strictly positive integers and let $\N_0 := \N \cup \set{0}$. For $n \in \N$, we denote the set $\set{1, \ldots, n}$ by $[n]$. As a convention, we let $[0] := \emptyset$. For a set $X$ and $n \in \N$, $X^n$ denotes the usual Cartesian power of $X$, consisting of all $n$-tuples of elements in $X$. As a convention, we let $X^0 := \{\natural\}$, a special singleton set.

For $n \in \N_0$, an $n$-ary relation over $X$ is a subset $R \subseteq X^n$. We denote the set of all $n$-ary relations on $X$ by $\relations(X, n)$. Finally, we denote $\bigcup_{k=0}^\infty \relations(X, k)$ by $\relations(X)$.

We consider automata over finite, ranked trees, following definitions from~\cite{tata}.

\begin{definition}[Alphabet]\label{def:alphabet}
    A finite set $\alphabet := \set{a_1/k_1, \dots, a_n/k_n}$ is called a \emph{finite ranked alphabet} or \emph{alphabet}, where $a_i$ are called the \emph{letters}, and $k_i =: \arity(a_i)$ are nonnegative integers called their corresponding \emph{arities} for all $i \in [n]$. We define pairwise disjoint sets $\alphabet_k := \set{a \in \alphabet \mid \arity(a) = k} \subseteq \alphabet$ for all $k \in \N_0$. We require that $\alphabet_0 \neq \emptyset$. We also define the arity of $\alphabet := \max_i k_i$.
\end{definition}

\begin{definition}[Term, Tree]\label{def:tree}
    A \emph{term} or a \emph{tree} over a ranked alphabet $\alphabet$ is a partial function $t : \N^* \rightharpoonup \alphabet$ with the domain $\Positions{t}$ satisfying:
    \begin{itemize}
        \item $\Positions{t}$ is nonempty and prefix-closed.
        \item $\forall u \in \Positions{t}, t(u) \in \alphabet_k, k \in \N_0 \implies \set{j \mid uj \in \Positions{t}} = [k]$.
    \end{itemize}
    We say that a tree $t$ is \emph{finite} if $\Positions{t}$ is finite.
\end{definition}
In this paper, we will only deal with finite trees. We denote the set of all finite trees over $\alphabet$ by $\trees{\alphabet}$. A \emph{tree language} is any subset of $\trees{\alphabet}$. Regular tree languages are those recognised by tree automata as defined below.

\begin{definition}[Nondeterministic Bottom-up Tree \{Transition System, Automaton\}]\label{def:nbtts}
    Let $\alphabet$ be a finite alphabet. A \emph{Nondeterministic Bottom-up Tree Transition System} or \emph{NBTTS} over $\alphabet$ is defined as $\transitionSystem := (\automatonStateSet, \alphabet, \automatonTransitionsSet)$, where $\automatonStateSet$ is a finite set of \emph{states}, and $\automatonTransitionsSet : \alphabet \to \relations(\automatonStateSet)$ is a function such that $\Transitions(a/k) \in \relations(\automatonStateSet, k+1)$, for all $a/k \in \alphabet$. A \emph{Nondeterministic Bottom-up Tree Automaton} or an \emph{NBTA} is a tuple $\automaton := (\automatonStateSet, \alphabet, \automatonTransitionsSet, \automatonFinalStatesSet)$, where $(\automatonStateSet, \alphabet, \automatonTransitionsSet)$ is a NBTTS over $\alphabet$ and $\automatonFinalStatesSet \subseteq \automatonStateSet$ is a set of \emph{final} states. If for each $a/k \in \alphabet$ we have that $\automatonTransitionsSet(a/k)$ is such that for all $q_1, \dots, q_k \in \automatonStateSet$, there exists a unique $q \in \automatonStateSet$ such that $(q_1, \dots, q_k, q) \in \automatonTransitionsSet(a)$, then the corresponding NBTTS is called a \emph{Deterministic Bottom-up Tree Transition System} or \emph{DBTTS}, and the NBTA is called a \emph{Deterministic Bottom-up Tree Automaton} or \emph{DBTA}.
\end{definition}

\begin{definition}[\{Run, Result, Acceptance\} for NBTA]\label{def:acceptanceByNbta}
    Given an NBTTS $\transitionSystem := (\automatonStateSet, \alphabet, \automatonTransitionsSet)$ and a tree $t \in \trees{\alphabet}$, a \emph{run} of $\transitionSystem$ on $t$ is defined as a tree $\run : \Positions{t} \to \automatonStateSet$ which is `compatible' with $\automatonTransitionsSet$, i.e., for every position $u \in \Positions{t}$, if $t(u) = a/k, \run(u) = q$ and $\run(ui) := q_i$ for all $i \in [k]$, then, $(q_1, \dots, q_k, q) \in \automatonTransitionsSet(a)$. The \emph{result} of a run $\run$ is $\result(\run) := \run(\e)$. Given an NBTA $\automaton := (\automatonStateSet, \alphabet, \automatonTransitionsSet, \automatonFinalStatesSet)$, and a tree $t \in \trees{\alphabet}$, we say that $t$ is \emph{accepted} by $\automaton$ if and only if there exists a run $\run$ of $(\automatonStateSet, \alphabet, \automatonTransitionsSet)$ on $t$ such that $\result(\run) \in \automatonFinalStatesSet$. The set of all trees $t$ accepted by a NBTA $\automaton$ is called the language \emph{recognised} by $\automaton$, and is denoted by $\languageOf{\automaton} \subseteq \trees{\alphabet}$.
\end{definition}

Just like a word $w$ can be extended by appending another word to its end, a tree can be extended by appending other trees to designated leaves. This is formalised below, where $ \uplus$ denotes disjoint union.

\begin{definition}[Context]\label{def:context}
    A \emph{context with arity $n$} or an \emph{$n$-context} over $\alphabet$ is a tree over the alphabet $\alphabet \uplus \set{\square_1/0, \dots, \square_n/0}$ such that each letter $\square_i$ appears exactly once in the tree, for all $i \in [n]$. We refer to the positions occupied by $\square_i$ as the \emph{holes} of the context for all $i \in [n]$.
\end{definition}

In this paper, we shall only deal with finite contexts. Intuitively, the leaves labeled with $\square_1, \dots, \square_n$, or the holes of the context are the designated positions where other trees can be appended. We denote the set of all finite $n$-contexts over $\alphabet$ by $\contexts{\alphabet}{n}$. We denote $\bigcup_{n=0}^\infty \contexts{\alphabet}{n}$ by $\contexts{\alphabet}{}$, where we let $\contexts{\alphabet}{0} := \trees{\alphabet}$.

\subsection{Tree Transformations}\label{subsec:model}

In this section, we define our tree transformation model and describe some associated problems that we wish to study. We start by defining and describing tree homomorphisms, which are essentially extensions of homomorphisms from words to trees. We more or less follow the definition as presented in~\cite{tata}, with minor modifications for ease of writing and mathematical rigour.

\begin{definition}[Tree Homomorphism]\label{def:treeHomomorphism}
    Let $\Sigma, \Gamma$ be two finite ranked alphabets, possibly not disjoint. Let $X := \set{x_i/0 \mid i \in \N}$ be a countable set of \emph{variables} of arity zero that is pairwise disjoint from $\Sigma, \Gamma$. For each $n \in \N_0$, define the \emph{section} of $X$ by $n$ by $X_n := \set{x_i/0 \mid i \in [n]}$. Note that with this definition, we also get that $X_0 = \emptyset$.

    Let $h : \Sigma \to \trees{\Gamma \uplus X}$ be a mapping such that $h(f) \in \trees{\Gamma \uplus X_{\arity(f)}}$ for all $f \in \Sigma$. Then, we define the \emph{tree homomorphism} $\widetilde{h} : \trees{\Sigma} \to \trees{\Gamma}$ determined by $h$ as follows:
    \begin{itemize}
        \item $\widetilde{h}(a) := h(a) \in \trees{\Gamma}$ for each $a \in \Sigma$ with arity 0.
        \item $\widetilde{h}(f(t_1, \dots, t_n)) := h(f)\{x_i \leftarrow \widetilde{h}(t_i) \mid i \in [n]\}$ for each $f/n \in \Sigma$, where the right-hand side is the result of replacing all occurences of the symbol $x_i$ in $h(f)$ by the term $\widetilde{h}(t_i)$ for all $i \in [n]$.
    \end{itemize}
\end{definition}

By abuse of notation, whenever there is no ambiguity, we shall drop the tilde and just refer to the tree homomorphism $\widetilde{h}$ determined by $h$ just as simply $h$.

We let $\alphabet$ be a finite ranked alphabet. We will need trees, some parts of which are not completely specified and we use variables to denote those parts. Let $\variables$ be a countable set of \emph{ranked variables} or \emph{variables}. For a variable $x \in \variables$, $\arity(x) \in \N_0$ denotes its arity. We define pairwise disjoint sets $\variables_k := \{x \in \variables \mid \arity(x) = k\} \subseteq \variables$ for all $k \in \N$. Further we also define the \emph{degree} of a variable $x$ as $\deg(x) := 1 + \arity(x)$.

\begin{definition}[Pattern]\label{def:pattern}
    A \emph{tree pattern} or simply a \emph{pattern} over $(\alphabet, \variables)$ is defined to be a tree $\pattern \in \trees{\alphabet \uplus \variables}$. We also let \emph{$\variables(\pattern)$} be the set of all variables appearing in $\pattern$. Note that $\variables(\pattern) \subseteq \variables$ is necessarily finite. We also define the \emph{degree of a pattern} $\pattern$ denoted by $\deg(\pattern) := \max_{x \in \variables(\pattern)}\deg(x)$.
\end{definition}

A pattern can represent multiple trees by substituting variables in the pattern by contexts of the same arity.
\begin{definition}[Substitution]\label{def:substitution}
    A \emph{substitution} over $\alphabet$ is a partial function $\substitution : \variables \rightharpoonup \contexts{\alphabet}{}$ such that for all $x \in \dom(\substitution)$, we have that $\substitution(x) \in \contexts{\alphabet}{k}$, where $k = \arity(x)$.
\end{definition}

Given a substitution $\substitution$, we can naturally extend it to the set of patterns over $\alphabet$ as follows. Define a tree homomorphism $\widetilde{\substitution} : \trees{\alphabet \uplus \dom(\substitution)} \to \trees{\alphabet}$ by setting
\[\twopartfunc{\widetilde{\substitution}(x)}{x}{x \in \alphabet}{\substitution(x)}{x \in \dom(\substitution)}\]
and extending it homomorphically. Since $\widetilde{\substitution}|_{\dom(\substitution)} = \substitution$, by abuse of notation we shall write $\widetilde{\substitution}$ as $\substitution$ in places where the distinction is irrelevant.

We would like to control the set of contexts that can be substituted for variables in patterns, for which we use guards. A \emph{guard} over a finite alphabet $\alphabet$ is a partial function $\guardFunction : \variables \to 2^{\trees{\alphabet}}$, which assigns to each variable in its domain, a tree language over $\alphabet$. The following formally defines when a pattern matches a tree.

\begin{definition}[Match]\label{def:match}
    Let $\pattern$ be a pattern over $(\alphabet, \variables)$ and $t \in \trees{\alphabet}$ be a tree. Let $\guardFunction$ be a guard such that $\variables(\pattern) \subseteq \dom(\guardFunction)$. We say that $t$ \emph{matches} pattern $\pattern$ under $\guardFunction$ if and only if there exists a substitution $\substitution$ such that
    \begin{itemize}
        \item $\variables(\pattern) \subseteq \dom(\substitution) \subseteq \dom(\guardFunction)$.
        \item $\substitution(x) \in \guardOf{x}$ for each $x \in \variables(\pattern)$.
        \item $\substitution(\pattern) = t$.
    \end{itemize}
    In such a case, $\substitution$ is called a \emph{matching substitution}, and we say the substitution $\substitution$ \emph{matches} the tree $t$ to the pattern $\pattern$ under the guard $\guardFunction$.
\end{definition}

There may exist more than one matching substitution for a given match. Eg., if $a/1, b/1, c/0$ are letters and $x/1, y/0$ are variables, the tree $a-b-b-c$ matches the pattern $x-b-y$ with 2 distinct substitutions $\substitution_1, \substitution_2$ given by $\substitution_1(x) := a, \substitution_1(y) := b-c$ whereas $\substitution_2(x) := a-b, \substitution_2(y) := c$.

The following formalises the tree transformations defined by our model.
\begin{definition}[Atomic Transformation]\label{def:atomicAction}
    Let $\alphabet$ be a finite alphabet and $\variables$ be a countable set of variables. An \emph{atomic tree transformation} or simply, an \emph{atomic transformation} is an ordered triple $\atomicTransform := (\sourcePattern, \targetPattern, \guardFunction)$, where $\sourcePattern, \targetPattern$ are patterns over $(\alphabet, \variables)$ called the \emph{source pattern} and \emph{target pattern} respectively, and $\guardFunction$ is a guard over $\alphabet$ such that we have $\variables(\atomicTransform) := \variables(\sourcePattern) \cup \variables(\targetPattern) \subseteq \dom(\guardFunction)$. The \emph{relation over trees induced by \atomicTransform} is
    \begin{eqnarray*}
        \relT(\atomicTransform)
        & := & \{(t,t') \mid \exists \substitution \mbox{ such that } \substitution \mbox{ matches } t  \mbox{ to } \sourcePattern \mbox{ under } \guardFunction\\
        &  & \mbox{ and } \substitution \mbox{ matches } t'  \mbox{ to } \targetPattern \mbox{ under } \guardFunction \}
    \end{eqnarray*}
\end{definition}
We denote by $\atomicTransform(t) = \{t' \mid (t,t') \in \relT(\atomicTransform)\} \subseteq \trees{\alphabet}$ the \emph{effect} of $\atomicTransform$ on the tree $t$. We naturally extend the effect of $\atomicTransform$ to tree languages $\genericLanguage$ by setting $\atomicTransform(\genericLanguage) := \bigcup_{t \in \genericLanguage}\atomicTransform(t)$. We also define $\deg(\atomicTransform) := \max_{x \in \variables(\atomicTransform)}\deg(x)$. Our model consists of a possibly infinite set of atomic transformations, represented by a bimorphism.

\begin{definition}[Transformation]\label{def:action}
    Let $\alphabet, \seedAlphabet$ be two finite alphabets. Let $\variables$ be a countably infinite set of variables. A \emph{transformation} $\fullTransform$ is a tuple $\fullTransform := (\seedLanguage, \sourceMap, \targetMap, \guardFunction)$ where
    \begin{itemize}
        \item $\seedLanguage \subseteq \trees{\seedAlphabet}$ is called the \emph{seed language}.
        \item $\sourceMap, \targetMap : \trees{\seedAlphabet} \to \trees{\alphabet \uplus \variables}$ are tree homomorphisms.
        \item $\guardFunction$ is a guard over $\alphabet$.
    \end{itemize}
    We define the relation induced by a transformation $\fullTransform := (\seedLanguage, \sourceMap, \targetMap, \guardFunction)$ as follows: for each seed tree $s \in \seedLanguage$, we define the atomic transformation $\atomicTransform_s:= (\sourceMap(s), \targetMap(s), \guardFunction)$. Then $\relT(\fullTransform) := \bigcup_{s \in \seedLanguage}\relT(\atomicTransform_s)$
\end{definition}
For a tree $t$ and a tree language $\genericLanguage$, $\fullTransform(t) = \{t' \mid (t,t') \in \relT(\fullTransform)\}\subseteq \trees{\alphabet}$ and $\fullTransform(\genericLanguage) := \bigcup_{t \in \genericLanguage}\fullTransform(t)$ as before. We further define $\variables(\fullTransform)= \bigcup_{\gamma \in \seedAlphabet} \variables(\sourceMap(\gamma)) \cup \variables(\targetMap(\gamma))$ and $\deg(\fullTransform) = \max_{x \in \variables(\fullTransform)}\deg(x)$.


\section{Example - Code Snippet Translation}\label{sec:forWhileLoopExample}\label{sec:motiveEx}

As a motivating example, we consider a piece of code, written in a language like C++ or Java, containing a certain bounded number of \textbf{for} loops. We would like to transform this into a corresponding piece of code with \textbf{while} loops. A concrete example is depicted in Figure~\ref{fig:code-snippets}.
\begin{figure}[!ht]
    {\footnotesize\begin{tabular}{p{0.45\textwidth}}
        Source code \\ \hline
        \texttt{int[] A = new int[m];}\\
        \texttt{for(int i=0;i<A.length;i++) \{}\\
        \texttt{\quad A[i] = getInputFromUser();}\\
        \texttt{\}}\\
        \texttt{for(int j=0;j<A.length;j++) \{}\\
        \texttt{\quad int minIndex = j;}\\
        \texttt{\quad int minValue = A[minIndex];}\\
        \texttt{\quad for(int k=j;k<A.length;k++) \{}\\
        \texttt{\quad \quad if(A[k] < minValue) \{}\\
        \texttt{\quad \quad \quad minIndex = k;}\\
        \texttt{\quad \quad \quad minValue = A[k];}\\
        \texttt{\quad \quad \}}\\
        \texttt{\quad \}}\\
        \texttt{\quad int temp = A[j];}\\
        \texttt{\quad int A[j] = A[k];}\\
        \texttt{\quad int A[k] = temp;}\\
        \texttt{\}}\\
        \texttt{return A;}\\
        \\
        \\
    \end{tabular}
    \hfil
    \begin{tabular}{p{0.45\textwidth}}
        Target code \\ \hline
        \texttt{int[] A = new int[m];}\\
        \texttt{int i=0;}\\
        \texttt{while (i<A.length) \{}\\
        \texttt{\quad A[i] = getInputFromUser();}\\
        \texttt{\quad i++;}\\
        \texttt{\}}\\
        \texttt{int j=0;}\\
        \texttt{while (j<A.length) \{}\\
        \texttt{\quad int minIndex = j;}\\
        \texttt{\quad int minValue = A[minIndex];}\\
        \texttt{\quad int k=j;}\\
        \texttt{\quad while(k<A.length) \{}\\
        \texttt{\quad \quad if (A[k] < minValue) \{}\\
        \texttt{\quad \quad \quad minIndex = k;}\\
        \texttt{\quad \quad \quad minValue = A[k];}\\
        \texttt{\quad \quad \}}\\
        \texttt{\quad \quad k++;}\\
        \texttt{\quad \}}\\
        \texttt{\quad int temp = A[j];}\\
        \texttt{\quad int A[j] = A[k];}\\
        \texttt{\quad int A[k] = temp;}\\
        \texttt{\quad j++;}\\
        \texttt{\}}\\
        \texttt{return A;}\\
    \end{tabular}}
    \caption{The source and target code snippets for Selection Sort, before and after transforming \texttt{for} loops to \texttt{while} loops.}\label{fig:code-snippets}
\end{figure}

Let us bound the number of $\examplePatternLanguageFor$ loops in our programs by some fixed $n \in \N$. Consider the alphabet $\alphabet := \coreAlphabet \uplus \set{\examplePatternLanguageFor/5, \examplePatternLanguageWhile/3, \examplePatternLanguageNull/0}$ where the unary $\coreAlphabet$ refers to the alphabet used for programming, which for instance, could be all the characters appearing on a standard keyboard. With this alphabet, the source and target codes are seen as trees, as depicted in Figure~\ref{fig:source-code-snippet-as-tree}. Long chains of characters from $\coreAlphabet$ are written as mere sequences wrapped in a box for the sake of readability and space. 

\begin{figure}[!ht]
    \centering
    \begin{subfigure}[t]{0.48\textwidth}
        \centering
        \vtop{\begin{tikzpicture}[node distance=2mm]
            \node[draw, align=left] (blob1) {\footnotesize \texttt{int[] A = new int[m];}};

            \node[draw, align=left, below=2mm of blob1] (for1) {\texttt{for}};
            \node[draw, align=left, below left=2mm of for1] (init1) {\texttt{int i=0;}};
            \node[draw, align=left, below left=2mm and -10mm of init1] (null1) {\texttt{null}};
            \node[draw, align=left, below right=2mm and -8mm of null1] (term1) {\texttt{i<A.length;}};
            \node[draw, align=left, below left=2mm and -10mm of term1] (null2) {\texttt{null}};
            \node[draw, align=left, below right=2mm and -3mm of null2] (incr1) {\texttt{i++}};
            \node[draw, align=left, below left=2mm and -10mm of incr1] (null3) {\texttt{null}};
            \node[draw, align=left, below right=2mm and -25mm of null3] (blob2) {\texttt{A[i] = getInputFromUser();}};
            \node[draw, align=left, below left=2mm and -10mm of blob2] (null4) {\texttt{null}};

            \node[draw, align=left, below right=2mm and 19mm of null4] (for2) {\texttt{for}};
            \node[draw, align=left, below left=2mm of for2] (init2) {\texttt{int j=0;}};
            \node[draw, align=left, below left=2mm and -10mm of init2] (null5) {\texttt{null}};
            \node[draw, align=left, below right=2mm and -8mm of null5] (term2) {\texttt{j<A.length;}};
            \node[draw, align=left, below left=2mm and -10mm of term2] (null6) {\texttt{null}};
            \node[draw, align=left, below right=2mm and -3mm of null6] (incr2) {\texttt{j++}};
            \node[draw, align=left, below left=2mm and -10mm of incr2] (null7) {\texttt{null}};
            \node[draw, align=left, below right=2mm and -14mm of null7] (blob3) {
                \texttt{int minIndex = j;}\\
                \texttt{int minValue = A[minIndex];}};
            \node[draw, align=left, below=2mm of blob3] (for3) {\texttt{for}};
            \node[draw, align=left, below left=2mm of for3] (init3) {\texttt{int k=j;}};
            \node[draw, align=left, below left=2mm and -10mm of init3] (null8) {\texttt{null}};
            \node[draw, align=left, below right=2mm and -8mm of null8] (term3) {\texttt{k<A.length;}};
            \node[draw, align=left, below left=2mm and -10mm of term3] (null9) {\texttt{null}};
            \node[draw, align=left, below right=2mm and -3mm of null9] (incr3) {\texttt{k++}};
            \node[draw, align=left, below left=2mm and -10mm of incr3] (null10) {\texttt{null}};
            \node[draw, align=left, below right=2mm and -14mm of null10] (blob4) {
                \texttt{if(A[k] < minValue) \{}\\
                \texttt{\quad minIndex = k;}\\
                \texttt{\quad minValue = A[k];}\\
                \texttt{\}}};
            \node[draw, align=left, below left=2mm and -10mm of blob4] (null11) {\texttt{null}};
            \node[draw, align=left, below right=2mm and 6mm of null11] (blob5) {
                \texttt{int temp = A[j];}\\
                \texttt{int A[j] = A[k];}\\
                \texttt{int A[k] = temp;}};
            \node[draw, align=left, below left=2mm and -10mm of blob5] (null12) {\texttt{null}};
            \node[draw, align=left, below right=2mm and 6mm of null12] (blob6) {\texttt{return A;}};
            \node[draw, align=left, below left=2mm and -10mm of blob6] (null13) {\texttt{null}};
    
            \draw[->]
            (blob1) edge (for1)
            
            (for1) edge (init1)
            (init1) edge (null1)
            (for1) edge (term1)
            (term1) edge (null2)
            (for1) edge[bend left=50] (incr1)
            (incr1) edge (null3)
            (for1) edge[bend left=60] (blob2)
            (blob2) edge (null4)
            (for1) edge[bend left=80] (for2)

            (for2) edge (init2)
            (init2) edge (null5)
            (for2) edge (term2)
            (term2) edge (null6)
            (for2) edge[bend left=50] (incr2)
            (incr2) edge (null7)
            (for2) edge[bend left=40] (blob3)
            (blob3) edge (for3)
            (for2) edge[bend left=50] (blob6)
            (blob6) edge (null13)

            (for3) edge (init3)
            (init3) edge (null8)
            (for3) edge (term3)
            (term3) edge (null9)
            (for3) edge[bend left=50] (incr3)
            (incr3) edge (null10)
            (for3) edge[bend left=40] (blob4)
            (blob4) edge (null11)
            (for3) edge[bend left=40] (blob5)
            (blob5) edge (null12);
        \end{tikzpicture}}
    \end{subfigure}
    \hfill
    \begin{subfigure}[t]{0.48\textwidth}
        \centering
        \raisebox{35mm}{\begin{tikzpicture}[node distance=2mm]
            \node[draw, align=left] (blob1){\texttt{int[] A = new int[m];}};
            \node[draw, align=left, below= of blob1] (init1) {\texttt{int i=0;}};
            \node[draw, align=left, below=2mm  of init1] (while1) {\texttt{while}};
            \node[draw, align=left, below left=2mm and -8mm of while1] (term1) {\texttt{i<A.length;}};
            \node[draw, align=left, below left=2mm and -10mm of term1] (null1) {\texttt{null}};
            \node[draw, align=left, below right=2mm and -20mm of null1] (blob2) {\texttt{A[i] = getInputFromUser();}};
            \node[draw, align=left, below= of blob2] (incr1) {\texttt{i++}};
            \node[draw, align=left, below left=2mm and -10mm of incr1] (null2) {\texttt{null}};
            \node[draw, align=left, below right=2mm and -8mm of null2] (init2) {\texttt{int j=0;}};
            \node[draw, align = left, below left  =2mm and -10mm of init2] (while2) {\texttt{while}};
            \node[draw, align=left, below left=2mm and -8mm of while2] (term2) {\texttt{j<A.length;}};
            \node[draw, align=left, below left=2mm and -10mm of term2] (null3) {\texttt{null}};
            \node[draw, align=left, below right=2mm and -8mm of null3] (blob3) {
                \texttt{int minIndex = j;}\\
                \texttt{int minValue = A[minIndex];}};
            \node[draw, align=left, below= of blob3] (init3) {\texttt{int k=j;}};
            \node[draw, align=left, below=2mm  of init3] (while3) {\texttt{while}};
            \node[draw, align=left, below left=2mm and -8mm of while3] (term3) {\texttt{k<A.length;}};
            \node[draw, align=left, below left=2mm and -10mm of term3] (null4) {\texttt{null}};
            \node[draw, align=left, below right=2mm and -8mm of null4] (blob4) {
                \texttt{if(A[k] < minValue) \{}\\
                \texttt{\quad minIndex = k;}\\
                \texttt{\quad minValue = A[k];}\\
                \texttt{\}}};
            \node[draw, align=left, below= of blob4] (incr3) {\texttt{k++}};
            \node[draw, align=left, below left=2mm and -10mm of incr3] (null5) {\texttt{null}};
            \node[draw, align=left, below right=2mm and -8mm of null5] (blob5) {
                \texttt{int temp = A[j];}\\
                \texttt{A[j] = A[k];}\\
                \texttt{A[k] = temp;}};
            \node[draw, align=left, below= of blob5] (incr2) {\texttt{j++}};
            \node[draw, align=left, below left=2mm and -10mm of incr2] (null6) {\texttt{null}};
            \node[draw, align=left, below right=2mm and -8mm of null6] (blob6) {\texttt{return A;}};
            \node[draw, align=left, below left=2mm and -10mm of blob6] (null7) {\texttt{null}};
            
            \draw[->]
            (blob1) edge (init1)
            (init1) edge (while1)

            (while1) edge (term1)
            (term1) edge (null1)
            (blob2) edge (incr1)
            (while1) edge[bend left = 60] (blob2)
            (incr1) edge (null2)
            (while1) edge[bend left = 90] (init2)
            (init2) edge (while2)

            (while2) edge (term2)
            (while2) edge[bend left = 40] (blob3)
            (term2) edge (null3)
            (blob3) edge (init3)
            (init3) edge (while3)

            (while3) edge (term3)
            (term3) edge (null4)
            (while3) edge[bend left = 40] (blob4)
            (blob4) edge (incr3)
            (incr3) edge (null5)
            (while3) edge[bend left = 60] (blob5)
            (blob5) edge (incr2)
            (incr2) edge (null6)
            (while2) edge[bend left = 70] (blob6)
            (blob6) edge (null7);
        \end{tikzpicture}}
    \end{subfigure}
    \caption{Source and target code snippets from Fig~\ref{fig:code-snippets} as trees.}\label{fig:source-code-snippet-as-tree}
\end{figure}
\clearpage

We would like to capture these codes with  patterns containing variables, which will allow us to move around the sections of code as needed. Let $\variables := \set{x^\examplePatternLanguageInit_i/1, x^\examplePatternLanguageCond_i/1, x^\examplePatternLanguageIncr_i/1 \mid i \in [n]}$ be the set of variables. Recall that $n$, bound on the number of $\examplePatternLanguageFor$ loops in our programs, is a constant and hence we have only a fixed finite number of variables.

\begin{figure}[!ht]
    \centering
    \scalebox{0.8}{\begin{tikzpicture}[node distance = 2mm]
        \node[draw, align = left] (blob1) {\texttt{int[] A = new int[m];}}; 
        \node[draw, color=blue, align=left, below =2mm of blob1] (loop1) {$\exampleSeedLanguageLoop_1$};
        \node[draw, align=left, below left=2mm and -10mm of loop1] (blob2) {\texttt{A[i] = getInputFromUser();}};
        \node[draw, align=left, color=red, below=of blob2] (endLoop1) {$\exampleSeedLanguageEndLoop_1$};
        \node[draw, align=left, color=blue, below right=2mm and 8mm of endLoop1] (loop2) {$\exampleSeedLanguageLoop_2$};
        \node[draw, align=left, below left=2mm and -10mm of loop2] (blob3) {
            \texttt{int minIndex = j;}\\
            \texttt{int minValue = A[minIndex];}};
        \node[draw, color=blue, align=left, below =2mm of blob3] (loop3) {$\exampleSeedLanguageLoop_3$};
        \node[draw, align=left, below left=2mm and -10mm of loop3] (blob4) {
            \texttt{if(A[k] < minValue) \{}\\
            \texttt{minIndex = k;}\\
            \texttt{minValue = A[k]; \}}};
        \node[draw, align=left, color=red, below=of blob4] (endLoop3) {$\exampleSeedLanguageEndLoop_3$};
        \node[draw, align=left, below right =2mm and 2mm of endLoop3] (blob5) {
            \texttt{int temp = A[j];}\\
            \texttt{A[j] = A[k];}\\
            \texttt{A[k] = temp;}};
        \node[draw, align=left, color=red, below=of blob5] (endLoop2) {$\exampleSeedLanguageEndLoop_2$};
        \node[draw, align=left, below right=2mm and -2mm of endLoop2] (blob6) {\texttt{return A;}};
        \node[draw, align=left, below=2mm of blob6] (null1) {$\examplePatternLanguageNull$};
        
        \draw[->]
        (blob1) edge (loop1)
        (loop1) edge[bend left=0] (blob2)
        (blob2) edge (endLoop1)
        (loop1) edge[bend left=50] (loop2)
        (loop2) edge (blob3)
        (blob3) edge (loop3)
        (loop3) edge (blob4)
        (blob4) edge (endLoop3)
        (loop3) edge[bend left=50] (blob5)
        (blob5) edge (endLoop2)
        (loop2) edge[bend left=50] (blob6)
        (blob6) edge (null1);
    \end{tikzpicture}}    
    \caption{The seed tree generating the source and target patterns in Figure~\ref{fig:sourceTargetPatternTrees}.}\label{fig:seedTreeGeneratingForWhilePatterns}
\end{figure}

\begin{figure}[!ht]
    \scalebox{0.73}{\begin{tikzpicture}[node distance=2mm]
        \node[draw, align=left] (blob1) {\texttt{int[] A = new int[m];}};
        \node[draw, align=left, color=blue, below=2mm  of blob1] (for1) {\texttt{for}};
        \node[draw, align=left, below left= of for1, color=blue] (init1) {\texttt{$x^\examplePatternLanguageInit_1$}};
        \node[draw, align=left, color=blue, below left=2mm and -10mm of init1] (null1) {\texttt{$\examplePatternLanguageNull$}};
        \node[draw, align=left, color=blue, below right=2mm and -8mm of null1] (term1) {\texttt{$x^\examplePatternLanguageCond_1$}};
        \node[draw, align=left, color=blue, below left=2mm and -10mm of term1] (null2) {\texttt{$\examplePatternLanguageNull$}};
        \node[draw, align=left, color=blue, below right=2mm and -3mm of null2] (incr1) {\texttt{$x^\examplePatternLanguageIncr_1$}};
        \node[draw, align=left, color = blue, below left  =2mm and -10mm of incr1] (null3) {\texttt{$\examplePatternLanguageNull$}};
        \node[draw, align=left, below right = 2mm and -28mm of null3] (blob2){\texttt{A[i] = getInputFromUser();}};
        \node[draw, align=left, color=red, below right=2mm and -20mm of blob2] (null4) {\texttt{$\examplePatternLanguageNull$}};

        \node[draw, align=left, color=blue, below left=2mm and -5mm of null4] (for2){\texttt{for}};
        \node[draw, align=left, below left= of for2, color=blue] (init2) {\texttt{$x^\examplePatternLanguageInit_2$}};
        \node[draw, align=left, color=blue, below left=2mm and -10mm of init2] (null5) {\texttt{$\examplePatternLanguageNull$}};
        \node[draw, align=left, color=blue, below right=2mm and -8mm of null5] (term2) {\texttt{$x^\examplePatternLanguageCond_2$}};
        \node[draw, align=left, color=blue, below left=2mm and -10mm of term2] (null6) {\texttt{$\examplePatternLanguageNull$}};
        \node[draw, align=left, color=blue, below right=2mm and -3mm of null6] (incr2) {\texttt{$x^\examplePatternLanguageIncr_2$}};
        \node[draw, align=left, color = blue, below=2mm and -10mm of incr2] (null7) {\texttt{$\examplePatternLanguageNull$}};
        \node[draw, align=left, below right = 2mm and -28mm of null7] (blob3){
            \texttt{int minIndex = j;}\\
            \texttt{int minValue = A[minIndex];}};

        \node[draw, align=left, color=blue, below=2mm  of blob3] (for3) {\texttt{for}};
        \node[draw, align=left, below left= of for3, color=blue] (init3) {\texttt{$x^\examplePatternLanguageInit_3$}};
        \node[draw, align=left, color=blue, below left=2mm and -10mm of init3] (null8) {\texttt{$\examplePatternLanguageNull$}};
        \node[draw, align=left, color=blue, below right=2mm and -8mm of null8] (term3) {\texttt{$x^\examplePatternLanguageCond_3$}};
        \node[draw, align=left, color=blue, below left=2mm and -10mm of term3] (null9) {\texttt{$\examplePatternLanguageNull$}};
        \node[draw, align=left, color=blue, below right=2mm and -3mm of null9] (incr3) {\texttt{$x^\examplePatternLanguageIncr_3$}};
        \node[draw, align=left, color = blue, below left  =2mm and -10mm of incr3] (null10) {\texttt{$\examplePatternLanguageNull$}};
        \node[draw, align=left, below right = 2mm and -28mm of null10] (blob4){
            \texttt{if(A[k] < minValue) \{}\\
            \texttt{minIndex = k;}\\
            \texttt{minValue = A[k]; \}}};
        \node[draw, align=left, color=red, below right=2mm and -16mm of blob4] (null11) {\texttt{$\examplePatternLanguageNull$}};
        \node[draw, align=left, below left=2mm and -26mm of null11] (blob5){
            \texttt{int temp = A[j];}\\
            \texttt{A[j] = A[k];}\\
            \texttt{A[k] = temp;}};
        \node[draw, align=left, color=red, below left=2mm and -10mm of blob5] (null12) {\texttt{$\examplePatternLanguageNull$}};
        \node[draw, align=left, below right=2mm and 10mm of null12] (blob6){\texttt{return A;}};
        \node[draw, align=left, below left=2mm and -10mm of blob6] (null13) {\texttt{$\examplePatternLanguageNull$}};

        \draw[->]
        (blob1) edge (for1)
        (for1) edge[color=blue] (init1)
        (init1) edge[color=blue] (null1)
        (for1) edge[color=blue, bend left=30] (term1)
        (term1) edge[color=blue] (null2)
        (for1) edge[bend left=30, color=blue] (incr1)
        (incr1) edge[color=blue] (null3)
        (for1) edge[bend left=50] (blob2)
        (blob2) edge (null4)
        (for1) edge[bend left=60] (for2)

        (for2) edge[color=blue] (init2)
        (init2) edge[color=blue] (null5)
        (for2) edge[color=blue, bend left=30] (term2)
        (term2) edge[color=blue] (null6)
        (for2) edge[bend left=30, color=blue] (incr2)
        (incr2) edge[color=blue] (null7)
        (for2) edge[bend left=50] (blob3)
        (blob3) edge (for3)

        (for3) edge[color=blue] (init3)
        (init3) edge[color=blue] (null8)
        (for3) edge[color=blue, bend left=30] (term3)
        (term3) edge[color=blue] (null9)
        (for3) edge[bend left=30, color=blue] (incr3)
        (incr3) edge[color=blue] (null10)
        (for3) edge[bend left=50] (blob4)
        (blob4) edge (null11)
        (for3) edge[bend left=40] (blob5)
        (blob5) edge (null12)
        
        (for2) edge[bend left=30] (blob6)
        (blob6) edge (null13);

        \pgfmathsetmacro{\sourceLoopWidth}{5.5}
        \pgfmathsetmacro{\sourceLoopLeft}{-2.4}
        \pgfmathsetmacro{\sourceTextOffset}{0.25}

        \pgfmathsetmacro{\sourceRedLoopHeight}{0.7}
        \pgfmathsetmacro{\sourceRedTextOffsetLeft}{1.3}
        \pgfmathsetmacro{\sourceRedLoopOneBottom}{-6.95}
        \pgfmathsetmacro{\sourceRedLoopTwoBottom}{-23.2}
        \pgfmathsetmacro{\sourceRedLoopThreeBottom}{-20.85}

        \pgfmathsetmacro{\sourceBlueLoopHeight}{5.15}
        \pgfmathsetmacro{\sourceBlueTextOffsetLeft}{1}
        \pgfmathsetmacro{\sourceBlueLoopOneBottom}{-5.5}
        \pgfmathsetmacro{\sourceBlueLoopTwoBottom}{-12.15}
        \pgfmathsetmacro{\sourceBlueLoopThreeBottom}{-18.5}

        \begin{scope}[on background layer={}]
            \draw[rounded corners=3mm, color=blue, dashed, thick] (\sourceLoopLeft, \sourceBlueLoopOneBottom) rectangle (\sourceLoopLeft + \sourceLoopWidth, \sourceBlueLoopOneBottom + \sourceBlueLoopHeight);
            \draw[color=blue] (\sourceLoopLeft + \sourceBlueTextOffsetLeft, \sourceBlueLoopOneBottom + \sourceTextOffset) node {$\phi_\src(\exampleSeedLanguageLoop_1)$};
            \draw[rounded corners=3mm, color=blue, dashed, thick] (\sourceLoopLeft, \sourceBlueLoopTwoBottom) rectangle (\sourceLoopLeft + \sourceLoopWidth, \sourceBlueLoopTwoBottom + \sourceBlueLoopHeight);
            \draw[color=blue] (\sourceLoopLeft + \sourceBlueTextOffsetLeft, \sourceBlueLoopTwoBottom + \sourceTextOffset) node {$\phi_\src(\exampleSeedLanguageLoop_2)$};
            \draw[rounded corners=3mm, color=blue, dashed, thick] (\sourceLoopLeft, \sourceBlueLoopThreeBottom) rectangle (\sourceLoopLeft + \sourceLoopWidth, \sourceBlueLoopThreeBottom + \sourceBlueLoopHeight);
            \draw[color=blue] (\sourceLoopLeft + \sourceBlueTextOffsetLeft, \sourceBlueLoopThreeBottom + \sourceTextOffset) node {$\phi_\src(\exampleSeedLanguageLoop_3)$};

            \draw[rounded corners=3mm, color=red, dashed, thick] (\sourceLoopLeft, \sourceRedLoopOneBottom) rectangle (\sourceLoopLeft + \sourceLoopWidth, \sourceRedLoopOneBottom + \sourceRedLoopHeight);
            \draw[color=red] (\sourceLoopLeft + \sourceRedTextOffsetLeft, \sourceRedLoopOneBottom + \sourceTextOffset) node {$\phi_\src(\exampleSeedLanguageEndLoop_1)$};
            \draw[rounded corners=3mm, color=red, dashed, thick] (\sourceLoopLeft, \sourceRedLoopThreeBottom) rectangle (\sourceLoopLeft + \sourceLoopWidth, \sourceRedLoopThreeBottom + \sourceRedLoopHeight);
            \draw[color=red] (\sourceLoopLeft + \sourceRedTextOffsetLeft, \sourceRedLoopThreeBottom + \sourceTextOffset) node {$\phi_\src(\exampleSeedLanguageEndLoop_3)$};
            \draw[rounded corners=3mm, color=red, dashed, thick] (\sourceLoopLeft, \sourceRedLoopTwoBottom) rectangle (\sourceLoopLeft + \sourceLoopWidth, \sourceRedLoopTwoBottom + \sourceRedLoopHeight);
            \draw[color=red] (\sourceLoopLeft + \sourceRedTextOffsetLeft, \sourceRedLoopTwoBottom + \sourceTextOffset) node {$\phi_\src(\exampleSeedLanguageEndLoop_2)$};
        \end{scope}
    \end{tikzpicture}}
    \hfill
    \scalebox{0.73}{\begin{tikzpicture}[node distance=2mm]
        \node[draw, align=left] (blob1) {\texttt{int[] A = new int[m];}}; 
        \node[draw, align=left, color=blue, below =2mm of blob1] (init1) {$x^\examplePatternLanguageInit_1$};
        \node[draw, align=left, color=blue, below=2mm of init1] (while1) {\texttt{while}};
        \node[draw, align=left, color=blue, below left=2mm and -8mm of while1] (term1) {$x^\examplePatternLanguageCond_1$};	
        \node[draw, align=left, color=blue, below left=2mm and -10mm of term1] (null1) {$\examplePatternLanguageNull$};
        \node[draw, align=left, below right=2mm and -20mm of null1] (blob2) {\texttt{A[i] = getInputFromUser();}};
        \node[draw, align=left, color=red, below left=2mm and -22mm of blob2] (incr1) {$x^\examplePatternLanguageIncr_1$};
        \node[draw, align=left, color=red, below left=2mm and -10mm of incr1] (null2) {$\examplePatternLanguageNull$};
        \node[draw, align=left, color=blue, below right=2mm and -8mm of null2] (init2) {$x^\examplePatternLanguageInit_2$};

        \node [draw, align = left, color=blue, below left  =2mm and -10mm of init2] (while2) {\texttt{while}};
        \node[draw, align=left, color=blue, below left=2mm and -8mm of while2] (term2) {$x^\examplePatternLanguageCond_2$};	
        \node[draw, align=left, color=blue, below left=2mm and -10mm of term2] (null3) {$\examplePatternLanguageNull$};
        \node[draw, align=left, below right=2mm and -30mm of null3] (blob3) {
            \texttt{int minIndex = j;}\\
            \texttt{int minValue = A[minIndex];}};
        \node[draw, align=left, color=blue, below =2mm of blob3] (init3) {$x^\examplePatternLanguageInit_3$};
        \node[draw, align=left, color=blue, below=2mm of init3] (while3) {\texttt{while}};
        \node[draw, align=left, color=blue, below left=2mm and -8mm of while3] (term3) {$x^\examplePatternLanguageCond_3$};	
        \node[draw, align=left, color=blue, below left=2mm and -10mm of term3] (null4) {$\examplePatternLanguageNull$};
        \node[draw, align=left, below right=2mm and -20mm of null4] (blob4) {
            \texttt{if(A[k] < minValue) \{}\\
            \texttt{minIndex = k;}\\
            \texttt{minValue = A[k]; \}}};
        \node[draw, align=left, color=red, below left=2mm and -22mm of blob4] (incr3) {$x^\examplePatternLanguageIncr_3$};
        \node[draw, align=left, color=red, below left=2mm and -10mm of incr3] (null5) {$\examplePatternLanguageNull$};
        \node[draw, align=left, below right=2mm and -15mm of null5] (blob5) {
            \texttt{int temp = A[j];}\\
            \texttt{A[j] = A[k];}\\
            \texttt{A[k] = temp;}};
        \node[draw, align=left, color=red, below left=2mm and -22mm of blob5] (incr2) {$x^\examplePatternLanguageIncr_2$};
        \node[draw, align=left, color=red, below left=2mm and -10mm of incr2] (null6) {$\examplePatternLanguageNull$};
        \node[draw, align=left, below right=2mm and -15mm of null6] (blob6) {\texttt{return A;}};
        \node[draw, align=left, below left=2mm and -10mm of blob6] (null7) {$\examplePatternLanguageNull$};
        
        \draw[->]
        (blob1) edge (init1)
        
        (init1) edge[color=blue] (while1)
        (while1) edge[color=blue] (term1)
        (term1) edge[color=blue] (null1)
        (blob2) edge (incr1)
        (while1) edge[bend left=40] (blob2)
        (incr1) edge[color=red] (null2)
        (while1) edge[bend left=90] (init2)
        
        (init2) edge[color=blue] (while2)
        (while2) edge[color=blue] (term2)
        (term2) edge[color=blue] (null3)
        (while2) edge[bend left=50] (blob3)
        (blob3) edge (init3)

        (init3) edge[color=blue] (while3)
        (while3) edge[color=blue] (term3)
        (term3) edge[color=blue] (null4)
        (while3) edge[bend left=40] (blob4)
        (blob4) edge (incr3)
        (incr3) edge[color=red] (null5)
        (while3) edge[bend left=70] (blob5)
        (blob5) edge (incr2)
        (incr2) edge[color=red] (null6)
        (while2) edge[bend left=70] (blob6)
        (blob6) edge (null7);

        \pgfmathsetmacro{\targetLoopWidth}{4.8}
        \pgfmathsetmacro{\targetLoopLeft}{-2.8}
        \pgfmathsetmacro{\targetTextOffset}{0.25}

        \pgfmathsetmacro{\targetRedLoopHeight}{1.45}
        \pgfmathsetmacro{\targetRedTextOffsetLeft}{1.3}
        \pgfmathsetmacro{\targetRedLoopOneBottom}{-5.6}
        \pgfmathsetmacro{\targetRedLoopTwoBottom}{-19.2}
        \pgfmathsetmacro{\targetRedLoopThreeBottom}{-16}

        \pgfmathsetmacro{\targetBlueLoopHeight}{3}
        \pgfmathsetmacro{\targetBlueTextOffsetLeft}{1}
        \pgfmathsetmacro{\targetBlueLoopOneBottom}{-3.35}
        \pgfmathsetmacro{\targetBlueLoopTwoBottom}{-8.65}
        \pgfmathsetmacro{\targetBlueLoopThreeBottom}{-12.85}

        \begin{scope}[on background layer={}]
            \draw[rounded corners=3mm, color=blue, dashed, thick] (\targetLoopLeft, \targetBlueLoopOneBottom) rectangle (\targetLoopLeft + \targetLoopWidth, \targetBlueLoopOneBottom + \targetBlueLoopHeight);
            \draw[color=blue] (\targetLoopLeft + \targetBlueTextOffsetLeft, \targetBlueLoopOneBottom + \targetTextOffset) node {$\phi_\tgt(\exampleSeedLanguageLoop_1)$};
            \draw[rounded corners=3mm, color=blue, dashed, thick] (\targetLoopLeft, \targetBlueLoopTwoBottom) rectangle (\targetLoopLeft + \targetLoopWidth, \targetBlueLoopTwoBottom + \targetBlueLoopHeight);
            \draw[color=blue] (\targetLoopLeft + \targetBlueTextOffsetLeft, \targetBlueLoopTwoBottom + \targetTextOffset) node {$\phi_\tgt(\exampleSeedLanguageLoop_2)$};
            \draw[rounded corners=3mm, color=blue, dashed, thick] (\targetLoopLeft, \targetBlueLoopThreeBottom) rectangle (\targetLoopLeft + \targetLoopWidth, \targetBlueLoopThreeBottom + \targetBlueLoopHeight);
            \draw[color=blue] (\targetLoopLeft + \targetBlueTextOffsetLeft, \targetBlueLoopThreeBottom + \targetTextOffset) node {$\phi_\tgt(\exampleSeedLanguageLoop_3)$};

            \draw[rounded corners=3mm, color=red, dashed, thick] (\targetLoopLeft, \targetRedLoopOneBottom) rectangle (\targetLoopLeft + \targetLoopWidth, \targetRedLoopOneBottom + \targetRedLoopHeight);
            \draw[color=red] (\targetLoopLeft + \targetRedTextOffsetLeft, \targetRedLoopOneBottom + \targetTextOffset) node {$\phi_\tgt(\exampleSeedLanguageEndLoop_1)$};
            \draw[rounded corners=3mm, color=red, dashed, thick] (\targetLoopLeft, \targetRedLoopThreeBottom) rectangle (\targetLoopLeft + \targetLoopWidth, \targetRedLoopThreeBottom + \targetRedLoopHeight);
            \draw[color=red] (\targetLoopLeft + \targetRedTextOffsetLeft, \targetRedLoopThreeBottom + \targetTextOffset) node {$\phi_\tgt(\exampleSeedLanguageEndLoop_3)$};
            \draw[rounded corners=3mm, color=red, dashed, thick] (\targetLoopLeft, \targetRedLoopTwoBottom) rectangle (\targetLoopLeft + \targetLoopWidth, \targetRedLoopTwoBottom + \targetRedLoopHeight);
            \draw[color=red] (\targetLoopLeft + \targetRedTextOffsetLeft, \targetRedLoopTwoBottom + \targetTextOffset) node {$\phi_\tgt(\exampleSeedLanguageEndLoop_2)$};
        \end{scope}
    \end{tikzpicture}}
    \caption{Source and target patterns. The source and target morphisms are identity everywhere, except for $\exampleSeedLanguageLoop_i$ and $\exampleSeedLanguageEndLoop_i$,  which  are indicated by colors blue and red respectively.}\label{fig:sourceTargetPatternTrees}
\end{figure}

The source and target patterns are given in Figure~\ref{fig:sourceTargetPatternTrees} and a seed tree generating these patterns is given in Figure~\ref{fig:seedTreeGeneratingForWhilePatterns}.
Note that the seed alphabet is given by $\seedAlphabet := \coreAlphabet \uplus \set{\exampleSeedLanguageLoop_i/2, \exampleSeedLanguageEndLoop_i/0 \mid i \in [n]} \uplus \set{\exampleSeedLanguageNull/0}$. 

In general, the seed language will contain all  trees in $\trees{\seedAlphabet}$ which contain each of the letters $\exampleSeedLanguageLoop_i$ and $\exampleSeedLanguageEndLoop_i$ at most once and are well-bracketed. Let us take a detailed look at how the language looks like. Note that we a priori limit the scope of code we transform to instances having at most $n$ loops, where $n$ is a fixed natural number. We start with the various alphabets we use:
\begin{itemize}
    \item Tree alphabet $\alphabet := \coreAlphabet \uplus \set{\examplePatternLanguageFor/5, \examplePatternLanguageWhile/3, \examplePatternLanguageNull/0}$, where $\coreAlphabet$ is, for example, all the symbols available on a standard keyboard, each with an arity of 1.
    \item Pattern alphabet $\alphabet \uplus \variables$, where $\variables := \set{x^\examplePatternLanguageInit_i/1, x^\examplePatternLanguageCond_i/1, x^\examplePatternLanguageIncr_i/1 \mid i \in [n]}$.
    \item Seed alphabet $\seedAlphabet := \coreAlphabet \uplus \set{\exampleSeedLanguageLoop_i/2, \exampleSeedLanguageEndLoop_i/0 \mid i \in [n]} \uplus \set{\examplePatternLanguageNull/0}$. 
\end{itemize}
In the seed alphabet, we have binary letters $\exampleSeedLanguageLoop_i$ for all $i \in [n]$. We interpret this as follows: the letter $\exampleSeedLanguageLoop_i$ is a placeholder for the $i^{th}$ loop in the program. The left child of the letter will be a tree representing the part of the code that is \emph{inside} the loop, ending with an $\exampleSeedLanguageEndLoop_i$, which is a leaf letter to signify the end of the loop. The right child will be a tree representing the part of the code that follows \emph{after} the loop, eventually ending with a $\examplePatternLanguageNull$ leaf letter. Hence in the loop language, we would want that each pair of letters $(\exampleSeedLanguageLoop_i, \exampleSeedLanguageEndLoop_i)$ which signify loop initiation and loop termination respectively to appear at most once. This is a necessary `hygiene' condition upon which we will build more structure.

For this condition to hold, we define for each $i \in [n]$, a finite-state tree automaton $\uniqueLetterAutomaton{i} := (\set{q_0, q_1, q_2, q_3, q_4}, \seedAlphabet, \Delta_i, \set{q_0, q_3})$ where $\Delta_i$ is given by
\begin{align*}
    \Delta_i :=
    & \set{(\exampleSeedLanguageEndLoop_i, q_1), (\examplePatternLanguageNull, q_0)} \cup \set{(\exampleSeedLanguageEndLoop_j, q_0) \mid j \in [n], j \neq i} \\
    & \cup \set{(q_x, q_y, \exampleSeedLanguageLoop_i, q_{\min(4, x+y+2)})} \cup \set{(q_x, a, q_x) \mid x \in \set{0, 1, 2, 3, 4}} \\
    & \cup \set{(q_x, q_y, \exampleSeedLanguageLoop_j, q_{\min(4, x+y)}) \mid j \in [n], j \neq i, (x, y) \neq (1, 1)} \\
    & \cup \set{(q_1, q_1, \exampleSeedLanguageLoop_j, q_4)}
\end{align*}
We see that each of the automata $\uniqueLetterAutomaton{i}$ accepts all trees $s$ over $\seedAlphabet$ such that the letters $\exampleSeedLanguageLoop_i, \exampleSeedLanguageEndLoop_i$ both appear either once each, or not at all in $s$. Thus having $S \subseteq \left[ \bigcap_{i=1}^n L(\uniqueLetterAutomaton{i}) \right]$ implies that every tree $s$ contains the pair of letters $\exampleSeedLanguageLoop_i, \exampleSeedLanguageEndLoop_i$ at most once.

On top of this requirement, we also need that every tree in the seed language be ``well-nested'' in terms of the containment of loops. For this, we define an automaton $\automaton := (Q, \seedAlphabet, \Delta, F)$ as:
\begin{align*}
    Q := & \set{q_i \mid i \in [n]} \cup \set{q_0}\\
    F := & \set{q_0}\\
    \Delta
    := & \set{(\examplePatternLanguageNull, q_0)} \cup \set{(\exampleSeedLanguageEndLoop_i, q_i) \mid i \in [n]} \cup\\
    & \set{(q_i, q_j, \exampleSeedLanguageLoop_i, q_j) \mid i, j \in [n] \cup \set{0}} \set{(q, a, q) \mid q \in Q, a \in \coreAlphabet}
\end{align*}
We demonstrate the working of this automaton on an example word we would like to have in the seed language. Below, we consider a tree $s$ over the seed alphabet $\seedAlphabet$ for the case $n=9$, along with a run of the automaton $\automaton$ shown in parenthesis, in blue. Note that the dashed edges represent the presence of any number of unary letters from $\coreAlphabet$; since they ``bubble up'' any state, they can be represented as such for purposes of better understanding.
\[
\tikzset{
    Decision/.style = {draw, line width=1.4pt},
    Lottery/.style = {draw, line width=1.4pt},
    Outcome/.style = {circle, minimum width=3pt, fill, inner sep=0pt}
}
\scalebox{0.8}{
\begin{forest}
    for tree={grow=270, Decision}
    [$\exampleSeedLanguageLoop_3\blue{(q_0)}$
        [$\exampleSeedLanguageLoop_1\blue{(q_3)}$, edge=dashed
            [$\exampleSeedLanguageLoop_2\blue{(q_1)}$, edge=dashed
                [$\exampleSeedLanguageLoop_5\blue{(q_2)}$, edge=dashed
                    [$\exampleSeedLanguageEndLoop_5\blue{(q_5)}$, edge=dashed]
                    [$\exampleSeedLanguageEndLoop_2\blue{(q_2)}$, edge=dashed]
                ]
                [$\exampleSeedLanguageEndLoop_1\blue{(q_1)}$, edge=dashed]
            ]
            [$\exampleSeedLanguageLoop_4\blue{(q_3)}$, edge=dashed
                [$\exampleSeedLanguageEndLoop_4\blue{(q_4)}$, edge=dashed]
                [$\exampleSeedLanguageEndLoop_3\blue{(q_3)}$, edge=dashed]
            ]
        ]
        [$\exampleSeedLanguageLoop_7\blue{(q_0)}$, edge=dashed
            [$\exampleSeedLanguageEndLoop_7\blue{(q_7)}$, edge=dashed]
            [$\exampleSeedLanguageLoop_8\blue{(q_0)}$, edge=dashed
                [$\exampleSeedLanguageEndLoop_8\blue{(q_8)}$, edge=dashed]
                [$\examplePatternLanguageNull\blue{(q_0)}$, edge=dashed]
            ]
        ]
    ]
\end{forest}}
\]
Finally, we define the seed language to be $S := L(\automaton) \cap \left[ \bigcap_{i=1}^n L(\uniqueLetterAutomaton{i}) \right]$, which is a regular language.

In general, we see that for any program with no more than $n$ $\examplePatternLanguageFor$ loops, we will be able to find a seed tree, which will give us a source pattern and a target pattern through homomorphisms $\sourceMap, \targetMap$ respectively. The source pattern will match the program code with not more than $n$ $\examplePatternLanguageFor$ loops with a substitution, which when applied to the target pattern will give us the transformed version of the program code which has all $\examplePatternLanguageFor$ loops transformed into $\examplePatternLanguageWhile$ loops.

We may want to verify that applying a transformation to source trees belonging to some regular language (such as syntactically correct programs that use for loops) will never result in target trees violating desired properties (such as syntactically correct programs that use while loops). This motivates the problem we consider for our model, defined next.


\section{Decision Problems for Tree Transformations}\label{sec:problemsAndResults}

A simple question is to check whether there is a matching.
\begin{problemDefinition}
    \problemtitle{\matchingProblemFull}
    \probleminput{Pattern $\pattern$, tree $t$, guard $\guardFunction$. \grey{\slash\slash\ The guard languages are represented by NBTA.}}
    \problemquestion{Does $t$ match $\pattern$ under $\guardFunction$?}
\end{problemDefinition}

We will show that this problem is already \nptc.

\begin{theorem}[\cite{mtchLBound}]\label{thm:matching}
    The \matchingProblemFull\ is \nptc.
\end{theorem}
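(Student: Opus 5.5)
The proof has two parts: membership in \np\ via a polynomial-size certificate, and \np-hardness by a reduction from a known \nptc\ problem. The hardness direction already holds for words, as cited.

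\emph{Membership in \np.} If $t$ matches $\pattern$ under $\guardFunction$ via $\substitution$, then $\substitution(\pattern)=t$ and every variable occurrence in $\pattern$ is expanded into a nonempty piece of $t$. Hence each $\substitution(x)$, for $x\in\variables(\pattern)$, is a context that occurs as a connected fragment of $t$, so $\abs{\substitution(x)}\le\abs{t}$.

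The certificate assigns to each position $u$ of $\pattern$ the position $\mu(u)$ of $t$ where the root of the image of $u$ lies. For a variable occurrence of arity $k$, it also records the $k$ positions of $t$ where its holes are plugged. Given this data, a verifier proceeds in four steps:
\begin{itemize}
\item For each letter position of $\pattern$, it checks that the label agrees with $t$ and that children map to the corresponding children.
\item For each variable occurrence, it extracts the context $C$ of $t$ rooted at $\mu(u)$ with holes cut at the recorded positions. It checks that these cut positions are pairwise incomparable descendants, that they are in left-to-right order consistent with hole indices, and that they equal the images of the children of $u$.
\item It checks that all occurrences of the same variable $x$ yield identical contexts, which is a tree-equality test done in polynomial time.
\item It checks $C\in\guardOf{x}$ by simulating the guard NBTA on $C$. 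A context is a tree over $\alphabet\uplus\set{\square_i}$, and the guard NBTA runs on that alphabet in polynomial time.
\end{itemize}
These checks amount to verifying $\substitution(\pattern)=t$ by structural induction. The only care needed is to ensure the cut decomposition covers $t$ exactly, with no overlap and no gap. I would argue this by induction on $\pattern$, showing the image of the subpattern at $u$ is exactly the subtree of $t$ at $\mu(u)$.

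\emph{Hardness.} I reduce from \oneInThreeSatFull, mirroring the word case. A tree over unary letters and one nullary end marker is a word, so word pattern matching embeds directly. Take a formula with variables $v_1,\dots,v_n$ and clauses $C_1,\dots,C_m$. For each $v_i$, introduce a unary pattern variable $x_i$ with guard $\set{\square_1,\ a(\square_1)}$, so that $x_i$ denotes $\varepsilon$ or $a$.

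For each clause $C_j=(v_p\lor v_q\lor v_r)$, add the segment $x_p x_q x_r$ followed by a separator $\#$ to the pattern, and the segment $a\#$ to $t$. The pattern then matches exactly when precisely one literal in each clause is true. Since each $x_i$ is substituted uniformly across all its occurrences, a matching substitution corresponds to a consistent assignment. The separators force segment alignment, because every variable contributes at most one letter and the $\#$ symbols must line up.

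Both directions of the correctness argument are routine. The construction is polynomial, with guards given by tiny NBTA. Alternatively, I would invoke~\cite{mtchLBound} for hardness directly.

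The main obstacle is the membership argument: formalising that the guessed decomposition is exactly consistent with $\substitution(\pattern)=t$ when non-linear variables of positive arity are involved. I would handle it by the induction above.
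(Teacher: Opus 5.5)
Your overall plan is the paper's: \np\ membership by guessing and checking a matching, and hardness via the same \oneInThreeSatFull\ reduction with unary variables guarded to denote $\varepsilon$ or $a$. The hardness half is fine. The membership verifier, however, is incomplete as you describe it, for two concrete reasons.

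First, your premise that every variable occurrence expands into a nonempty piece of $t$ is false. A unary variable may be substituted by the trivial context $\square_1$, and your own reduction (like the paper's) needs exactly this for every false variable. For such an occurrence $u$ the image of $u$ itself is empty, so necessarily $\mu(u)=\mu(u1)$, and the single cut lies at $\mu(u)$, not strictly below it. Unless the verifier explicitly allows this, it rejects every certificate of every yes-instance your reduction produces.

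Second, a context in this paper only requires each $\square_i$ to occur exactly once, not in left-to-right order. With $x/2$, the pattern $x(a,b)$ matches $f(b,a)$, but only via $\substitution(x)=f(\square_2,\square_1)$, which your ``left-to-right order consistent with hole indices'' check rejects. The fix is to drop that check and simply label the cut recorded for the $i$-th child of $u$ as $\square_i$.

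With both repairs your argument goes through. The paper avoids the decomposition bookkeeping you identified as the main obstacle by taking the substitution itself as the certificate. It guesses $\substitution(x)$ for each $x\in\variables(\pattern)$, each having at most $\abs{t}$ non-hole nodes plus $\arity(x)$ holes. It then computes $\substitution(\pattern)$, which has polynomial size since each hole occurs exactly once, compares it with $t$, and runs the guard automata on each $\substitution(x)$.
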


\begin{proof}
    \textbf{(Upper Bound)} Let $(\pattern, t, \guardFunction)$ be an instance of \matchingProblemFull, where $\pattern$ is a pattern over $(\alphabet, \variables)$, $t \in \trees{\alphabet}$ and $\guardFunction : \variables \to 2^{\trees{\alphabet}}$ where each value of the function is given by an NBTA. Let the size of the input be $N := \abs{\pattern} + \abs{t} + \abs{\guardFunction}$. Given a certificate, which is a particular substitution whose size is bounded by $|t|$, we can verify whether $\substitution(\pattern) = t$ in $\bigo(|t|)$ time, which is linear in the size of the input $N$. We can also verify that for every $x \in \variables(\pattern)$, $\substitution(x) \in \guardOf{x}$. Since there exists a polynomial-time verifier algorithm for the problem, we see that \matchingProblemFull\ is in \np.

    \textbf{(Lower Bound)} For this proof, we present a version adapted from~\cite{mtchLBound}. The proof proceeds by establishing a reduction from \oneInThreeSatFull, which is defined below.
    \begin{problemDefinition}
        \problemtitle{\oneInThreeSatFull}
        \probleminput{3CNF formula $\phi$ without negated variables.}
        \problemquestion{Is there a truth assignment to the variables of $\phi$ such that every clause contains exactly one literal that is true and $\phi$ evaluates to true?}
    \end{problemDefinition}

    We know from~\cite{DBLP:conf/stoc/OneInThreeSAT} that \oneInThreeSatFull\ is \nptc. Consider the following instance $\{J(v_{f(3i-2)}, v_{f(3i-1)}, v_{f(3i)}) \mid i \in [n]\}$ of \oneInThreeSatFull, where $V := \{v_1, \dots, v_k\}$ is a finite set of variables and $f : [3n] \to [k]$. We also have $J$ to be a function which evaluates to true iff exactly one of its arguments is set to true, and false otherwise.

    Let $\alphabet := \{a/1, \#/1, \lambda/0\}$ be a ranked alphabet and $\variables := \{x_i/1 \mid i \in \N\}$ be a set of ranked variables. Consider a pattern $\pattern$ over $(\alphabet, \variables)$ given by \[ x_{f(1)}x_{f(2)}x_{f(3)}\#\dots\#x_{f(3i-2)}x_{f(3i-1)}x_{f(3i)}\#\dots\#x_{f(3n-2)}x_{f(3n-1)}x_{f(3n)}\lambda. \] and a tree over $\alphabet$ given by $t := a\#a\#\dots\#a\#\dots\#a$, where the number of $a$'s in the tree is $n$. We see that the only substitution $\substitution : V \to \{a, \e\}$ which matches $t$ to $\alpha$ is the one which assigns each true variable to $a$ and every false variable to $\e$. The guard languages can be specified easily by a simple NBTA. Hence we get that a solution for the original instance of \oneInThreeSatFull\ exists if and only if a solution for the corresponding \matchingProblemFull\ has a solution. This completes the reduction.
\end{proof}

Next we consider the reachability problem for atomic transformations and establish it is \exptc.

\begin{problemDefinition}
    \problemtitle{\atomicTransformReachabilityFull}
    \probleminput{Atomic transformation $\atomicTransform := (\sourcePattern, \targetPattern, \guardFunction)$, source language $\sourceLanguage$,  target language $\targetLanguage$. \grey{\slash\slash\ The  languages are represented by NBTA.}}
    \problemquestion{Do we have that $\atomicTransform(\sourceLanguage) \cap \targetLanguage \neq \emptyset$?}
\end{problemDefinition}

\begin{theorem}\label{thm:typecheckingProblem}
    The \atomicTransformReachabilityFull\ is \exptc.
\end{theorem}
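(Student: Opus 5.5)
We prove membership in \expt\ and \expth ness separately.

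\textbf{Upper bound.} A witness for $\atomicTransform(\sourceLanguage) \cap \targetLanguage \neq \emptyset$ is a substitution $\substitution$ with $\substitution(x) \in \guardOf{x}$ for every $x \in \variables(\atomicTransform)$, $\substitution(\sourcePattern) \in \sourceLanguage$ and $\substitution(\targetPattern) \in \targetLanguage$. Since the same context $\substitution(x)$ may occur at many places, we cannot guess it locally. Instead we abstract each context by its \emph{behaviour} in the relevant automata.

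Let $\sourceAutomatonB$, $\targetAutomatonB$ be the NBTA for $\sourceLanguage$, $\targetLanguage$, and let $\automaton_x$ be the NBTA for $\guardOf{x}$. For a $k$-context $C$ and an NBTTS with state set $Q$, define its induced relation $R_C \subseteq Q^{k+1}$: $(q_1,\dots,q_k,q) \in R_C$ iff there is a run on $C$ with result $q$ when each hole $\square_i$ is treated as a leaf carrying state $q_i$. Whether $\substitution(\sourcePattern)$ is accepted by $\sourceAutomatonB$ depends only on the relations $R_{\substitution(x)}$ for the product of $\sourceAutomatonB$ and $\targetAutomatonB$, together with membership of $\substitution(x)$ in $\guardOf{x}$. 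The acceptance check then amounts to running $\sourceAutomatonB$ on $\sourcePattern$, where a node labelled $x$ uses $R_{\substitution(x)}$ as its transition relation; the same holds for $\targetPattern$.

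The algorithm is as follows. First, compute the set $\realisableRelations{\cdot}{k}$ of all pairs $(R,b)$ realised by some $k$-context, where $R$ is the relation over the product state space $Q$ and $b$ records acceptance by $\automaton_x$ (or, more simply, we take the product with $\automaton_x$ as well). This set is computed by a saturation/fixpoint procedure in the style of the NBTA emptiness algorithm. Realisable relations for $k$-contexts with $k \le \deg(\atomicTransform)$ are built bottom-up: holes give identity-like relations, and a letter $a/m$ applied to children (trees or contexts whose holes are distributed among the children) composes the transition relation of $a$ with the children's relations. There are at most $2^{|Q|^{k+1}}$ relations, which is exponential as long as the arity $k$ is bounded by the input size, and the fixpoint stabilises in exponentially many rounds.

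Second, guess (deterministically enumerate) an assignment of a realisable relation to each variable in $\variables(\atomicTransform)$. There are exponentially many choices per variable, but the choices are independent, so we avoid enumerating tuples. Instead we check the patterns by a single bottom-up pass computing sets of states, where the relation chosen for $x$ is fixed globally. The tuple space is exponential in $|\variables(\atomicTransform)|$ times a polynomial, and this must be handled carefully. I would first try to encode everything as one alternating tree automaton, or simply as an NBTA of exponential size over an extended alphabet, whose emptiness is polynomial in its size. The main obstacle is ensuring that the consistency requirement, namely the same context for all occurrences of $x$ in both patterns, is captured by choosing the relation once per variable. Taking the product with both automata and the guards makes this sound: any two contexts with the same joint relation are interchangeable, so a single representative per variable exists.

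\textbf{Lower bound.} We reduce from \intersectionNonemptinessFull, which is \exptc. Given DBTA $\automatonB_1,\dots,\automatonB_m$, take the variable $x/0$ with guard $\trees{\alphabet}$, and use a fresh letter $g/m$. Set the source pattern to $g(x,\dots,x)$, let $\sourceLanguage$ be the set of trees $g(t_1,\dots,t_m)$ with $t_i \in \languageOf{\automatonB_i}$ (recognisable by a polynomial-size NBTA), and let $\targetPattern := x$ with $\targetLanguage := \trees{\alphabet}$. Reachability then holds iff $\bigcap_i \languageOf{\automatonB_i} \neq \emptyset$.
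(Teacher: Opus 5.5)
\textbf{There is a gap in the complexity analysis of your upper bound.} You work with all relations $R \subseteq Q^{k+1}$ for $k \le \deg(\atomicTransform)$, and you claim that the bound $2^{|Q|^{k+1}}$ is exponential as long as $k$ is bounded by the input size. This is false. Arities of variables are part of the input: an occurrence of a variable of arity $k$ is a pattern node with $k$ children, so $k$ can be linear in the input size $n$. Since $|Q|^{k+1} = 2^{(k+1)\log|Q|}$, the bound $2^{|Q|^{k+1}}$ is doubly exponential in $n$. Hence your saturation, and the subsequent choice of one relation per variable, are only shown to run in 2-\expt. That is exactly the bound the paper gets for general transformations in Theorem~\ref{thm:metaTypecheckingUpperBound}, where the exponent is a polynomial of degree $\deg(\fullTransform)$. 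The atomic case needs an extra idea. The difficulty you flag instead, exponentially many choices per variable, is harmless: a product of $|\variables(\atomicTransform)|$ singly exponential numbers is still singly exponential. The blow-up comes from the arity sitting in the exponent.

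\textbf{The missing idea, which the paper uses, is that you never need the full induced relation of $\substitution(x)$.} Fix one accepting run of the NBTA $\automatonB$ for $\{\#(t,t') \mid t \in \sourceLanguage, t' \in \targetLanguage\}$ on $\substitution(\#(\sourcePattern,\targetPattern))$. At each of the $b(x)$ occurrences of $x$, this run uses exactly one tuple of $\widetilde{\Transitions}(\substitution(x))$. So it suffices to do the following for every variable $x$:
\begin{itemize}
\item guess at most $b(x)$ tuples in $Q^{\deg(x)}$, giving at most $|Q|^{\deg(x)\cdot b(x)}$ choices per variable and hence $2^{\mathsf{poly}(n)}$ choices overall;
\item check that the guessed tuples combine into an accepting run on the pattern;
\item check that some context in $\guardOf{x}$ has all of them in its induced relation. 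This is emptiness of the product of $b(x)$ copies of $\automatonB$ with the guard automaton, an NBTA of exponential size.
\end{itemize}
Alternatively, you would have to prove that only singly exponentially many relations are actually realisable, so that your saturation never reaches the others; that needs its own argument, which you do not give.

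\textbf{Two smaller points.}
\begin{itemize}
\item Using the relation over the product of $\sourceAutomatonB$ and $\targetAutomatonB$ for every variable loses information when one of the two automata has no run at all on $\substitution(x)$. This can happen for a variable occurring only in $\sourcePattern$. So either take the disjoint union of the state sets, as the paper does via $\#(\sourcePattern,\targetPattern)$, or complete the automata first.
\item Your lower-bound reduction is correct and is essentially the paper's; it correctly puts the languages $\languageOf{\automatonB_i}$ into $\sourceLanguage$. Just make sure the transitions of the $m$-ary letter $g$ are encoded in polynomial size. As written there is one tuple for every choice of final states of $\automatonB_1,\dots,\automatonB_m$, which can be exponentially many.
\end{itemize}
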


We give a proof which is a special case of the \metaTypecheckingProblemReformulationFull. Clearly, it is easy to see that this is indeed a case of \metaTypecheckingProblemReformulationFull\ when the seed language is singleton. If $\automaton$ is such that $L(\automaton) = \{\tau\}$, then we just define/rename patterns $\pattern_\src := \sourceMap(\tau), \pattern_\tgt := \targetMap(\tau)$.

\begin{proof}
    \textbf{(Upper Bound)} Let $\atomicTransform := (\sourcePattern, \targetPattern, \guardFunction)$ be an atomic action, and let $\sourceAutomatonB, \targetAutomatonB$ be NBTAs with $\sourceLanguage := L(\sourceAutomatonB)$ and $\targetLanguage := L(\targetAutomatonB)$. To determine whether $\atomicTransform[\sourceLanguage] \cap \targetLanguage \neq \emptyset$, we need to find a substitution $\substitution$ such that $\substitution(\sourcePattern) \in \sourceLanguage$ and $\substitution(\targetPattern) \in \targetLanguage$ under $\guardFunction$. We further simplify the problem at hand. Consider a bigger alphabet $\Gamma := \alphabet \uplus \{\#/2\}$. Let $\pattern$ be a pattern over $\Gamma$ and $\variables$ given by $\pattern := \#(\sourcePattern, \targetPattern)$. Let $\automatonB$ be an NBTA over $\Gamma$ which recognises all trees of the form $\#(p_{\mathit{src}}, p_{\mathit{tgt}})$ with $p_{\mathit{src}} \in \sourceLanguage, p_{\mathit{tgt}} \in \targetLanguage$. Then \atomicTransformReachabilityFull\ now reduces to finding a substitution $\substitution$ over $\Gamma$ conforming to $\guardFunction$ such that $\substitution(\pattern) \in \genericLanguage := L(\automatonB)$.

    Let $M := \max \{\arity(x_i) \mid x_i \in \variables(\pattern)\}$. Let us assume that each variable $x/k \in \variables(\pattern)$ appears $b(x)$ number of times in $\pattern$. By Theorem~\ref{thm:relationRealisabilityExptc}, for each variable $x$, we choose a pair $(R(x), c(x))$ such that $R(x) \subseteq Q^{k+1}$ is a relation realised by context $c(x) \in \contexts{\Gamma}{k}$ on $\automatonB$ under $\guardOf{x}$ with the added condition that $|R(x)| \leq b(x)$. Once we have all such ordered pairs for individual variables, we simply use the substitution $\substitution := c$ and check whether the resulting tree belongs to $T$. This can be done in time given by
    \[ \bigo(|\automatonB| \cdot \max_x|\guardOf{x}| + |\pattern| \cdot 2^{|\automatonB| \cdot \max_x|\guardOf{x}|}) \]
    The number of times we have to perform this check operation is bounded above by the value ${\left(\max_x\left| \binom{Q^{\deg(x)}}{b(x)} \right|\right)}^{|\variables(\pattern)|}$. Hence the total time required for this is given by
    \[ {\left(\max_x\left| \binom{Q^{\deg(x)}}{b(x)} \right|\right)}^{|\variables(\pattern)|} \cdot \bigo(|\automatonB| \cdot \max_x|\guardOf{x}| + |\pattern| \cdot 2^{|\automatonB| \cdot \max_x|\guardOf{x}|}) \]
    which we can see is exponential in the size of the input.

    \textbf{(Lower Bound)} We show a reduction from \intersectionNonemptinessFull\ to \atomicTransformReachabilityFull. Since we know that \intersectionNonemptinessFull\ is \expth, we will conclude that \atomicTransformReachabilityFull\ is also \expth. Let $\automaton_1, \dots, \automaton_n$ be an instance of the \intersectionNonemptinessFull, where each $\automaton_i := (Q_i, \alphabet, \Transitions_i, F_i)$ is DBTA over the alphabet $\alphabet$ for all $i \in [n]$. We construct a corresponding instance of \atomicTransformReachabilityFull\ as follows. Let $\Gamma := \alphabet \uplus \{\#/n, \flat/0\}$ be a larger alphabet. Consider the atomic action $\atomicTransform := (\#(\underbrace{x, x, \dots, x}_{n}), \flat, \guardFunction)$ with $\guardOf{x} := \trees{\alphabet}$. Let $\automatonB_{src}$ be a NBTA such that $L_{src} := L(\automatonB_{src}) := \{\#(t_1, t_2, \dots, t_n) \mid t_i \in \trees{\alphabet},~\forall i \in [n]\}$. Let $\automatonB_{tgt}$ be an NBTA such that $L_{tgt} := L(\automatonB_{tgt}) := \{\flat\}$.

    We can see that $\atomicTransform[L_{src}] \cap L_{tgt} \neq \emptyset$ if and only if there exists $t \in \trees{\alphabet}$ such that $t \in L(\automaton_i)$ for all $i \in [n]$. This completes the reduction, and thus the proof.
\end{proof}

Our central problem of study is whether  a tree in a regular target language can be reached by a given transformation, starting from a tree in the regular source language, which we show is decidable.

\begin{problemDefinition}
	\problemtitle{\metaTypecheckingProblemReformulationFull}
	\probleminput{Transformation  $\fullTransform := (\seedLanguage,
		\sourceMap, \targetMap, \guardFunction)$, source language $\sourceLanguage$,  target language $\targetLanguage$. \grey{\slash\slash\ The  languages are represented by NBTA.}}
	\problemquestion{Do we have that $\fullTransform(\sourceLanguage) \cap \targetLanguage \neq \emptyset$?}
\end{problemDefinition}

\begin{theorem}\label{thm:metaTypecheckingUpperBound}
	The \metaTypecheckingProblemReformulationFull\ is in 2-\expt. More precisely, it can be solved in time $2^{\mathsf{poly}(\mathsf{|input|})}$, where $\mathsf{poly}$ is a  polynomial of degree $\deg(\fullTransform)$. 
\end{theorem}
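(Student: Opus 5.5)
We reduce the \metaTypecheckingProblemReformulationFull\ to emptiness of an alternating tree automaton running over seed trees, as outlined in the introduction. As in the proof of Theorem~\ref{thm:typecheckingProblem}, we first merge source and target: let $\automatonB$ be an NBTA over $\alphabet \uplus \set{\#/2}$ with state set $Q$ recognising $\set{\#(t,t') \mid t \in \sourceLanguage, t' \in \targetLanguage}$. We also use the homomorphism $\genericMap := \#(\sourceMap, \targetMap)$ applied at the root, with $\sourceMap,\targetMap$ applied to the seed tree. The question then becomes whether there exist $s \in \seedLanguage$ and a substitution $\substitution$ respecting $\guardFunction$ with $\substitution(\#(\sourceMap(s),\targetMap(s))) \in L(\automatonB)$.

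The key abstraction replaces each substituted context by the relation it induces on $\automatonB$. For a variable $x/k$ and a context $c \in \contexts{\alphabet}{k}$, the set of tuples $(q_1,\dots,q_k,q) \in Q^{k+1}$ such that $\automatonB$ has a run on $c$ reaching $q$ at the root when the holes are assigned $q_1,\dots,q_k$ is a relation $R \subseteq Q^{\deg(x)}$. Call $R$ \emph{realisable under} $\guardOf{x}$ if some $c \in \guardOf{x}$ induces it. By the relation realisability result (Theorem~\ref{thm:relationRealisabilityExptc}), we can compute the set $\realisableRelations{\automatonB}{\guardOf{x}}$ of such relations. There are at most $2^{|Q|^{\deg(x)}}$ candidate relations, so this costs $2^{\mathsf{poly}}$ time with the polynomial of degree $\deg(\fullTransform)$. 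Crucially, whether $\substitution(\pattern) \in L(\automatonB)$ depends only on the relations $R(x)$, not on the contexts themselves. Every occurrence of $x$ then behaves as a "letter" whose transition relation is $R(x)$. This also enforces consistency across occurrences, since all occurrences share the same $R(x)$.

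The algorithm guesses, for each variable $x \in \variables(\fullTransform)$, a realisable relation $R(x)$. This is at most $\prod_x 2^{|Q|^{\deg(x)}} = 2^{\mathsf{poly}}$ choices. For each guess $R$, we build an alternating tree automaton $\automaton_R$ over $\seedAlphabet$. Its states are $Q \cup \set{\mathit{init}}$ together with product states for $\seedLanguage$. Reading a seed letter $\gamma/n$ in state $q$, it must certify that $\automatonB$ can evaluate the finite pattern $\sourceMap(\gamma)$ (resp.\ $\targetMap(\gamma)$), with variables of $\variables$ interpreted by $R$, to $q$ when the seed-variables $x_1,\dots,x_n$ receive states $q_1,\dots,q_n$. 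An $x_i$ may occur several times or not at all in $\genericMap(\gamma)$, because the morphisms are non-linear. Hence the transition is a disjunction over the guessed internal states of $\genericMap(\gamma)$ and a conjunction over the occurrences of each $x_i$, sending copies $(i,q_i)$ to child $i$. This is exactly where alternation is needed: distinct occurrences of $x_i$ may require distinct states, and these must all be checked on the single subtree $s|_i$. We intersect with an NBTA for $\seedLanguage$ and accept iff the root evaluation of $\#(\cdot,\cdot)$ reaches a final state. Then $L(\automaton_R) \neq \emptyset$ iff there are a seed tree and contexts realising $R$ that witness nonemptiness. Soundness holds because the contexts chosen for $R$ are fixed, independent of $s$. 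Completeness follows by taking $R(x)$ to be the relation induced by $\substitution(x)$.

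Emptiness of alternating tree automata is \expt\ in the number of states. The number of states of $\automaton_R$ is polynomial, $|Q|\cdot|\seedLanguage|$-ish, so each check costs $2^{\mathsf{poly}}$. The total time is $2^{\mathsf{poly}}\cdot 2^{\mathsf{poly}}$, which is still of the stated form with degree governed by $\deg(\fullTransform)$ coming from the number of candidate relations. The main obstacle I expect is the correctness of the local transitions of $\automaton_R$, that is, showing the alternating run faithfully simulates a run of $\automatonB$ on $\substitution(\genericMap(s))$ despite duplicated and erased subtrees. I would prove it by induction on $s$, with the invariant that state $q$ is accepted at node $u$ iff $\automatonB$ reaches $q$ on $\substitution(\genericMap(s|_u))$, viewed separately for the src/tgt components.
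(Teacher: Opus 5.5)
Your proposal is correct and follows essentially the same route as the paper: abstract each substituted context by the relation it induces, fix one realisable relation per variable, and for each such choice test emptiness of an ATTA over seed trees. That ATTA simulates the seed automaton and sends one copy per occurrence of each child variable, which is the paper's destination function $\destinationFunction{\genericMap}{\gamma}$. The only difference is packaging: you merge $\sourceLanguage$ and $\targetLanguage$ into one automaton via $\#$ (as the paper itself does for the atomic case), so a single realisable relation on the merged state set replaces the paper's split into source-only, target-only and shared variables via the product $\sourceAutomatonB \cap \targetAutomatonB$. Just tag the ATTA's copies with $\src$ or $\tgt$ (or build $\automatonB$ as a disjoint union) so each copy knows whether to expand $\sourceMap(\gamma)$ or $\targetMap(\gamma)$.
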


Since NBTA are closed under complementation (with an exponential blow-up, see~\cite{tata}) we get that the type-checking problem defined below is in 3-\expt. 

\begin{problemDefinition}
    \problemtitle{\metaTypecheckingProblemFull}
    \probleminput{Transformation  $\fullTransform := (\seedLanguage,
    \sourceMap, \targetMap, \guardFunction)$, source language $\sourceLanguage$,  target language $\targetLanguage$. \grey{\slash\slash\ The  languages are represented by NBTA.}}
    \problemquestion{Do we have that $\fullTransform(\sourceLanguage) \subseteq \targetLanguage$?}
\end{problemDefinition}

\begin{corollary}
	The \metaTypecheckingProblemFull\ is in 3-\expt. 
\end{corollary}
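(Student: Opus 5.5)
The plan is to reduce the \metaTypecheckingProblemFull\ to the complement of the \metaTypecheckingProblemReformulationFull\ and then apply Theorem~\ref{thm:metaTypecheckingUpperBound}. The key observation is purely set-theoretic: for any transformation $\fullTransform$ and tree languages $\sourceLanguage, \targetLanguage$ over $\alphabet$,
\[ \fullTransform(\sourceLanguage) \subseteq \targetLanguage \iff \fullTransform(\sourceLanguage) \cap \left(\trees{\alphabet} \setminus \targetLanguage\right) = \emptyset . \]
This holds because $\fullTransform(\sourceLanguage) \subseteq \trees{\alphabet}$ by definition: every image $\substitution(\targetMap(s))$ is a tree over $\alphabet$, since the matching substitution eliminates all variables of $\targetMap(s)$. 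So a type-checking instance is positive exactly when the reachability instance $(\fullTransform, \sourceLanguage, \overline{\targetLanguage})$ is negative.

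The steps, in order, are as follows. First, from the NBTA $\targetAutomatonB$ for $\targetLanguage$, build an NBTA for $\overline{\targetLanguage} := \trees{\alphabet} \setminus \targetLanguage$. We do this by the standard subset construction (cf.~\cite{tata}), which yields a DBTA with at most $2^{|\automatonStateSet_{\tgt}|}$ states, and then swap its final and non-final states. This automaton has size at most exponential in $|\targetAutomatonB|$ and is computable in exponential time. Second, feed $(\fullTransform, \sourceLanguage, \overline{\targetLanguage})$ to the decision procedure of Theorem~\ref{thm:metaTypecheckingUpperBound}. The transformation $\fullTransform$ is left unchanged, so $\deg(\fullTransform)$ is unchanged, and the seed, source and guard automata are untouched. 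Third, negate the answer.

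For the complexity bound, the new input has size at most $2^{\bigo(N)}$, where $N$ is the original input size. Theorem~\ref{thm:metaTypecheckingUpperBound} guarantees a running time that is doubly exponential in its input size (the paper places it in 2-\expt). Composing a doubly exponential procedure with an exponentially larger input gives a triply exponential bound, and the complementation step itself costs only exponential time, so the overall running time is in 3-\expt.

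I do not expect a real obstacle here. The only point needing care is to confirm that the bound of Theorem~\ref{thm:metaTypecheckingUpperBound} is monotone in the size of the target automaton, so that substituting an automaton of size $2^{\bigo(N)}$ is legitimate. A finer accounting could exploit that the time bound is $2^{\mathsf{poly}}$ with the polynomial of fixed degree $\deg(\fullTransform)$, which would give $2^{\mathsf{poly}(2^{\bigo(N)})} = 2^{2^{\bigo(\deg(\fullTransform)\cdot N)}}$, possibly a tighter bound. This refinement is not needed for the stated 3-\expt\ claim.
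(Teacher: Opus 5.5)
Your proposal is correct and is essentially the paper's argument: the paper likewise gets the corollary by complementing the target NBTA with an exponential blow-up and running the 2-\expt\ procedure of Theorem~\ref{thm:metaTypecheckingUpperBound} on $(\fullTransform, \sourceLanguage, \overline{\targetLanguage})$, then negating the answer. Your extra observations, that $\fullTransform(\sourceLanguage) \subseteq \trees{\alphabet}$ justifies the reduction and that $\deg(\fullTransform)$ is unchanged by complementation, just make explicit what the paper's one-line justification leaves implicit.
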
	

Our decision procedure for the \metaTypecheckingProblemReformulationFull\ is automata-theoretic. We first give the main ideas behind our decision procedures in the next section.


\section{Core Idea Behind the Automata Construction}\label{sec:proofOutline}

\begin{figure}[!ht]
    \centering
    \scalebox{1.07}{
    \begin{tikzpicture}[node distance=5mm, inner sep=0pt, text width=2mm, font=\sffamily\small]
        \node[fill=blue, circle, label={[xshift=-3mm, yshift=-3mm]$\eta$}] (eta) {};
        \node[fill=green, below=of eta,  circle, label={[xshift=-3mm, yshift=-3mm]$\beta$}] (beta) {};
        \node[fill=magenta, left=of beta,  circle, label={[xshift=-3mm, yshift=-3mm]$\alpha$}] (alpha) {};
        \node[fill=orange, right=of beta,  circle, label={[xshift=-3mm, yshift=-3mm]$\gamma$}] (gamma) {};
        
        \draw[dotted, thick] ($(eta)+(0,+5mm)$)--(eta);
        \draw[dotted, thick] ($(alpha)+(-2mm,-5mm)$)--(alpha);
        \draw[dotted, thick] ($(alpha)+(+2mm,-5mm)$)--(alpha);
        \draw[dotted, thick] ($(beta)+(-3mm,-5mm)$)--(beta);
        \draw[dotted, thick] ($(beta)+(0,-5mm)$)--(beta);
        \draw[dotted, thick] ($(beta)+(+3mm,-5mm)$)--(beta);
        \draw[dotted, thick] ($(gamma)+(-2mm,-5mm)$)--(gamma);
        \draw[dotted, thick] ($(gamma)+(+2mm,-5mm)$)--(gamma);
        \draw[-]
        (eta) edge (alpha)
        (eta) edge (beta)
        (eta) edge (gamma);
        
        \node[above=1cm of eta, text width=1.5cm] () {Seed tree};
        \node[right=0.5cm of eta, text width=1.5cm] (arrow1) {$\xrightarrow{\genericMap := \sourceMap}$};
        
        \coordinate (alphap) at ($(eta) + (1.7cm,-1.3cm)$);
        \coordinate (betap) at  ($(eta) + (3cm,-1.3cm)$);
        \coordinate (gammap) at ($(eta) + (4.3cm,-1.3cm)$);
        
        \fill[color=blue!20] ($(eta) + (3cm,15mm)$) -- ($(eta) + (4.5cm,-1cm)$) -- ($(eta) + (1.5cm,-1cm)$);
        \draw[color=blue, thick] ($(eta) + (3cm,15mm)$) -- ($(eta) + (4.5cm,-1cm)$) -- ($(eta) + (1.5cm,-1cm)$) -- ($(eta) + (3cm,15mm)$);
        \fill[color=magenta!20] (alphap) -- ($(eta) + (2.2cm,-2cm)$) -- ($(eta) + (1.2cm,-2cm)$);
        \draw[color=magenta, thick] (alphap) -- ($(eta) + (2.2cm,-2cm)$) -- ($(eta) + (1.2cm,-2cm)$) -- (alphap);
        \fill[color=green!20] (betap)-- ($(eta) + (3.5cm,-2cm)$) -- ($(eta) + (2.5cm,-2cm)$);
        \draw[color=green, thick] (betap)-- ($(eta) + (3.5cm,-2cm)$) -- ($(eta) + (2.5cm,-2cm)$) -- (betap);
        \fill[color=orange!20] (gammap) -- ($(eta) + (4.8cm,-2cm)$) -- ($(eta) + (3.8cm,-2cm)$);
        \draw[color=orange, thick] (gammap) -- ($(eta) + (4.8cm,-2cm)$) -- ($(eta) + (3.8cm,-2cm)$) -- (gammap);

        \draw[dotted, thick] ($(eta)+(3cm,+18mm)$)-- ($(eta) + (3cm,15mm)$) ;
        \draw[dotted, thick] ($(eta)+(1.4cm,-23mm)$)--($(eta) + (1.5cm,-2cm)$);
        \draw[dotted, thick] ($(eta)+(+20mm,-23mm)$)--($(eta) + (1.9cm,-2cm)$);
        \draw[dotted, thick] ($(eta)+(+25mm,-23mm)$)--($(eta) + (2.8cm,-2cm)$);
        \draw[dotted, thick] ($(eta)+(+30mm,-23mm)$)--($(eta) + (3cm,-2cm)$);
        \draw[dotted, thick] ($(eta)+(+35mm,-23mm)$)--($(eta) + (3.2cm,-2cm)$);
        \draw[dotted, thick] ($(eta)+(4cm,-23mm)$)--($(eta) + (4.1cm,-2cm)$);
        \draw[dotted, thick] ($(eta)+(+46mm,-23mm)$)--($(eta) + (4.5cm,-2cm)$);

        \draw[] ($(alphap)+(0,3mm)$)--(alphap) ;
        \draw[] ($(betap)+(0,3mm)$)--(betap) ;
        \draw[] ($(gammap)+(0,3mm)$)--(gammap) ;

        \node[fill=red, above right=5mm and 28mm of eta,  circle, label={[xshift=-3mm, yshift=-3mm]$x$}] (x1) {};
        \node[fill=red, below left=8mm and 4mm of x1,  circle, label={[xshift=-3mm, yshift=-3mm]$x$}] (x2) {};

        \draw[dotted, thick] ($(x1)+(-2mm,-5mm)$)--(x1) ;
        \draw[dotted, thick] ($(x1)+(+2mm,-5mm)$)--(x1) ;
        \draw[dotted, thick] ($(x2)+(-2mm,-5mm)$)--(x2) ;
        \draw[dotted, thick] ($(x2)+(+2mm,-5mm)$)--(x2) ;
        
        \node at (alphap) [below left = 3.5mm and -0.7cm of alphap, text width=1cm] {$\genericMap(\alpha)$};
        \node at (betap) [below left = 3.5mm and -0.7cm of betap, text width=1cm] {$\genericMap(\beta)$};
        \node at (gammap) [below left = 3.5mm and -0.7cm of gammap, text width=1cm] {$\genericMap(\gamma)$};

        \node[above right =2.3cm and 2.1cm of eta, text width =3cm] (){Source pattern};
        \node[right = 2.8cm of arrow1, text width=1cm]  {$\xrightarrow{\substitution}$};
        \node[ right = 4.5cm of eta]  (phanta) {};

        \coordinate (alphas) at ($(phanta) + (1.7cm,-1.5cm)$);
        \coordinate (betas) at  ($(phanta) + (3cm,-1.5cm)$) ;
        \coordinate (gammas) at ($(phanta) + (4.3cm,-1.5cm)$);

        \fill[color=blue!20] ($(phanta) + (3cm,18mm)$) -- ($(phanta) + (4.7cm,-1.2cm)$) -- ($(phanta) + (1.3cm,-1.2cm)$);		
        \draw[color=blue, thick] ($(phanta) + (3cm,18mm)$) -- ($(phanta) + (4.7cm,-1.2cm)$) -- ($(phanta) + (1.3cm,-1.2cm)$) -- ($(phanta) + (3cm,18mm)$);		
        \fill[color=magenta!20] (alphas) -- ($(phanta) + (2.2cm,-2.4cm)$) -- ($(phanta) + (1.2cm,-2.4cm)$);			
        \draw[color=magenta, thick] (alphas) -- ($(phanta) + (2.2cm,-2.4cm)$) -- ($(phanta) + (1.2cm,-2.4cm)$) -- (alphas);	
        \fill[color=green!20] (betas)-- ($(phanta) + (3.5cm,-2.4cm)$) -- ($(phanta) + (2.5cm,-2.4cm)$);		
        \draw[color=green, thick] (betas) -- ($(phanta) + (3.5cm,-2.4cm)$) -- ($(phanta) + (2.5cm,-2.4cm)$) -- (betas);							
        \fill[color=orange!20] (gammas) -- ($(phanta) + (4.8cm,-2.4cm)$) -- ($(phanta) + (3.8cm,-2.4cm)$);
        \draw[color=orange, thick] (gammas) -- ($(phanta) + (4.8cm,-2.4cm)$) -- ($(phanta) + (3.8cm,-2.4cm)$) -- (gammas) ;
    
        \draw[dotted, thick, text width=1cm] ($(phanta) + (3cm,18mm)$) -- ($(phanta)+(3cm,+21mm)$)  node[right , xshift = 1pt] {$q_{10}$} ;
        \draw[dotted, thick] ($(phanta)+(1.4cm,-27mm)$)--($(phanta) + (1.5cm,-2.4cm)$) ;
        \draw[dotted, thick] ($(phanta)+(+20mm,-27mm)$)--($(phanta) + (1.9cm,-2.4cm)$) ;
        \draw[dotted, thick] ($(phanta)+(+25mm,-27mm)$)--($(phanta) + (2.8cm,-2.4cm)$) ;
        \draw[dotted, thick] ($(phanta)+(+30mm,-27mm)$)--($(phanta) + (3cm,-2.4cm)$) ;
        \draw[dotted, thick] ($(phanta)+(+35mm,-27mm)$)--($(phanta) + (3.2cm,-2.4cm)$) ;
        \draw[dotted, thick] ($(phanta)+(4cm,-27mm)$)--($(phanta) + (4.1cm,-2.4cm)$) ;
        \draw[dotted, thick] ($(phanta)+(+46mm,-27mm)$)--($(phanta) + (4.5cm,-2.4cm)$) ;
    
        \draw[text width=1cm] ($(alphas)+(0,3mm)$)--(alphas) node[above right, , xshift = 1pt] {$q_{7}$}  ;
        \draw[text width=1cm] ($(betas)+(0,3mm)$)--(betas) node[above right, xshift = 1pt] {$q_{8}$} ;
        \draw[text width=1cm] ($(gammas)+(0,3mm)$)--(gammas) node[above right, xshift = 1pt] {$q_{9}$} ;
    
        \node[fill=red!20, draw = red,  above right=5mm and 27mm of phanta,  regular polygon,regular polygon sides=3, text width = 4mm, label={[text width=1cm, xshift=5pt, yshift=-8mm]$\substitution(x)$}] (x1) {};
        
        \node[fill=red!20,draw = red, regular polygon,regular polygon sides=3, text width = 4mm, below left=8mm and 4mm of x1, label={[text width=1cm, xshift=5pt, yshift=-8mm]$\substitution(x)$}] (x2) {};
        \draw[text width=1cm, dotted, thick] (x1) -- ($(x1)+(-2mm,-5mm)$) node[left, xshift = 21pt] {\color{red}$q_1$};
        \draw[text width=1cm, dotted, thick] (x1) -- ($(x1)+(+2mm,-5mm)$) node[right, xshift = 1pt] {\color{red}$q_2$};
        \draw[text width=1cm, dotted, thick] (x1) -- ($(x1)+(0mm,+8mm)$) node[right, xshift = 1pt] {\color{red}$q_3$};
        \draw[text width=1cm, dotted, thick] (x2) -- ($(x2)+(-2mm,-5mm)$) node[left, xshift = 20pt] {\color{red}$q_4$};
        \draw[text width=1cm, dotted, thick] (x2) -- ($(x2)+(+2mm,-5mm)$) node[right, xshift = 1pt] {\color{red}$q_5$};
        \draw[text width=1cm, dotted, thick] (x2) -- ($(x2)+(0mm,+8mm)$) node[left, xshift = 18pt] {\color{red}$q_6$};
            
        \node at (alphas) [text width=1cm, below left  = 6.5mm and -5.5mm of alphas] {\scriptsize $\substitution(\genericMap(\alpha))$};
        \node at (betas) [text width=1cm, below left  = 6.5mm and -5.5mm of betas] {\scriptsize $\substitution(\genericMap(\beta))$};
        \node at (gammas) [text width=1cm, below left  = 6.5mm and -5.5mm of gammas] {\scriptsize $\substitution(\genericMap(\gamma))$};
        \node[above right =2.5cm and 13mm of phanta, text width =4cm] () {Annotated Source tree};
    \end{tikzpicture}}
    \caption{ The source tree is annotated with states of $\automaton_{\sourceLanguage}$. We want to simulate the effect of running $\automaton_{\sourceLanguage}$ on the source tree in the seed tree itself. This means, when processing $\eta$, it should guess and validate the potential transformation of the tuple $(q_7, q_8, q_9)$ to $q_{10}$ by a context (colored blue) in the source tree that matches $\sourceMap(\eta)$, and simultaneously make sure that there is a substitution for $x$ that preserves the transformations $(q_1, q_2)$ to $q_3 $ as well as $(q_4, q_5)$ to $q_6 $. An alternating tree automaton on the seed tree can achieve this.}\label{fig:proofOutline}
\end{figure}
\clearpage

To solve the \metaTypecheckingProblemReformulationFull, we need to check for the existence of a seed tree $s$ giving rise to an atomic transformation $T_s$, a source tree $t_\src$, a target tree $t_\tgt$ and a substitution $\substitution$ satisfying two constraints. First, the source tree $t_\src$ should be in $\sourceLanguage$ and second, the target tree $t_\tgt$ should be in $\targetLanguage$. The source tree is obtained from the seed tree by first applying the homomorphism $\sourceMap$ and then applying the substitution $\substitution$. Suppose a node $\eta$ in the seed tree is transformed into a context as shown in the middle of Fig.~\ref{fig:proofOutline}. Suppose a variable $x$ is substituted with the tree $\substitution(x)$ as shown in the right side of Fig.~\ref{fig:proofOutline}. Consider a run of $\automaton_{\sourceLanguage}$ on the annotated source tree, in particular the state transformation ($(q_7,q_8,q_9)$ to $(q_{10})$) when the run parses the portion of the source tree that comes from the node $\eta$ of the seed tree. The idea of our decision procedure is to design a tree automaton (we will call this the \emph{\fsa} in the next two paragraphs) that parses the seed tree, and when it traverses the node $\eta$, it simulates $\automaton_{\sourceLanguage}$'s state transformation $(q_7,q_8,q_9)$ to $(q_{10})$. To determine that $(q_7,q_8,q_9)$ is transformed to $q_{10}$, the \fsa\ needs to know that $(q_1,q_2)$ is transformed to $q_3$ by $\substitution(x)$.

The \fsa\ doesn't know what is $\substitution$ (in fact, our goal is to check if a suitable $\substitution$ exists). There might be infinitely many substitutions $\substitution$ out of which one might work; but our \fsa\ is supposed to be a finite state automaton, incapable of picking one choice from infinitely many. To work around this, we observe that the actual tree $\substitution(x)$ is not important; the important thing is that it transforms $(q_1,q_2)$ to $(q_3)$. Any other $\substitution'(x)$ which does the same transformation will work equally well in place of $\substitution(x)$. The number of such transformations is finite (since they are transformations on a finite set of states) and the \fsa\ only needs to check if one of these finitely many transformations work.

One thing we ignored in the above explanation is that the variable $x$ may occur multiple times in the source pattern. In the run of $\automaton_{\sourceLanguage}$ on the source tree, one occurrence of $\substitution(x)$ may encounter $(q_1,q_2)$ (which is transformed to $q_3$). Some other occurrence of $\substitution(x)$ may encounter some other pair $(q_4,q_5)$ (which is transformed to some other state, say $q_6$). What we need is a substitution that transforms $(q_1,q_2)$ to $q_3$ and also transforms $(q_4,q_5)$ to $q_6$ (and does other transformations too, if there are more occurrences of $x$ in the source pattern). To capture all these transformations simultaneously, we introduce the concept of relational lifts that is formally defined in the next section. The \fsa\ nondeterministically picks out one such possible set of transformations to check that the source tree is in $\sourceLanguage$. It similarly needs to check that the target tree is in $\targetLanguage$. Alternating tree automata offer a convenient way to achieve all these tasks of the \fsa. The proof idea is to design an alternating tree automaton whose nonemptiness is equivalent to the existence of a seed tree that satisfies the two required constraints.


\section{Alternating Tree Automata and Relational Lifts}\label{sec:relationalLifts}

We introduce concepts that will be essential for the final proof of the theorem.

\begin{definition}[Alternating Top-down Tree Automata]\label{def:alternatingTreeAutomaton}
    An \emph{Alternating Top-down Tree Automaton} or \emph{ATTA} over the alphabet $\alphabet$ is a tuple $\automaton := (\automatonStateSet, \alphabet, \automatonTransitionsSet, \automatonInitialStatesSet)$, where $\automatonStateSet$ is a finite set of \emph{states}, $\automatonInitialStatesSet \subseteq \automatonStateSet$ is a set of initial states, and $\automatonTransitionsSet : \automatonStateSet \times \alphabet \to \B^+(\automatonStateSet \times \N)$ such that $\automatonTransitionsSet(q, a/k) \in \B^+(\automatonStateSet \times [k])$, for all $q \in \automatonStateSet, a/k \in \alphabet$. Here, for any set $X$, $\B^+(X)$ denotes the set of all positive boolean combinations of elements in $X$.
\end{definition}

The semantics of ATTA are defined using runs, which are themselves trees satisfying some conditions.
\begin{definition}[Run, Acceptance for ATTA]\label{def:acceptanceByAtta}
    Given a tree $t \in \trees{\alphabet}$ and an ATTA $\automaton$ over $\alphabet$, we define a \emph{run} of $\automaton$ on $t$ to be a tree $\run$ over $\automatonStateSet \times \N^*$ such that $\run(\e) = (q, \e)$ for some state $q$ and every position $u \in \Positions{\run}$ satisfies the following condition: if $\run(u) = (q, x), t(x) = a/k, \automatonTransitionsSet(q ,a) = \phi$, then there is a subset $S := \set{(q_1, i_1), \dots (q_n, i_n)} \subseteq \automatonStateSet \times [k]$ such that $S \models \phi$, the successor positions of $u$ in $\run$ are $\set{u1, \dots, un}$, and $\run(uj) = (q_j, xi_j)$ for all $j \in [n]$.

    A run is \emph{successful} if $\run(\e) = (q, \e)$ for some initial state $q \in \automatonInitialStatesSet$. A tree $t$ is \emph{accepted} by ATTA $\automaton$ if there exists at least one successful run of $\automaton$ on $t$. The set of all trees accepted by ATTA $\automaton$ is called the tree language \emph{recognised} by $\automaton$, and denoted by $\languageOf{\automaton} \subseteq \trees{\alphabet}$. A tree language $\genericLanguage \subseteq \trees{\alphabet}$ is called \emph{regular} if there exists an ATTA $\automaton$ such that $\genericLanguage = \languageOf{\automaton}$. The set of all regular tree languages over $\alphabet$ is denoted by $\regularOver{\alphabet} \subseteq 2^{\trees{\alphabet}}$.
\end{definition}

Suppose $\automaton$ is a word automaton with set of states $\automatonStateSet$ and transition relation given by $\automatonTransitionsSet : \alphabet \to \relations(\automatonStateSet,2)$, we can extend it to $\widetilde{\automatonTransitionsSet} : \alphabet^* \to \relations(\automatonStateSet,2)$ such that $(q_1,q_2) \in \widetilde{\automatonTransitionsSet}(w)$ iff $\automaton$ has a run on the word $w$ starting from $q_1$ and ending at $q_2$. The following defines a similar extension for NBTTS.

\begin{definition}[Extended Transition Relation(\extendedTransitionRelation)]\label{def:extendedTransitionRelation}
    Let $\transitionSystem := (\automatonStateSet, \alphabet, \automatonTransitionsSet)$ be a NBTTS. We extend the transition function $\automatonTransitionsSet$ to a larger function $\widetilde{\automatonTransitionsSet} : \contexts{\alphabet}{} \to \relations(\automatonStateSet)$ called the \emph{Extended Transition Relation} or \emph{\extendedTransitionRelation} as follows. Let $c \in \contexts{\alphabet}{k}$. For a tuple $\tuple{q} := (q_1, q_2, \dots, q_k)$ of states, define the NBTTS $\transitionSystem_{\tuple{q}} := (\automatonStateSet, \alphabet \uplus \set{\square_1/0, \square_2/0, \dots, \square_k/0}, \automatonTransitionsSet \uplus \set{(\square_i, \{q_i\}) \mid i \in [k]})$. Let us define the set of results to be the set given by $\Results(\tuple{q}) := \set{(\tuple{q}, q) \in \automatonStateSet^{k+1} \mid q \mbox{ is the result of a run of $\transitionSystem_{\tuple{q}}$ on $c$.}}$. Finally, we also define the set $\widetilde{\automatonTransitionsSet}(c) := \bigcup_{\tuple{q} \in \automatonStateSet^k} \Results(\tuple{q})$. We also refer to the relation $\widetilde{\automatonTransitionsSet}(c)$ as the \emph{relation induced} by the context $c$ on the states of the NBTTS $\transitionSystem$. Note that $\widetilde{\automatonTransitionsSet}(a) = \automatonTransitionsSet(a),~\forall a \in \alphabet$.
\end{definition}

Suppose a word $w_1$ is extended by appending another word $w_2$ to its end. Now, $\widetilde{\automatonTransitionsSet}(w_1 \cdot w_2)$ is obtained by simply composing the relations $\widetilde{\automatonTransitionsSet}(w_1)$ and $\widetilde{\automatonTransitionsSet}(w_2)$. For trees, it is more complicated: a tree $t_1$ can have multiple ``end points'' where it can be extended by appending other trees. We will need a more complicated way of composing relations.

\begin{definition}[Relational Lifts]\label{def:relationalLift}
    Let $X$ be a set and $n \in \N_0$ be a nonnegative integer. Let $\operator \subseteq X^{n+1}$ be a relation. We define the \emph{relational lift} of $\operator$ as a new relation $\widetilde{\operator} \subseteq {\relations(X)}^{n+1}$ given as follows: For $R_1, \dots, R_n, R \in \relations(X)$, we say that $(R_1, R_2, \dots, R_n, R) \in \widetilde{\operator}$ if and only if
    \begin{itemize}
        \item $R \in \relations\left(X, 1+\sum_{j=1}^n i_j\right)$, where $i_j \in \N_0$ for all $j \in [n]$ are such that $R_j \in \relations(X, i_j+1)$.
        \item For all $j \in [n]$, for all $\tuple{x}_j \in X^{i_j}$, we have $(\tuple{x}_1, \tuple{x}_2, \dots, \tuple{x}_n, y) \in R$ iff there exist $y_1, y_2, \dots, y_n \in X$ such that $(\tuple{x}_j, y_j) \in R_j$ and $(y_1, y_2, \dots, y_n, y) \in \operator$.
    \end{itemize}    
\end{definition}

We can think about relational lift as a procedure to compose arbitrarily many relations. Another, more abstract way to think about it is as a way of ``lifting'' an operator over a set to an operator over relations over that same set. In case of automata running on words, for any $a \in \alphabet$ and any $w_1 \in \alphabet^*$, if we set $\operator := \automatonTransitionsSet(a)$, then $(\widetilde{\automatonTransitionsSet}(w_1), \widetilde{\automatonTransitionsSet}(w_1 \cdot a)) \in \widetilde{\operator}$. In case of NBTTS running on contexts, suppose $c_1, \ldots, c_k$ are the children of a node labeled by the letter $a/k$ and we set $\operator := \automatonTransitionsSet(a) \subseteq \automatonStateSet^{k+1}$, then $(\widetilde{\automatonTransitionsSet}(c_1), \ldots, \widetilde{\automatonTransitionsSet}(c_k), R) \in \widetilde{\operator}$ implies that $R = \widetilde{\automatonTransitionsSet}(a(c_1, \ldots, c_k))$.

The following problem and its complexity is an important intermediate technical lemma used later.
\begin{problemDefinition}
    \problemtitle{\relationRealisabilityFull}
    \probleminput{NBTTS $\transitionSystem := (\automatonStateSet, \alphabet, \automatonTransitionsSet)$, relation $R \in \relations(\automatonStateSet)$, NBTA $\D$.}
    \problemquestion{Is $R$ \emph{realisable} by $\transitionSystem$ under $\D$ i.e., Does there exist $c \in \contexts{\alphabet}{} \cap \languageOf{\D}$ such that $\widetilde{\automatonTransitionsSet}(c) = R$?}
\end{problemDefinition}

  As a matter of nomenclature, if the above condition holds for a given $\transitionSystem, R, \D$, then we say that the relation $R$ is \emph{realisable} by $\transitionSystem$ under $\D$.

\begin{theorem}\label{thm:relationRealisabilityExptc}
The \textsc{\relationRealisabilityFull}\ is \exptc.
\end{theorem}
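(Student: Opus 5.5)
We prove the two bounds separately. Throughout, $R \in \relations(\automatonStateSet, k+1)$ for some $k$, so we are looking for a $k$-context $c \in \languageOf{\D}$ with $\widetilde{\automatonTransitionsSet}(c) = R$. Let $\D = (P, \alphabet \uplus \set{\square_1, \dots, \square_k}, \Delta_\D, F_\D)$; if $\D$ reads holes as leaf letters, it already runs on contexts.

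\textbf{Upper bound.} We run a bottom-up saturation over \emph{abstract states} of partial contexts. The abstract state of a partial context $c'$ is the triple $(J, \widetilde{\automatonTransitionsSet}(c'), P')$, where:
\begin{itemize}
\item $J \subseteq [k]$ is the set of holes occurring in $c'$;
\item $\widetilde{\automatonTransitionsSet}(c')$ is viewed as a relation over $\automatonStateSet^{|J|+1}$, indexed by $J$;
\item $P' \subseteq P$ is the set of states $\D$ can reach on $c'$.
\end{itemize}
The subset construction on $P$ is not needed: a single state of $\D$ suffices, since we only need some accepting run. So we track triples $(J, S, p)$.

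Leaves give the base triples. A hole $\square_i$ gives $(\set{i}, \set{(q,q) \mid q \in \automatonStateSet}, p)$ for each $p$ with $(\square_i, p) \in \Delta_\D$. A letter $a/0$ gives $(\emptyset, \automatonTransitionsSet(a), p)$.

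For an inner letter $a/m$, combine children with triples $(J_1,S_1,p_1), \dots, (J_m,S_m,p_m)$, where the $J_j$ are pairwise disjoint. The new relation $S$ is the relational lift $\widetilde{\operator}$ of $\operator := \automatonTransitionsSet(a)$ applied to $S_1,\dots,S_m$, with coordinates reordered by hole index. As remarked after Definition~\ref{def:relationalLift}, this lift correctly computes $\widetilde{\automatonTransitionsSet}(a(c_1,\dots,c_m))$; correctness follows by induction on contexts. The new $\D$-state $p$ is any $p$ with $(p_1,\dots,p_m,p) \in \Delta_\D(a)$.

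The number of triples is at most
\[ 2^k \cdot 2^{\abs{\automatonStateSet}^{k+1}} \cdot \abs{P}. \]
We answer yes iff the triple $([k], R, p)$ with $p \in F_\D$ is ever generated. The difficulty is that this count is doubly exponential in $k$ unless $k$ is bounded. The input $R$ is given explicitly, but it may be small while $\abs{\automatonStateSet}^{k+1}$ is huge. My first attempt is to note that the hole count $k$ is at most the size of $R$'s description (roughly, since tuples in $R$ have length $k+1$), so $\abs{\automatonStateSet}^{k+1} \le \abs{\automatonStateSet}^{\abs{R}}$, which is still exponential in the input size. That yields only $2^{2^{\mathrm{poly}}}$.

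To reach \expt\ I would refine the saturation per target tuple. For each $\tuple{q} \in \automatonStateSet^k$, the saturation can be organised as an alternating polynomial-space procedure (equivalently, an ATTA emptiness check) that guesses the context top-down. It carries, for the current subtree, the set of "obligations" $(\tuple{q}|_J, q)$ that must hold, together with those that must fail. The positive obligations number at most $\abs{R}$ and the negative ones at most $\abs{\automatonStateSet}^{k+1} - \abs{R}$. The negative ones are what keep this from being simple, and this is the step I expect to be the main obstacle.

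I would try to encode both kinds of obligation as states of an ATTA on the context. Membership $(\tuple{q}, q) \in \widetilde{\automatonTransitionsSet}(c)$ is a nondeterministic top-down check of polynomial size. Non-membership is universal, so it can be handled by alternation using the dual transition formulas. The resulting conjunction over all tuples can be large, but each tuple's check has polynomial size, and a conjunction of polynomially many such checks relative to the input (the tuples are listed in $R$, or enumerated as $\automatonStateSet^{k+1}$ if $k$ is taken as part of the input's size measure) gives an ATTA of polynomial size. Its emptiness is in \expt. We intersect with $\D$ by adding $\D$'s states to the ATTA.

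\textbf{Lower bound.} We reduce from \intersectionNonemptinessFull, as in the proof of Theorem~\ref{thm:typecheckingProblem}. Take the disjoint union of DBTAs $\automaton_1, \dots, \automaton_n$ as the NBTTS $\transitionSystem$, and use $0$-contexts, i.e.\ trees. Then $\widetilde{\automatonTransitionsSet}(t)$ is the set of states each $\automaton_i$ reaches on $t$, and it contains exactly one state per component.

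Asking that this relation equal some prescribed $R$ is not directly possible, because we do not know which states are reached. Instead, add a fresh unary letter $\flat$ to each component. Reading $\flat$ maps final states to a fresh state $f_i$ and non-final states to a fresh state $g_i$. Let $\D$ accept exactly the trees rooted at $\flat$ whose subtree lies in $\trees{\alphabet}$. Set $R := \set{f_1, \dots, f_n}$. Then $R$ is realisable iff some tree lies in every $L(\automaton_i)$. This reduction is polynomial, which gives \expth.
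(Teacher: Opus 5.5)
The lower bound is correct. It is a mild variant of the paper's reduction, using a unary $\flat$ on $0$-contexts instead of the gadget $\#(t,\square)$. If anything it is cleaner, since your $R$ is exactly the induced relation.

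The gap is in the upper bound: the ATTA refinement does not remove the $|Q|^{k+1}$ dependence it was introduced to remove.
\begin{itemize}
\item There are $|R|$ positive obligations, but the negative ones range over all of $Q^{k+1}\setminus R$.
\item Each dualised check has its excluded tuple hard-wired into its states, so your ATTA has roughly $|Q|\cdot(|Q|^{k+1}-|R|)$ states.
\item When $R$ is a short explicit list and $k$ is large, this is exponential in the input. The emptiness check then only gives a doubly exponential bound, which is exactly what you rejected for the saturation.
\item A single dualised copy cannot replace these checks. The condition $\widetilde{\Delta}(c)\subseteq R$ says that every run of $\transitionSystem$ with free hole states has its boundary tuple in $R$. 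The states at different holes are chosen in independent branches of the ATTA run, and those choices cannot be correlated.
\end{itemize}
What actually makes your ATTA polynomial is the parenthetical ``if $k$ is taken as part of the input's size measure'', i.e.\ $|Q|^{k+1}\le\mathrm{poly}(|\mathrm{input}|)$ (bounded arity, or $R$ given as a table over $Q^{k+1}$). Under that convention, however, your first saturation already runs in time $(2^k\cdot 2^{|Q|^{k+1}}\cdot|P|)^{O(\ell)}$, with $\ell$ the maximal arity in $\alphabet$. That is single exponential, so the detour buys nothing.

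So make the convention explicit and stop after the saturation. It is the paper's proof: an NBTA whose states are relations and whose transitions are relational lifts, intersected with $\D$. The paper's own count $\sum_{i\le k+1}2^{|Q|^i}$ of relation-states is exponential in $|Q|^{k+1}$, so it silently uses the same convention. Your version is in fact more careful: it tracks the set of holes, reorders coordinates by hole index, and uses the identity relation at holes.

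If you really want \expt{} for a listed $R$ with unbounded $k$, the missing idea is not alternation but a coarser abstraction.
\begin{itemize}
\item At each node, keep the set of states $y$ whose fibre $\{\tuple{x} \mid (\tuple{x},y)\in\widetilde{\Delta}(c')\}$ is nonempty.
\item Keep exact fibres only for a guessed set of ``relevant'' states. These are the states occurring, together with siblings whose fibres are nonempty, in a transition that produces a relevant state of the parent. At the root, every state with nonempty fibre is relevant.
\item The lift forms products of the children's fibres, so a relevant fibre is no larger than the parent fibre it feeds. By induction from the root, where $\widetilde{\Delta}(c)=R$, it has at most $|R|$ tuples.
\item Abstract states then have polynomial-size descriptions, so there are only $2^{\mathrm{poly}}$ of them, and the saturation stays in \expt.
\end{itemize}
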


\begin{proof}
    (Lower Bound) We show a reduction from \intersectionNonemptinessFull\ to \relationRealisabilityFull. We recall the problem \intersectionNonemptinessFull\ as:

    \begin{problemDefinition}
        \problemtitle{\intersectionNonemptinessFull}
        \probleminput{DBTAs $\automaton_1, \automaton_2, \dots, \automaton_n$.}
        \problemquestion{Do we have that $\bigcap_{i=1}^n L(\automaton_i) \neq \emptyset$?}
    \end{problemDefinition}

    We know from~\cite{tata} that \intersectionNonemptinessFull\ is \exptc. Let $\alphabet$ be a given finite alphabet, and let $\automaton_1, \dots, \automaton_n$ be an instance of \intersectionNonemptinessFull, where $\automaton_i := (Q_i, \alphabet, \Transitions_i, F_i)$ are DBTAs. If $F_i = \emptyset$ for some $i \in [n]$, we must have $L(\automaton_i) = \emptyset$ and hence the intersection $\bigcap_{i=1}^n L(\automaton_i) = \emptyset$. Hence we assume that $F_i \neq \emptyset$ for all $i \in [n]$.

    Consider NBTA $\automatonB := \left((\biguplus_{i=1}^n Q_i) \uplus \{\top\}, \seedAlphabet := \alphabet \uplus \{\#/2\}, \Transitions, \{\top\} \right)$, with $\Transitions : \alphabet \uplus \set{\#/2} \to \relations(Q)$ is defined as
    \[ \twopartfunc{\Transitions(a)}{\biguplus_{i=1}^n \Transitions_i(a)}{a \in \alphabet}{\{(q, q', \top) \mid q, q' \in F_i \mbox{ for some } i \in [n] \}}{a = \#} \]
    Let $R \in \relations(Q, 2)$ be given by $R := \set{(f_i, \top) \mid i \in [n]}$, where each $f_i$ is an arbitrarily chosen but fixed final state in $F_i$ for each $i \in [n]$. Also consider DBTA \shape\ which recognises all contexts of the form $\#(t, \square)$ where $t \in \trees{\alphabet}$. Consider the instance of the realisability problem given by $(\automatonB, R, \shape)$.

    We can see that $R$ is realisable by $\automatonB \cap \shape$ if and only if there exists a tree $t \in L(\automaton_1) \cap \dots \cap L(\automaton_n)$. Clearly, $R$ is realisable by $\automatonB \cap \shape$ $\iff$ there exists a context $c := \#(t, \square)$ which induces the relation $R$ on the states of $\automatonB$ under $\shape$ $\iff$ for each $i \in [n]$ and the labelling of the hole $\square$ with $f_i$, the result of a run of $\automatonB$ on $t$ yields a final state $f_i' \in F_i$ $\iff$ $t$ is recognised by $\automaton_i$ for each $i \in [n]$ $\iff$ $t \in \bigcap_{i=1}^n L(\automaton_i)$. Since we know that \intersectionNonemptinessFull\ is \expth, we conclude that the realisability problem is also \expth.

    \textbf{(Upper Bound)} Let NBTTS $\transitionSystem := (Q, \alphabet, \Transitions)$ be a transition system, relation $R \in \relations(Q, k+1) \subseteq \relations(Q)$, and NBTA $\D$ be a given instance of \relationRealisabilityFull. We reduce the problem of finding a context $c \in \contexts{\alphabet}{k} \cap L(\D)$ such that $\widetilde{\Transitions}(c) := R$ to the problem of NBTA nonemptiness. We will define an NBTA $\automaton$ with relations over $Q$ as states such that for any context $c$, the result of a run of $\automaton$ on $c$ is precisely the relation induced by $c$. More formally, we define a NBTA $\automaton$ as: $\automaton := (\bigcup_{i=1}^{k+1} \relations(Q, i), \alphabet \uplus \{\square_i/0 \mid i \in [k]\}, \Transitions', \{R\})$, where for all $a/p \in \alphabet$ we set $\Transitions'(a) := \widetilde{\Transitions(a)}$, the relational lift of $\Transitions(a)$. We also set $\Transitions'(\square_i) := Q$ for all $i \in [k]$.

    A straightforward induction on the height of a tree gives us that for a context $c \in \contexts{\alphabet}{k} \cap L(\D)$, we have $\widetilde{\Transitions}(c) = R$ if and only if $c \in L(\automaton) \cap L(\D)$. Hence to find a context which `realises' the relation $R$, all we need to do is check $\automaton \cap \D$ for nonemptiness. Let automaton $\D$ have $m$ states. We know from~\cite{tata} that checking this NBTA for nonemptiness can be performed in time which is a polynomial in $|Q|$. Since we have that $m|Q| := m|\bigcup_{i=1}^{k+1} \relations(Q, i)| = m\sum_{i=1}^{k+1}|\relations(Q, i)| = m\sum_{i=1}^{k+1}2^{|Q|^i}$, we see that given $\transitionSystem, R, \D$, the problem of \relationRealisabilityFull\ can be solved in \expt.
\end{proof}

In the proof for the above result, we also get that for each relation which we are checking for realisability, we can also get, in \expt, a context over the alphabet which realises it(if it is realisable). In such a case, we will say that the relation $R$ is \emph{realised} by the context $c$ on the NBTTS $\transitionSystem$, under $\D$. We also say that $c$ \emph{induces} $R$ on $\transitionSystem$ under $\D$.  For a given pair $(\transitionSystem, \D)$ with NBTTS $\transitionSystem$ and NBTA $\D$, we define the set $\realisableRelations{\transitionSystem}{\D} := \{(R, c) \mid \mbox{ $R$ is realised by $c$ on $\transitionSystem$ under $\D$}\}$.


\section{Deciding Transformation Reachability}\label{sec:proofDetails}

Let $\fullTransform := \left((\seedLanguage, \sourceMap, \targetMap, \guardFunction), \sourceLanguage, \targetLanguage \right)$ be a given instance of \metaTypecheckingProblemReformulationFull, where
\begin{itemize}
    \item $\seedLanguage$ is the seed language over $\seedAlphabet$ given by the NBTA $\automaton := (P, \seedAlphabet, \Theta, E)$.
    \item $\sourceMap, \targetMap : \trees{\seedAlphabet} \to \trees{\alphabet \uplus \variables}$ are tree homomorphisms.
    \item $\guardFunction$ is a guard over $\alphabet$ such that $\variables(\sourceMap) \cup \variables(\targetMap) \subseteq \dom(\guardFunction)$, where we define(by abuse of notation) $\variables(\sourceMap) := \bigcup_{\gamma \in \seedAlphabet} \variables(\sourceMap(\gamma))$ and also $\variables(\targetMap) := \bigcup_{\gamma \in \seedAlphabet} \variables(\targetMap(\gamma))$.
    \item $\sourceLanguage, \targetLanguage$ are the source and target languages respectively, which are regular tree languages given by the NBTAs $\automatonB_i := (Q_i, \alphabet, \Transitions_i, F_i)$ for $i \in \set{\src, \tgt}$.
\end{itemize}

We see that solving the \metaTypecheckingProblemReformulationFull\ is equivalent to finding whether there exists $\tau \in \seedLanguage$ and a substitution $\substitution$ such that $\substitution(\phi_i(\tau)) \in \genericLanguage_i$ for $i \in \set{\src, \tgt}$. Let $J \subseteq \seedLanguage$ be given by $J := \set{\tau \in \seedLanguage \mid \exists\substitution\mbox{ such that }\substitution(\phi_i(\tau)) \in \genericLanguage_i, i \in \set{\src, \tgt}}$. We will construct a number of ATTAs such that the union of the languages they recognise will be equal to $J$. This will imply that solving \metaTypecheckingProblemReformulationFull\ will be a matter of checking each of the correspondingly constructed ATTAs for nonemptiness.

For each $x \in \variables(\sourceMap) \setminus \variables(\targetMap)$, using Theorem~\ref{thm:relationRealisabilityExptc} we compute a priori, the set {of relations} $\realisableRelations{\sourceAutomatonB}{\guardOf{x}}$ of all the relations and corresponding contexts realisable by $\sourceAutomatonB$ under $\guardOf{x}$. Similarly for each variable $y \in \variables(\targetMap) \setminus \variables(\sourceMap)$, we compute $\realisableRelations{\targetAutomatonB}{\guardOf{y}}$.
For all variables $z \in \variables(\sourceMap) \cap \variables(\targetMap)$, we compute the set
\[ \realisableRelations{\sourceAutomatonB \cap \targetAutomatonB}{\guardOf{z}} \subseteq \relations(Q_\src \times Q_\tgt) \times \contexts{\alphabet}{}. \]

From this set, we obtain another set $\realisableRelations{\sourceAutomatonB, \targetAutomatonB}{\guardOf{z}}$ of pairs of relations that are \emph{simultaneously} realisable by $\sourceAutomatonB$ and $\targetAutomatonB$ along with a context $c \in L(\guardOf{z})$ that realises both of them. The idea  is that given an element $((q_\src^{(1)}, q_\tgt^{(1)}), (q_\src^{(2)}, q_\tgt^{(2)}), \dots, (q_\src^{(k+1)}, q_\tgt^{(k+1)})) \in R$, we `split' it into two components $(q_\src^{(1)}, q_\src^{(2)}, \dots, q_\src^{(k+1)})$ and $(q_\tgt^{(1)}, q_\tgt^{(2)}, \dots, q_\tgt^{(k+1)})$. We do this for all elements in $R$ to get a pair of relations $R_\src, R_\tgt$ over $Q_\src, Q_\tgt$ respectively. Then, we define $\realisableRelations{\sourceAutomatonB, \targetAutomatonB}{\guardOf{z}}$ as
\[  := \{((R_\src, R_\tgt), c) \mid (R, c) \in \realisableRelations{\sourceAutomatonB \cap \targetAutomatonB}{\guardOf{z}}\}. \]
In other words, we have $\realisableRelations{\sourceAutomatonB, \targetAutomatonB}{\guardOf{z}} := \{((R_\src, R_\tgt), c) \mid c \in L(\guardOf{z}), \widetilde{\Transitions_i}(c) = R_i, i \in \{\src, \tgt\}\}$.

We fix a pair of choice functions of realisable relations $R_\src, R_\tgt$ over $\variables(\sourceMap), \variables(\targetMap)$ respectively such that
\[ \twopartfunc{R_\src(z)}{R \in \pi_1[\realisableRelations{\automatonB_s}{\guardOf{z}}]}{z \in \variables(\sourceMap) \setminus \variables(\targetMap)}{R' \in \pi_1[\pi_1[\realisableRelations{\sourceAutomatonB, \targetAutomatonB}{\guardOf{z}}]]}{z \in \variables(\sourceMap) \cap \variables(\targetMap)} \]
\[ \twopartfunc{R_\tgt(z)}{R \in \pi_1[\realisableRelations{\targetAutomatonB}{\guardOf{z}}]}{z \in \variables(\targetMap) \setminus \variables(\sourceMap)}{R'' \in \pi_2[\pi_1[\realisableRelations{\sourceAutomatonB, \targetAutomatonB}{\guardOf{z}}]]}{z \in \variables(\sourceMap) \cap \variables(\targetMap)} \]
and $(R', R'') \in \pi_1[\realisableRelations{\sourceAutomatonB, \targetAutomatonB}{\guardOf{z}}]$ for all $z \in \variables(\sourceMap) \cap \variables(\targetMap)$. Note that this also provides us with corresponding contexts which realise the chosen relations.

Armed with this, we compute the \extendedTransitionRelation{} s\ $\widetilde{\Transitions_i}(\phi_i(\gamma))$ for all $\gamma \in \seedAlphabet$ and $i \in \{\src, \tgt\}$ by simply considering each $x \in \variables(\fullTransform)$ as a separate letter with the `transition relation' given by the corresponding $R_i(x)$ and running the corresponding automaton on it.

\begin{lemma}\label{lem:constructionOfAttaZ}
    For each pair of choice functions $R_\src, R_\tgt$, there exists an ATTA $\Z_{R_\src, R_\tgt}$ of size $\bigo(\max{\abs{Q_\src}, \abs{Q_\tgt}})$ such that
    \[ L(\Z_{R_\src, R_\tgt}) = \set{\tau \mid \substitution(\sourceMap(\tau)) \in \sourceLanguage, \substitution(\targetMap(\tau)) \in \targetLanguage} \]
    where $\substitution$ is a function which assigns to each variable $x$, a context that induces the relation $R_i(x)$ for $i \in \{\src, \tgt\}$ and satisfies the guard $\guardFunction$.
\end{lemma}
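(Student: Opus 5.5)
The ATTA $\Z_{R_\src, R_\tgt}$ will run top-down on the seed tree $\tau$ and guess, at each seed node $\eta$, the $\sourceAutomatonB$-state and the $\targetAutomatonB$-state that the runs reach at the root of the part of the source and target trees generated from $\eta$. It verifies these guesses using the precomputed extended transition relations $\widetilde{\Transitions_i}(\phi_i(\gamma))$, where each variable $x$ is treated as a letter with transition relation $R_i(x)$. Intersection with the seed language is handled by a conjunct: each transition also spawns a copy of the top-down version of the seed automaton $\automaton$. This copy is used either in product with the main states or as a separate conjunctive thread. The latter is why alternation is convenient.

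\textbf{States and transitions.} I would take as states the disjoint union $Q_\src \uplus Q_\tgt \uplus P$, with initial formula $\bigvee_{f\in F_\src} f \wedge \bigvee_{f'\in F_\tgt} f' \wedge \bigvee_{e\in E} e$. The initial formula can be encoded by adding one fresh initial state whose transition is this formula with the first step unfolded. For $q \in Q_i$ and a seed letter $\gamma/k$, I set
\[ \automatonTransitionsSet(q,\gamma) := \bigvee_{(q_1,\dots,q_k,q)\in \widetilde{\Transitions_i}(\phi_i(\gamma))} \ \bigwedge_{j\in[k]} (q_j, j). \]
For $p\in P$ the transition is the usual top-down reading of $\Theta$. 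Since $\phi_i(\gamma)$ is a context over $\alphabet\uplus\variables$ whose holes are the homomorphism variables $x_1,\dots,x_k$, the relation $\widetilde{\Transitions_i}(\phi_i(\gamma))$ lies in $\relations(Q_i,k+1)$. I need it to be well defined even when some $x_j$ occurs several times or not at all in $\phi_i(\gamma)$. For this I define it directly as the set of tuples $(q_1,\dots,q_k,q)$ such that there is a run on $\phi_i(\gamma)$ with every leaf $x_j$ labelled $q_j$, every variable node $z$ respecting $R_i(z)$, and root $q$. This is exactly the ``run the automaton treating $z$ as a letter'' computation described just before the lemma. The state count is linear in $|Q_\src|+|Q_\tgt|+|P|$, giving the claimed size, up to the handling of $P$.

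\textbf{Correctness.} I would prove by induction on the height of a seed subtree $\tau'$ that, for $q\in Q_i$, the ATTA accepts $\tau'$ from $q$ if and only if $\substitution(\phi_i(\tau'))$ has a run of $\automatonB_i$ ending in $q$. Here $\substitution$ is any substitution mapping each variable $z$ to a context $c_z \in \guardOf{z}$ with $\widetilde{\Transitions_i}(c_z) = R_i(z)$. Such contexts exist by the definition of the choice functions. For shared variables, a single $c_z$ realises both $R_\src(z)$ and $R_\tgt(z)$, by the construction of $\realisableRelations{\sourceAutomatonB, \targetAutomatonB}{\guardOf{z}}$. The inductive step decomposes $\substitution(\phi_i(\gamma(\tau_1,\dots,\tau_k)))$ as the context $\substitution(\phi_i(\gamma))$ with the trees $\substitution(\phi_i(\tau_j))$ plugged into the occurrences of $x_j$.

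\textbf{Main obstacle.} The key lemma is compositionality: a run of $\automatonB_i$ on a tree built by substituting contexts into variable nodes exists if and only if there is a run on the skeleton in which each variable node is labelled consistently with the relation induced by its context. This is the relational-lift property $(\widetilde{\Transitions}(c_1),\dots,\widetilde{\Transitions}(c_k),\widetilde{\Transitions}(a(c_1,\dots,c_k)))\in\widetilde{\operator}$ stated after Definition~\ref{def:relationalLift}, applied inductively over the skeleton $\phi_i(\gamma)$.

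The subtle point is multiple occurrences of the same variable. Each occurrence gets the same context $c_z$, but different occurrences may be traversed with different state tuples. This is harmless because $\widetilde{\Transitions_i}(c_z)$ captures all input/output tuples of $c_z$, so the occurrences impose independent constraints, each drawn from $R_i(z)$.

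Duplicated homomorphism variables $x_j$ need separate care. If $x_j$ appears twice in $\phi_i(\gamma)$, then $\tau_j$'s image appears twice. My transition formula forces both copies to carry the same state $q_j$, but a real run could give the two copies different states. I would fix this by letting the run of $\widetilde{\Transitions_i}(\phi_i(\gamma))$ label each occurrence of $x_j$ independently, and conjoining $(q_j^{(\ell)},j)$ over all occurrences $\ell$. The alternation then sends several states down to child $j$, and this conjunction is exactly what the ATTA provides.
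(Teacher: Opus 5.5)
Your proposal is correct and is essentially the paper's construction. The per-occurrence conjunction in your last paragraph is exactly the paper's destination-function formula $\bigwedge_{\ell\in[k]}\bigwedge_{j\in D_{\phi_i}(\gamma)(\ell)}(q_j,\ell)$, and the rest also matches: the separate $P$-thread, variables treated as letters carrying $R_i(z)$, and shared variables realised by a single context via the product automaton. Let that version replace your first displayed formula outright, since the first one also sends a state to any child $j$ whose $x_j$ is erased by $\phi_i(\gamma)$, which wrongly rejects when $\mathcal{B}_i$ is not complete. Your induction on height via compositionality of induced relations, and unfolding the first step at a fresh initial state, make rigorous the paper's informal simulation argument and its $\varepsilon$-transition from $\mathsf{start}$.
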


We first give an explicit construction of the ATTA $\Z_{R_\src, R_\tgt}$. To describe it succinctly, we define a few related concepts.

\begin{definition}[Destination Function]\label{def:destinationFunction}
Let $\seedLanguage, \genericLanguage$ be tree languages over finite ranked alphabets $\seedAlphabet, \alphabet$ respectively and let $\genericMap : \trees{\seedAlphabet} \to \trees{\alphabet}$ be a tree homomorphism. For each letter $\gamma/k \in \seedAlphabet$ with $m := \arity(\genericMap(\gamma))$, we define a function $\destinationFunction{\genericMap}{\gamma} : [k] \to 2^{[m]}$ as follows. Since $\genericMap(\gamma)$ is a context of arity $m$, we label the holes of $\genericMap(\gamma)$ by the set $[m]$ according to the precedence in the inorder traversal of $\genericMap(\gamma)$. Then, $\destinationFunction{\genericMap}{\gamma}(i) := \{j \in [m] \mid \mbox{ $j^{th}$ hole of $\genericMap(\gamma)$ is labelled $u_i$}\}$ for all $i \in [k]$. 
\end{definition}

We illustrate this with an example. Let $\gamma/5 \in \seedAlphabet$ be a letter and let $\genericMap(\gamma/5)$ be given as in Figure~\ref{fig:destinationFunctionExample}. The numbers in blue denote the order of the vertex in the inorder traversal of $\genericMap(\gamma)$.

\begin{figure}[!ht]
    \begin{center}
    \scalebox{1}{\begin{tikzpicture}[level distance=3em]
        \begin{scope}[yshift=-2cm]
            \node [sibling distance=1em] (root) {$\gamma$}
            child { node {$u_{1}$} }
            child { node {$u_{2}$} }
            child { node {$u_{3}$} }
            child { node {$u_{4}$} }
            child { node {$u_{5}$} };
            \node at ([xshift=3cm]root) {$\xrightarrow{\genericMap}$};
        \end{scope}
        \begin{scope}[xshift=6cm]
            \node {$a$}
            child[grow=south west] { 
            node{$x$}
            child { node{$u_1\blue{(1)}$} }
            child { node{$u_3\blue{(2)}$} }
            }
            child[grow=south] {
            node {$b$}
            child[grow=south] {
                node{$y$}
                child { node{$u_1\blue{(3)}$} }
                child {
                node{$x$}
                child { node{$u_2\blue{(4)}$} }
                child { node{$u_4\blue{(5)}$} }
                }
                child { node{$u_1\blue{(6)}$} }
            }
            child[grow=south east] {
            node{$a$}
            child[grow=south east] { node{$c$} }
            child[grow=east] { node{$u_3\blue{(7)}$}}}};
        \end{scope}
    \end{tikzpicture}}
    \caption{Example computation of $\destinationFunction{\genericMap}{\gamma}$}\label{fig:destinationFunctionExample}
    \end{center}
\end{figure}

Then $\destinationFunction{\genericMap}{\gamma} : [5] \to 2^{[7]}$ would be given by

\begin{tabular}{|c||c|c|c|c|c|}
\hline
$i$ & 1 & 2 & 3 & 4 & 5 \\\hline
$\destinationFunction{\genericMap}{\gamma}(i)$ & $\set{1, 3, 6}$ & $\set{4}$ & $\set{2, 7}$ & $\set{5}$ & $\emptyset$ \\\hline
\end{tabular}

Note that as demonstrated by this example, in each case we must necessarily have that the set $\set{\destinationFunction{\genericMap}{\gamma}(i) \mid i \in [k]}$ is a partition of $[\arity(\genericMap(\gamma))]$. Intuitively, we also consider the function $\destinationFunction{\genericMap}{\gamma}$ as a kind of ``inverse'' of the origin function as mentioned in~\cite{BojanOrigin14}.

Coming to the construction of the ATTA, for each pair of choice functions $R_\src, R_\tgt$, we define ATTA $\Z_{R_\src, R_\tgt}$ given by
\[ \Z_{R_\src, R_\tgt} := \left((P \biguplus Q_\src \biguplus Q_\tgt \biguplus \{\start\}), \seedAlphabet, I := \{\start\}, \Transitions^\prime\right) \]
where we define $\Transitions'$ as follows. For a fixed letter $\gamma \in \seedAlphabet$,
\begin{itemize}
    \item For a state $p \in P$, we set
    \[ \Transitions'(p, \gamma/k) := \bigvee_{(p_1, p_2, \dots, p_k, p) \in \Theta(\gamma)}(p_1, 1) \land (p_2, 2) \land \dots \land (p_k, k) \]
    This transition basically ensures that the ATTA $\Z_{R_\src, R_\tgt}$ simulates $\automaton$.
    \item For a state $q \in Q_\src$, we set
    \[ \Transitions'(q, \gamma/k) := \bigvee_{(q_1, \dots, q_m, q) \in \widetilde{\Transitions_\src}(\sourceMap(\gamma))} \bigwedge_{i \in [k]} \left[\bigwedge_{j \in \destinationFunction{\sourceMap}{i}}(q_j, i)\right] \]
    Intuitively, this transition simulates the running of $\sourceAutomatonB$ on $\sourceMap(\tau)$.
    \item Similarly, for a state $q \in Q_\tgt$, we set
    \[ \Transitions'(q, \gamma/k) := \bigvee_{(q_1, \dots, q_m, q) \in \widetilde{\Transitions_\tgt}(\targetMap(\gamma))} \bigwedge_{i \in [k]} \left[\bigwedge_{j \in \destinationFunction{\targetMap}{i}}(q_j, i)\right] \]
    Intuitively, this transition simulates the running of $\targetAutomatonB$ on $\targetMap(\tau)$.
    \item We also have an $\e$-transition $\Transitions^\prime(\start, \e/1) := \bigvee_{p \in E, q_\src \in F_\src, q_\tgt \in F_\tgt}((p, 1) \land (q_\src, 1) \land (q_\tgt, 1))$
\end{itemize}

We illustrate this with an example. Let $q \in Q_\src$, and $\gamma/5 \in \seedAlphabet$, i.e., $\gamma$ is a letter with arity 5. Also let $\sourceMap(\gamma)$ be defined as in Figure~\ref{fig:destinationFunctionExample}. Note that we have already completely computed the extended transition relation $\widetilde{\Transitions_\src}(\sourceMap(\gamma))$ given the choice function $R_\src$. In such a case we define $\Transitions'(q, \gamma/5) :=$
\[
\bigvee_{\begin{matrix} (q_1, \dots, q_7, q)\\ \in \widetilde{\Transitions_\src}(\sourceMap(\gamma)) \end{matrix}} 
\begin{matrix}
    \left[ \underbrace{(q_1, 1) \land (q_3, 1) \land (q_6, 1)}_{\destinationFunction{\sourceMap}{1} = \{1, 3, 6\}}~~ \land \underbrace{(q_4, 2)}_{\destinationFunction{\sourceMap}{2} = \{4\}} \land~~ \underbrace{(q_2, 3) \land (q_7, 3)}_{\destinationFunction{\sourceMap}{3} = \{2, 7\}} ~~\land \underbrace{(q_5, 4)}_{\destinationFunction{\sourceMap}{4} = \{5\}} \right]
\end{matrix}
\]

\begin{proof}
    For the given pair of choice functions $R_\src, R_\tgt$, let $C_{R_\src, R_\tgt}$ be a function that assigns to each $x \in \variables(\sourceMap) \cup \variables(\targetMap)$, a context that induces the relation $R_i(x)$ for $i \in \{\src, \tgt\}$ and satisfies the guard $\guardFunction$.

    ($\Leftarrow$) Let $\substitution := C_{R_\src, R_\tgt}$ such that $\substitution(\sourceMap(\tau)) \in \sourceLanguage$ and $\substitution(\targetMap(\tau)) \in \targetLanguage$ for some $\tau \in \trees{\seedAlphabet}$. This implies that for an accepting run of $\sourceAutomatonB$ on $\substitution(\sourceMap(\tau))$, the result of that run is some final state $f_\src \in F_\src$. Similarly for an accepting run of $\targetAutomatonB$ on $\substitution(\targetMap(\tau))$, the result of that run is some final state $f_\tgt \in F_\tgt$. Also, since $\tau \in \seedLanguage$, there is an accepting run of $\automaton$ on $\tau$ such that the result of that run is $e \in E$. Then we can see that the set $\{e, f_\src, f_\tgt\} \models I$, the starting formula for $\Z_{R_\src, R_\tgt}$. From here the run of $\Z_{R_\src, R_\tgt}$ simulates the run of $\automaton$ on $\tau$, and the runs of $\automatonB_i$ on $\genericMap_i(\tau)$ for $i \in \{\src, \tgt\}$ conforming to the transition $C_{R_\src, R_\tgt}$. Since all of these are successful runs, the run of $\Z_{R_\src, R_\tgt}$ we described on $\tau$ is also a successful run, and hence $\tau \in L(\Z_{R_\src, R_\tgt})$.

    ($\Rightarrow$) Conversely, let $\tau \in L(\Z_{R_\src, R_\tgt})$. This implies that there is an accepting run of $\Z_{R_\src, R_\tgt}$ on $\tau$. Given the nature of the starting formula $I$, we must have an $e \in E, f_\src \in F_\src, f_\tgt \in F_\tgt$ such that $\{e, f_\src, f_\tgt\} \models I$. Since $\Z_{R_\src, R_\tgt}$ simulates $\automaton$ on $\tau$ and $e \in E$, we see that $\tau \in L(\automaton)$. Since $\Z_{R_\src, R_\tgt}$ also simulates $\automatonB_i$ on $\phi_i(\tau)$ conforming to $C_{R_\src, R_\tgt}$ for all $i \in \{\src, tgt\}$, we see that $\substitution(\genericMap_i(\tau)) \in \genericLanguage_i$ for $i \in \{\src, \tgt\}$. This completes the proof.    
\end{proof}

We know from~\cite{tata} that the emptiness problem for ATTA is \exptc\ in the size of the automaton. Note that this computation has been performed for a specific choice of $R_\src, R_\tgt$. Let $\mathcal{R}$ be the set of all pairs of choice functions. To complete the solution of the problem, we need to perform this computation for each pair of choice functions $(R_\src, R_\tgt) \in \mathcal{R}$. We know that $|\mathcal{R}|$ is bounded above by $\prod_{x \in \variables(\sourceMap)}2^{|Q_\src|^{\deg(x)}} \cdot \prod_{y \in \variables(\targetMap)}2^{|Q_\tgt|^{\deg(y)}}$. Let $\stateSize := \max\{|Q_\src|, |Q_\tgt|\}$. This gives an upper bound $|\R| \leq \prod_{x \in \variables(\sourceMap)}2^{\stateSize^{\deg(\fullTransform)}} \cdot \prod_{y \in \variables(\targetMap)}2^{\stateSize^{\deg(\fullTransform)}}$ which eventually simplifies to the bound $|\R| \leq {\left(2^{\stateSize^{\deg(\fullTransform)}}\right)}^{|\variables(\fullTransform)|}$. We also know from Lemma~\ref{lem:constructionOfAttaZ} that the size of the ATTA $\Z_{R_\src, R_\tgt}$ is bounded above by $\bigo(\stateSize)$ and hence each nonemptiness check requires time given by $2^{\bigo(\stateSize)}$. Hence the total time required for all the nonemptiness checks is given by:
\[ \mbox{Total running time  is in }  2^{\bigo(\stateSize)} \cdot 2^{|\variables(\fullTransform)| \cdot \stateSize^{\deg(\fullTransform)}}. \]
Since $|\variables(\fullTransform)| + \stateSize \leq |\mathsf{input}|$, on simplifying the above, we get that the running time of the algorithm we have provided is $2^{\mathsf{poly}(|\mathsf{input}|)}$, where $\mathsf{poly}$ is a polynomial of degree $\deg(\fullTransform)$. This completes the proof of Theorem~\ref{thm:metaTypecheckingUpperBound}.


\section{Conclusion}\label{sec:conclusion}

We have introduced a formal model of tree transformations which is highly expressive. While the equivalence checking of tree transformations is undecidable we show that  the type-checking problem is in 3-\expt\ by giving  an automata-theoretic algorithm. 

A future direction is to look at other variants of expressive tree transformations where the (source pattern, target pattern) pair is generated by more generic versions of tree transducers and study the decidability of type-checking problem.


\bibliography{references}

\begin{thebibliography}{10}

\bibitem{AMP2022}
C.~Aiswarya, Sahil Mhaskar, and M.~Praveen.
\newblock Checking regular invariance under tightly-controlled string
  modifications.
\newblock In {\em Developments in Language Theory: 26th International
  Conference, DLT 2022, Tampa, FL, USA, May 9–13, 2022, Proceedings}, page
  57–68, Berlin, Heidelberg, 2022. Springer-Verlag.
\newblock \href {https://doi.org/10.1007/978-3-031-05578-2_4}
  {\path{doi:10.1007/978-3-031-05578-2_4}}.

\bibitem{StreamingTreeTransd}
Rajeev Alur and Loris D'antoni.
\newblock Streaming tree transducers.
\newblock {\em J. ACM}, 64(5), aug 2017.
\newblock \href {https://doi.org/10.1145/3092842} {\path{doi:10.1145/3092842}}.

\bibitem{BojanOrigin14}
Miko{\l}aj Boja{\'{n}}czyk.
\newblock Transducers with origin information.
\newblock In Javier Esparza, Pierre Fraigniaud, Thore Husfeldt, and Elias
  Koutsoupias, editors, {\em Automata, Languages, and Programming}, pages
  26--37, Berlin, Heidelberg, 2014. Springer Berlin Heidelberg.

\bibitem{tata}
Hubert Comon, Max Dauchet, R{\'e}mi Gilleron, Florent Jacquemard, Denis Lugiez,
  Christof L{\"o}ding, Sophie Tison, and Marc Tommasi.
\newblock {\em {Tree Automata Techniques and Applications}}.
\newblock 2008.
\newblock URL: \url{https://inria.hal.science/hal-03367725}.

\bibitem{DVLM2015}
Loris D'antoni, Margus Veanes, Benjamin Livshits, and David Molnar.
\newblock Fast: A transducer-based language for tree manipulation.
\newblock {\em ACM Trans. Program. Lang. Syst.}, 38(1), oct 2015.
\newblock \href {https://doi.org/10.1145/2791292} {\path{doi:10.1145/2791292}}.

\bibitem{Engelfriet75}
Joost Engelfriet.
\newblock Bottom-up and top-down tree transformations - a comparison.
\newblock {\em Mathematical Systems Theory}, 9:198--231, 06 1975.
\newblock \href {https://doi.org/10.1007/BF01704020}
  {\path{doi:10.1007/BF01704020}}.

\bibitem{MacroTreeTransd}
Joost Engelfriet and Heiko Vogler.
\newblock Macro tree transducers.
\newblock {\em Journal of Computer and System Sciences}, 31(1):71--146, 1985.
\newblock URL:
  \url{https://www.sciencedirect.com/science/article/pii/0022000085900662},
  \href {https://doi.org/10.1016/0022-0000(85)90066-2}
  {\path{doi:10.1016/0022-0000(85)90066-2}}.

\bibitem{FR1968}
Patrick~C. Fischer and Arnold~L. Rosenberg.
\newblock Multitape one-way nonwriting automata.
\newblock {\em Journal of Computer and System Sciences}, 2(1):88--101, 1968.
\newblock URL:
  \url{https://www.sciencedirect.com/science/article/pii/S0022000068800066},
  \href {https://doi.org/10.1016/S0022-0000(68)80006-6}
  {\path{doi:10.1016/S0022-0000(68)80006-6}}.

\bibitem{GMSZ2008}
Blaise Genest, Anca Muscholl, Olivier Serre, and Marc Zeitoun.
\newblock Tree pattern rewriting systems.
\newblock In Sungdeok~(Steve) Cha, Jin-Young Choi, Moonzoo Kim, Insup Lee, and
  Mahesh Viswanathan, editors, {\em Automated Technology for Verification and
  Analysis}, pages 332--346, Berlin, Heidelberg, 2008. Springer Berlin
  Heidelberg.

\bibitem{Griffiths1968}
T.~V. Griffiths.
\newblock The unsolvability of the equivalence problem for $\lambda$-free
  nondeterministic generalized machines.
\newblock {\em J. ACM}, 15(3):409–413, jul 1968.
\newblock \href {https://doi.org/10.1145/321466.321473}
  {\path{doi:10.1145/321466.321473}}.

\bibitem{Hosoya_2010}
Haruo Hosoya.
\newblock {\em Foundations of XML Processing: The Tree-Automata Approach}.
\newblock Cambridge University Press, 2010.

\bibitem{AK1998}
Armin K{\"u}hnemann.
\newblock Benefits of tree transducers for optimizing functional programs.
\newblock In Vikraman Arvind and Sundar Ramanujam, editors, {\em Foundations of
  Software Technology and Theoretical Computer Science}, pages 146--157,
  Berlin, Heidelberg, 1998. Springer Berlin Heidelberg.

\bibitem{mtchLBound}
Florin Manea and Markus~L. Schmid.
\newblock Matching patterns with variables.
\newblock In Robert Merca{\c{s}} and Daniel Reidenbach, editors, {\em
  Combinatorics on Words}, pages 1--27, Cham, 2019. Springer International
  Publishing.

\bibitem{MSV2000}
Tova Milo, Dan Suciu, and Victor Vianu.
\newblock Typechecking for xml transformers.
\newblock In {\em Proceedings of the Nineteenth ACM SIGMOD-SIGACT-SIGART
  Symposium on Principles of Database Systems}, PODS '00, page 11–22, New
  York, NY, USA, 2000. Association for Computing Machinery.
\newblock \href {https://doi.org/10.1145/335168.335171}
  {\path{doi:10.1145/335168.335171}}.

\bibitem{Nivat68}
Maurice Nivat.
\newblock Transductions des langages de {Chomsky}.
\newblock {\em Annales de l'Institut Fourier}, 18(1):339--455, 1968.
\newblock URL: \url{http://www.numdam.org/articles/10.5802/aif.287/}, \href
  {https://doi.org/10.5802/aif.287} {\path{doi:10.5802/aif.287}}.

\bibitem{DBLP:conf/stoc/OneInThreeSAT}
Thomas~J. Schaefer.
\newblock The complexity of satisfiability problems.
\newblock In Richard~J. Lipton, Walter~A. Burkhard, Walter~J. Savitch, Emily~P.
  Friedman, and Alfred~V. Aho, editors, {\em Proceedings of the 10th Annual
  {ACM} Symposium on Theory of Computing, May 1-3, 1978, San Diego, California,
  {USA}}, pages 216--226. {ACM}, 1978.
\newblock \href {https://doi.org/10.1145/800133.804350}
  {\path{doi:10.1145/800133.804350}}.

\end{thebibliography}


\appendix


\section{Comparison with \texorpdfstring{$k$}{k}-Pebble Transducers}\label{sec:kPebbleTransducerComparison}

We use the notion of \pt{k} as defined in~\cite{MSV2000}. For the sake of completeness, we include the definition here.

\begin{definition}
    A \emph{\pt{k}} is $T := (\alphabet, \alphabet', Q, s, \Delta)$, where
    \begin{itemize}
        \item $\alphabet, \alphabet'$ are ranked input and output alphabets respectively.
        \item $Q$ is a finite set of states.
        \item $s \in Q$ is the initial state.
        \item $\Delta$ is a finite set of transitions, each of which has one of the following forms:
        \begin{itemize}
            \item $(a, \vec{b}, q) \to (q', stay)$,
            \item $(a, \vec{b}, q) \to (q', downLeft)$,
            \item $(a, \vec{b}, q) \to (q', downRight)$,
            \item $(a, \vec{b}, q) \to (q', upLeft)$,
            \item $(a, \vec{b}, q) \to (q', upRight)$,
            \item $(a, \vec{b}, q) \to (q', addPebble)$,
            \item $(a, \vec{b}, q) \to (q', removePebble)$,
            \item $(a, \vec{b}, q) \to (a', output0)$,
            \item $(a, \vec{b}, q) \to (a'(q_1, q_2), output2)$,
        \end{itemize}
        where $a \in \alphabet, a' \in \alphabet', \vec{b} \in \set{0,1}^{i-1}, q, q', q_1, q_2 \in Q$.
    \end{itemize}
\end{definition}

Let $\Gamma$ be a finite alphabet. Let $\rev : \Gamma^* \to \Gamma^*$ be the function defined as follows: For each $w := a_1a_2\dots a_n \in \Gamma^*$ with $a_i \in \Gamma,~\forall i \in [n]$, we let $\rev(w) := {a_n}a_{n-1}\dots a_2a_1$. Let $\Sigma := \Gamma \uplus \set{\#}$ be a finite ranked alphabet with the ranking function $\rank : \Sigma \to \N_0$ given by
\[ \twopartfunc{\rank(a)}{2}{a \in \Gamma}{0}{a = \#} \]
We define the tree-representation-of-string function $\phi : \Gamma^* \to \trees{\alphabet}$ given by defining for $w := a_1a_2\dots a_n \in \Gamma^*$ with $a_i \in \Gamma$, $\phi(w)$ as

\[
\phi(w) := 
\vcenter{\hbox{\begin{tikzpicture}
    [level distance=3em]
    \node {$a_1$}
    child[grow=south west] { node{$\#$} }
    child[grow=south east] {
        node {$a_2$}
        child[grow=south west] { node{$\#$} }
        child[grow=south east] {
            node{$a_3$}
            child[grow=south west] { node{$\#$} }
            child[grow=south east] {
                node{$\ddots$}
                child[grow=south east] {
                    node{$a_n$}
                    child[grow=south west] { node{$\#$} }
                    child[grow=south east] { node{$\#$} }
                }
            }
        }
    };
\end{tikzpicture}}}
\]

Note that $\phi$ as defined above is an injective function. We define a function $\widetilde{\rev} : \phi[\Gamma^*] \to \phi[\Gamma^*]$ by setting $\widetilde{\rev} := \phi \circ \rev \circ \phi\inv$. By abuse of notation, we shall write $\widetilde{\rev}$ as just \rev\ when there is no need to distinguish between the two.

\begin{theorem}\cite{MSV2000}
    There exists a \pt{2} $T$ such that $\sem{T}(t) := \rev(t)$ for all $t \in \phi[\Gamma^*]$.
\end{theorem}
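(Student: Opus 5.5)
We construct an explicit 2-pebble transducer $T$ and verify by induction that on input $\phi(w)$ it outputs $\phi(\rev(w))$. The input tree $\phi(w)$ is a right comb: the spine $a_1, a_2, \dots, a_n$ runs along right children, every left child is a $\#$ leaf, and the last spine node also has a $\#$ right child. The output must be the comb whose spine reads $a_n, a_{n-1}, \dots, a_1$. The strategy is to emit the spine symbols in reverse order while walking down the input spine, using one pebble to mark how much of the input has already been emitted.

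The construction has three parts.

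First, the transducer uses pebble~1 as a marker. Initially it walks from the root down the right spine to the last $\Gamma$-node $a_n$, recognised because its right child is a $\#$ leaf. It places pebble~1 there, outputs $a_n(q_L, q_R)$ via an $output2$ transition, and spawns two branches.

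Second, the branch in state $q_L$ simply outputs $\#$ via $output0$. Because the output is a comb, every left child of the output spine is a $\#$ leaf, so this branch needs no knowledge of its position.

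Third, the branch in state $q_R$ must produce the next symbol to the left, namely the parent (along the spine) of the currently pebbled node. Pebbles follow a stack discipline, so the transducer cannot simply move pebble~1 up. Instead it does the following:
\begin{itemize}
\item it walks up to the root;
\item it drops pebble~2 and moves it down the spine until the node whose right child carries pebble~1, which is detected through the pebble-presence vector $\vec{b}$ after one $downRight$ step followed by an $upRight$ step;
\item it outputs that node's symbol $a_i$ via $output2$.
\end{itemize}
The pebble now marking the frontier is pebble~2, which lies above pebble~1 on the stack. To iterate, we use the standard trick from \cite{MSV2000}: pebble~2 is removed, and the head then goes back to pebble~1 and moves pebble~1 (now the top pebble) up by one spine step. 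A finite state records that the head should descend exactly one step from the root relative to pebble~2's former position; equivalently, moving pebble~1 directly one step upward is legal once pebble~2 has been lifted. When the pebbled node is the root $a_1$, the $q_R$ branch instead outputs $\#$, which closes the comb.

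We then verify correctness by induction on $n - i$. The invariant is: a $q_R$-branch started with pebble~1 on $a_{i+1}$ produces exactly $\phi(a_i a_{i-1} \cdots a_1)$ as the right subtree of the output. The base case is the root, where the branch outputs $\#$. The inductive step follows from the search described in the third part.

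The main obstacle is respecting the pebble stack discipline in the definition of \cite{MSV2000}. Only the top pebble may be moved, and pebbles are placed and lifted at the head's current position. The cleanest plan is to keep a single "frontier" pebble and use the second pebble purely as a temporary probe. The probe walks from the root down the spine until it lands at the parent of the frontier. Then the probe is lifted, the frontier pebble is lifted, and the frontier pebble is re-placed one step up; this is possible because the head is already standing there after an $upRight$ step. This needs only finitely many states, which establishes the theorem.
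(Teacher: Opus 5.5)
Your construction is correct in substance, but it takes a more roundabout route than the paper.

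The paper parks pebble~1 at the root for good: its first move places pebble~2, which then acts as the head. Pebble~2 runs down the right spine to the final $\#$ and then climbs back up with upward moves. At each spine node it emits $a_j(\#,\cdot)$, and it closes with a right $\#$ once $\vec{b}$ reports pebble~1, i.e.\ at the root. That is one descent and one ascent.

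You instead keep a frontier pebble and relaunch a probe from the root for every output symbol. This costs quadratically many moves and buys nothing over the direct climb. Your stated reason for the detour does not hold in the model of \cite{MSV2000}. There is no head separate from the pebbles: the current (highest-numbered) pebble \emph{is} the head and may move up as well as down. You concede as much when you move pebble~1 up after lifting pebble~2, so the frontier could simply climb the spine. The probe's only essential job is the root test: does $\vec{b}$ show pebble~1 right after pebble~2 is placed? That is exactly what the paper obtains by parking pebble~1 at the root.

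If you write your version up in that model, four things need fixing.
\begin{itemize}
\item ``Walk up to the root, then drop pebble~2'' should be just place-new-pebble, which (as I recall the definition in \cite{MSV2000}) puts the new pebble at the root. Walking would drag pebble~1, the current head, along with it.
\item ``Lift and re-place pebble~1'' should be a single upward move of pebble~1.
\item The induction should be on $i$ with base case $i=0$, not on $n-i$.
\item The input $\phi(\varepsilon)=\#$ needs its own output0 rule.
\end{itemize}

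On the positive side, your explicit invariant makes correctness checkable: a $q_R$-branch with pebble~1 on $a_{i+1}$ outputs $\phi(a_i\cdots a_1)$. The paper's ``we can observe'' gives no such check, and its table has no rule for $(a,\top,q_{out})$. As written, its climb therefore stalls when it reaches the root.
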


\begin{proof}
    Consider the \pt{2} $T$ given by $T := (\alphabet, \alphabet, Q, q_0, \Delta)$ where we have $Q := \set{q_0, q_{up}, q_{down}, q_{out}, q_{lleaf}, q_{rleaf}}$. Since there are only 2 pebbles, the bit-vector $\vec{b}$ has atmost one dimension, which is the presence or absence of the first pebble. Hence we shall denote it as $\top$ or $\bot$ respectively whenever applicable. The transition rules in $\Delta$ are as follows:
    \begin{itemize}
        \item $(a, \emptyset, q_0) \to (q_{down}, addPebble),~ \forall a \in \Gamma,$
        \item $(\#, \emptyset, q_0) \to (\#, output0),$
        \item $(a, \top, q_{down}) \to (q_{down}, downRight),~ \forall a \in \Gamma,$
        \item $(a, \bot, q_{down}) \to (q_{down}, downRight),~ \forall a \in \Gamma,$
        \item $(\#, \bot, q_{down}) \to (q_{out}, upLeft),$
        \item $(a, \bot, q_{out}) \to (a(q_{lleaf}, q_{up}), output2),~ \forall a \in \Gamma,$
        \item $(a, \bot, q_{up}) \to (q_{out}, upLeft),~ \forall a \in \Gamma,$
        \item $(a, \bot, q_{lleaf}) \to (q_{out}, downLeft),~ \forall a \in \Gamma,$
        \item $(\#, \bot, q_{out}) \to (\#, output0),$
        \item $(a, \top, q_{up}) \to (a(q_{lleaf}, q_{rleaf}), output2),~ \forall a \in \Gamma,$
        \item $(a, \top, q_{rleaf}) \to (q_{out}, downRight),~ \forall a \in \Gamma,$
    \end{itemize}
    We can observe that the above \pt{2} $T$ does reverse strings, i.e., $\sem{T}(t) = \rev(t)$ for all $t \in \phi[\Gamma^*]$.
\end{proof}

\begin{theorem}\label{thm:bimorphismCannotReverse}
    There does not exist a triple $(L, \psi_1, \psi_2)$ where $L$ is a regular tree language over ranked alphabet $\Lambda$ and $\psi_1, \psi_2 : \trees{\Lambda} \to \trees{\seedAlphabet}$ are tree homomorphisms and a subset $X \subseteq \trees{\Lambda}$ such that $\set{(\psi_1(t), \psi_2(t)) \mid t \in X} \subseteq \rev$.
\end{theorem}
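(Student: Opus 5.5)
Read literally, the inclusion is trivially satisfiable (take $X = \emptyset$). So I would first fix the intended reading: there is no bimorphism $(L, \psi_1, \psi_2)$ with $\set{(\psi_1(t), \psi_2(t)) \mid t \in L} = \rev$ on $\phi[\Gamma^*]$, with $\abs{\Gamma} \geq 2$. More robustly, I would aim to show that no $X$ whose image covers $\rev$ restricted to infinitely many words of the form $a^n b^n$ can come from a regular $L$. The plan is a pumping argument on the seed tree, mirroring the classical proof that reversal is not a rational word relation. This is the setting where Nivat's theorem~\cite{Nivat68} identifies bimorphisms with rational transductions.

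\textbf{Step 1: normalise seeds.} Fix a seed $t \in L$ with $\psi_1(t) = \phi(w)$ and $\psi_2(t) = \phi(\rev(w))$, where $w = a^n b^n$, and choose $t$ of minimal size. Nodes of $t$ whose subtree is erased by both $\psi_1$ and $\psi_2$ can be replaced by a minimal tree reaching the same state of an NBTA for $L$. This bounds the size of such junk subtrees by a constant depending only on $L$. Since the outputs are combs (a spine with $\#$-leaves), any subtree of $t$ that is copied by $\psi_i$ must have as image a $\#$-leaf or nothing. Otherwise two copies would sit in disjoint subtrees of a comb, which is impossible unless they are leaves. Hence only boundedly much spine material is produced off a single root-to-leaf path $\pi$ of $t$, and each node on $\pi$ contributes a bounded-length spine segment to each output.

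\textbf{Step 2: pump along $\pi$.} Take $n$ larger than the number of states of an NBTA for $L$ times the maximal size of $\psi_i(\gamma)$. Then $\pi$ contains two nodes with the same state whose intermediate context $D$ (with hole on $\pi$) contributes a nonempty spine block to at least one output. Pumping gives $C[D^m[t']] \in L$ for all $m$. The first output becomes $\phi(u_1 v_1^m x_1)$ and the second $\phi(u_2 v_2^m x_2)$. Here $v_1, v_2$ are the spine labels produced by $D$, and $u_i, x_i$ are fixed. The key point is that $C$ and the part below $D$ are processed identically by both homomorphisms, so the pumped block appears \emph{at the corresponding place} in both outputs: after $\psi_1(C)$'s spine prefix $u_1$ and after $\psi_2(C)$'s spine prefix $u_2$.

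\textbf{Step 3: derive the contradiction.} Requiring $u_2 v_2^m x_2 = \rev(u_1 v_1^m x_1) = \rev(x_1)\rev(v_1)^m \rev(u_1)$ for all $m$ forces $\abs{v_1} = \abs{v_2}$. Choosing the repeated pair of states near the top of $\pi$, which minimality allows by picking the first repetition, forces $\abs{u_1}, \abs{u_2}$ to be bounded. The pumped block then sits near the front of the first output (inside the $a$-region) and near the front of the second output (inside the $b$-region). This gives $v_1 \in a^+$ and $v_2 \in b^+$ with equal lengths. Pumping then changes the number of $a$'s in the first output but the number of $b$'s in the second, contradicting that the outputs are reverses of each other.

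The main obstacle is Step 1. One must justify that copying and erasing by the homomorphisms cannot let the spine of $\phi(w)$ be assembled from several branches of $t$ in an order that decouples the two outputs, so that a single path $\pi$ carries all but boundedly much of both spines. I would handle this by a careful case analysis of where the spine hole of each $\psi_i(\gamma)$ lies, using the comb shape of $\phi$-images.
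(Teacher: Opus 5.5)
Your corrected reading is the right one, and your remark about $X=\emptyset$ is correct. In fact even full coverage is possible for a suitable $X\subsetneq L$. Add a binary seed letter $g$ with $\psi_1(g)=x_1$ and $\psi_2(g)=x_2$, let both homomorphisms be the identity on the remaining letters, take $L$ to be all trees, and let $X=\set{g(\phi(w),\phi(\rev(w)))\mid w\in\Gamma^*}$. So the hypothesis your argument must use is that the image of \emph{all} of $L$ lies in $\rev$, since pumped seeds leave $X$. Your ``more robust'' version should be phrased that way.

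\textbf{The gap is Step 1, and the case analysis you propose cannot close it.} The comb shape only gives you, for each $i$ separately, a root path $\pi_i$. It consists of the seed nodes $u$ such that $\psi_i(t|_u)$ occurs in $\psi_i(t)$ and contains a $\Gamma$-letter; such an image occurs exactly once, rooted on the spine. Nothing local forces $\pi_1$ and $\pi_2$ to agree. In the example above the two spines split at $g$, yet each letter image is comb-compatible. So inspecting where the spine hole of each $\psi_i(\gamma)$ lies will never exclude divergence. It can only be excluded globally, using regularity of $L$ together with injectivity of $\rev$.

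Here is how. Let $P$ be the states of an NBTA for $L$ and $M$ a bound on $\abs{\psi_i(\gamma)}$. Let $v$ be the last common node of $\pi_1,\pi_2$, and $v_1$ the next node of $\pi_1$, if any. Then $t|_{v_1}$ contributes to $\psi_2(t)$ only $\#$-leaves or nothing. Suppose it produced more than $\abs{P}M$ spine letters of $\psi_1(t)$. Pump up between two nodes of $\pi_1$ at or below $v_1$ that carry the same state, the upper of which contributes spine letters. The new seed is in $L$ and its first image is strictly larger. Its second image is unchanged: if $\psi_2(t|_{v_1})=\#$, then $\psi_2$ of the pumped-up subtree is still $\#$, and everything outside $t|_{v_1}$ is untouched. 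Two pairs of $\rev$ with equal second components and different first components contradict injectivity. The symmetric argument handles $\pi_2$.

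Hence all but $K:=\abs{P}M$ letters of each spine are produced on the common prefix of $\pi_1$ and $\pi_2$. On that prefix, every pumping context $D$ contains its hole exactly once under both $\psi_i$, because two copies of a non-$\#$ subtree cannot sit in a comb. So pumping really yields $u_iv_i^mx_i$, and with $n>K+(\abs{P}+1)M$ your Steps 2 and 3 go through.

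Completed this way, your route is genuinely different from the paper's. The paper asserts that $\psi_1,\psi_2$ must be delabelings, then appeals to Nivat's theorem and to the inability of a finite-state bottom-up transducer to reverse. Your pumping argument needs no delabeling claim. That claim is not forced anyway: for example, $\psi_1(\gamma)=\psi_2(\gamma)=a(\#,a(\#,x_1))$ is compatible with the hypothesis.
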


\begin{proof}
We present a proof by contradiction. Let us assume to the contrary that there exists a triple $(L, \psi_1, \psi_2)$ as specified above such that for a subset $X \subseteq \trees{\Lambda}$, we have that $\set{(\psi_1(t), \psi_2(t)) \mid t \in X} \subseteq \rev$. Then we see that when $\psi_1, \psi_2$ as homomorphisms are restricted to the set $\Lambda$, $\psi_1(x), \psi_2(x)$ would necessarily have to be trees of height 1, for all $x \in \Lambda$. In other words, we see that $\psi_1, \psi_2$ are \emph{delabelings}. Hence, from Nivat's Theorem for tree transductions as cited in~\cite{tata}, we know that the bimorphism $(L, \psi_1, \psi_2)$ is equivalent to a bottom-up tree transducer over the alphabet. Since this transducer is by default a finite-state transducer, we see that it cannot realise $\rev$ as a transduced function, since the height of the tree to be reversed is unbounded. This contradicts our assumption about the existence of a bimorphism equivalent to $\rev$, and completes the proof.
\end{proof}

\begin{theorem}
    There does not exist an action $\fullTransform := (S, \psi_\src, \psi_\tgt, \guardFunction)$ such that $\fullTransform = \rev$ on $\phi[\Gamma^*]$.
\end{theorem}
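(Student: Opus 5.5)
We argue by contradiction. Suppose $\fullTransform := (S, \psi_\src, \psi_\tgt, \guardFunction)$ satisfies $\fullTransform(t) = \{\rev(t)\}$ for all $t \in \phi[\Gamma^*]$. The approach is to show that variables cannot carry any unboundedly large piece of the input to the output. Then the substitution instance $\substitution(\psi_\tgt(s))$ is, up to bounded material, a homomorphic image of the seed $s$. This puts us in the setting of Theorem~\ref{thm:bimorphismCannotReverse}, or rather of the pumping argument behind it.

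\textbf{Step 1: shape of substituted contexts.} Fix $w = a_1 \cdots a_n$ and a seed $s \in S$ and substitution $\substitution$ witnessing $(\phi(w), \phi(\rev(w))) \in \relT(\atomicTransform_s)$. Every variable $x$ occurring in $\psi_\tgt(s)$ must also occur in $\psi_\src(s)$, unless $\substitution(x)$ can be changed freely. If it could be changed freely, $\fullTransform(\phi(w))$ would contain several distinct outputs, or would leave $\phi[\Gamma^*]$. So I would restrict to this situation, shrinking $\guardFunction(x)$ to a singleton by functionality.

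A context substituted into $\phi(w)$ for $x$ is a ``segment'' $a_i(\#, a_{i+1}(\#, \dots))$ with at most one hole at the spine, or it is a subtree $\#$. When it is copied into $\phi(\rev(w))$, the same segment must appear there. However, $\rev$ reverses the spine order. A segment of length at least $2$, with letters $a_i a_{i+1}$ in that order, therefore appears in the output only if the same factor also occurs reversed in $w$.

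\textbf{Step 2: choose a witness.} Choose $w$ over two letters $a,b$ such that no factor of length $2$ other than $aa$ or $bb$ occurs both forward and reversed. A simpler choice is $w = a^m b^m$ with $m$ larger than the total size of $\psi_\tgt(\gamma)$ over $\gamma \in \seedAlphabet$. Then variables shared between source and target can only be instantiated by segments lying inside $a^m$ or inside $b^m$. If this fails, I would instead use a de Bruijn-free word whose reversal shares no factors of length $2$ with it, such as $(aab)^m$. The key consequence is that the output letter at the middle position of $\phi(\rev(w))$ comes from a bounded-size piece determined by the seed node generating it.

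\textbf{Step 3: contradiction via pumping.} Since $S$ is regular, pump the seed $s$ along the path generating the output spine to obtain seeds $s'$. The source patterns $\psi_\src(s')$ then match inputs of the form $u v^k z$. The target, however, grows at the corresponding position in the same order, namely $u' v'^k z'$ with the same hole direction. Comparing with $\rev(u v^k z) = \rev(z)\,\rev(v)^k\,\rev(u)$, the synchronisation of positions in the two sides forces $\rev(u)$ to be bounded. That is false for suitable large $w$.

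The main obstacle is Step 1 together with this synchronisation. Variables of arity $1$ can transport unbounded segments while keeping their internal order, so I must show that such segments never bridge the reversal. I expect to handle this with the palindrome-free witness word from Step 2. That word forces every shared segment to be a unary power $a^j$, whose copy can equivalently be generated by the seed itself. This reduces the problem to a pure bimorphism, to which Theorem~\ref{thm:bimorphismCannotReverse} applies.
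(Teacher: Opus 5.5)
\paragraph{Where the argument breaks.} Your outline is the paper's own: discard target-only variables by functionality, show that shared variables carry only trivial material, then fall back on Theorem~\ref{thm:bimorphismCannotReverse}. The paper simply asserts the middle step for an arbitrary $t_0$, and you are right that a witness must be chosen. But the middle step fails for the witnesses you choose, and the planned reduction is false.

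Over $\{a,b\}$, the words in which $ab$ and $ba$ do not both occur are exactly those in $a^*b^*\cup b^*a^*$. On this family reversal \emph{is} realised by a single atomic transformation. Take source pattern $x(y(\#))$ and target pattern $y(x(\#))$, with the unary variables $x,y$ guarded by the comb contexts spelling words in $a^*$ and in $b^*$ respectively. Then $\phi(a^ib^j)$ has a unique match, sending $x,y$ to the combs of $a^i,b^j$, and its only image is $\phi(b^ja^i)$.

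Your fallback $(aab)^m$ does not even have the stated property, since $aa$, $ab$ and $ba$ all occur in both it and its reversal. It is also reversed by the single pair $\bigl(x(\phi(aab)),\ b(\#,x(\phi(aa)))\bigr)$ with $x$ guarded by $(aab)^*$-combs, because $\rev((aab)^m)=b\,(aab)^{m-1}aa$.

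A hypothetical $\fullTransform$ may use such a one-node seed for your witness. Then there is nothing to pump, and no contradiction can be extracted from that match. In particular, the claim that a shared unary block ``can equivalently be generated by the seed itself'' is wrong. Relocating an unbounded block is exactly what a variable can do and a bimorphism cannot. The same example also refutes the paper's assertion that shared variables must be single letters, so you cannot lean on it.

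\paragraph{Step~3 has an independent gap.} Suppose you instead choose a witness that really bounds every shared value. For example, take $w=p^m$ with $p=aababb$. This $p$ is primitive and not conjugate to $\rev(p)$, so no factor of $w$ of length at least $6$ has its reversal in $w$. Even so, a single periodic family is reversed by a plain bimorphism: use one seed letter mapped to the comb context of $p$ on the source side and of $\rev(p)$ on the target side. Every single-site pumping of that seed stays consistent with $\rev$. So ``synchronisation forces $\rev(u)$ to be bounded'' does not hold, and Theorem~\ref{thm:bimorphismCannotReverse} gives nothing once you restrict to a witness family.

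Moreover, source-only variables can still absorb unbounded parts of $\phi(w)$. So the pumped source is not of the form $uv^kz$ with $v$ produced by the seed, and need not even be a comb.

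\paragraph{What is missing.} Three ingredients are needed.
\begin{enumerate}
\item A non-periodic witness family with bounded shared values, for instance $p^mq^m$ for two suitable blocks of this kind.
\item Control of source-only variables. One way: fix $\substitution$, note that $\substitution\circ\psi_\src$ is again a tree homomorphism, and choose pump sites with equal states both of the seed automaton and of a deterministic comb-automaton pulled back along $\substitution\circ\psi_\src$.
\item A contradiction drawn from two nested pump sites on the target-generating seed path. Their target contributions occur in one order, functionality forces their source contributions into the opposite order, and the nesting forbids this.
\end{enumerate}
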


\begin{proof}
    We prove the theorem by contradiction. If possible, let $\fullTransform$ be such that $\fullTransform = \rev$ on $\phi[\Gamma^*]$. We know that by definition, $\fullTransform(t) := \bigcup_{s \in S} \atomicTransform_s(t)$. Let $t_0 \in \phi[{(a+b)}^*]$. Since $\fullTransform(t_0) = \rev(t_0) \neq \emptyset$, let $s_0 \in S$ be such that $\psi_\src(s_0)$ matches $t_0$. Let $\substitution$ be a substitution witnessing the match, i.e., $\substitution(\psi_\src(s_0)) = t_0$.
    
    If there no common variable appearing in $\psi_\src(s_0), \psi_\tgt(s_0)$, then we can substitute multiple contexts from the guard language for a variable appearing in the target pattern $\psi_\tgt(s_0)$ for the same context substituted in place of a variable in the source pattern $\psi_\src(s_0)$. This would contradict the fact that $\rev$ is a bijection.
    
    Let $x$ be a variable in $\psi_\src(s_0)$ that also appears in $\psi_\tgt(s_0)$. Then we see that $\substitution(x) \in \Sigma$. If $\substitution(x) \in \contexts{\Sigma}{}$ has height at least 2, then since $x$ also appears in $\psi_\tgt(s_0)$, $\substitution(\psi_\tgt(s_0))$ can never be $\rev(t_0)$. This implies that $\substitution$ is a letter-to-letter substitution, i.e., $\psi_\src(s_0)$ is a relabeling of $t_0$, and consequently, $\psi_\tgt(s_0)$ is the `reverse' of $\psi_\src(s_0)$. Since this is true for every element in $\phi[{(a+b)}^*]$, we must have that the bimorphism $(S, \psi_\src, \psi_\tgt)$ is such that $\set{(\psi_\src(s), \psi_\tgt(s)) \mid s \in S} \subseteq \rev$, which contradicts Theorem~\ref{thm:bimorphismCannotReverse}.
\end{proof}

\end{document}